\theoremstyle{plain}
\newtheorem{theorem}{Theorem}[section]
\newtheorem{proposition}[theorem]{Proposition}
\newtheorem{lemma}[theorem]{Lemma}
\newtheorem{corollary}[theorem]{Corollary}
\theoremstyle{definition}
\newtheorem{definition}[theorem]{Definition}
\newtheorem{assumption}[theorem]{Assumption}
\theoremstyle{remark}
\newtheorem{remark}[theorem]{Remark}
\def\D{{\mathcal{D}}}
\def\F{{\mathcal{F}}}
\def\X{{\mathcal{X}}}
\def\P{{\mathcal{P}}}
\def\W{{\mathcal{W}}}
\def\T{{\mathcal{T}}}
\def\B{{\mathcal{B}}}
\def\E{{\mathbb{E}}}
\def\R{{\mathbb{R}}}
\def\Prob{{\mathbb{P}}}
\def\var{{\rm{Var}}}
\def\poi{{\rm{poi}}}
\def\victim{{\rm{victim}}}
\def\target{{\rm{target}}}
\def\ReLU{{\rm{ReLU}}}
\def\red{\color{red}}
\def\blue{\color{blue}}
\title{Provable Watermarking for Data Poisoning Attacks}
\author{
    Yifan Zhu\textsuperscript{\rm 1, 2},
    Lijia Yu\textsuperscript{\rm 3}\thanks{Corresponding author}\,\,\,,
    Xiao-Shan Gao\textsuperscript{\rm 1, 2}$^{\thefootnote}$ \\
\textsuperscript{\rm 1}State Key Laboratory of Mathematical Sciences,\\   Academy of Mathematics and Systems Science,
Chinese Academy of Sciences\\
\textsuperscript{\rm 2}University of Chinese Academy of Sciences\\
\textsuperscript{\rm 3}Institute of AI for Industries, Nanjing, China\\
\texttt{
zhuyifan@amss.ac.cn,
ljyu@iaii.ac.cn,
xgao@mmrc.iss.ac.cn} 
}
\begin{document}

\maketitle

\begin{abstract}
In recent years, data poisoning attacks have been increasingly designed to appear harmless and even beneficial, often with the intention of verifying dataset ownership or safeguarding private data from unauthorized use. However, these developments have the potential to cause misunderstandings and conflicts, as data poisoning has traditionally been regarded as a security threat to machine learning systems. To address this issue, it is imperative for harmless poisoning generators to claim ownership of their generated datasets, enabling users to identify potential poisoning to prevent misuse. In this paper, we propose the deployment of watermarking schemes as a solution to this challenge. We introduce two provable and practical watermarking approaches for data poisoning: {\em post-poisoning watermarking} and {\em poisoning-concurrent watermarking}. Our analyses demonstrate that when the watermarking length is $\Theta(\sqrt{d}/\epsilon_w)$ for post-poisoning watermarking, and falls within the range of $\Theta(1/\epsilon_w^2)$ to $O(\sqrt{d}/\epsilon_p)$ for poisoning-concurrent watermarking, the watermarked poisoning dataset provably ensures both watermarking detectability and poisoning utility, certifying the practicality of watermarking under data poisoning attacks. We validate our theoretical findings through experiments on several attacks, models, and datasets.
\end{abstract}

\section{Introduction}
%

Data poisoning~\citep{biggio2012poisoning,koh2017understanding, shafahi2018poison} is a well-established security concern for modern ML systems. Its significance has become increasingly pronounced in the era of large-scale models, where many models are trained on web-crawl or synthetic data without rigorous selection~\citep{raffel2020exploring, brown2020language, touvron2023llama}.
There are two representative data poisoning attacks, {\em backdoor attacks}~\citep{chen2017targeted, souri2022sleeper} and {\em availability attacks}~\citep{huang2020unlearnable, fowl2021adversarial}. Backdoor attacks involve creating poisoned datasets that cause models trained on them to predict the specific targets when a particular trigger is injected into test instances. Availability attacks aim to compromise model generalization by ensuring that models trained on poisoned datasets have low test accuracy. Deploying models on backdoor and availability attacked datasets poses severe security risks. For instance, in autonomous driving systems, triggered road signs created by backdoor attacks could be misclassified by object detectors, leading to potentially catastrophic accidents~\citep{gu2019badnets, han2022physical}. Availability attacks directly undermine model utility, rendering AI-based systems nonfunctional~\citep{biggio2018wild}.

However, interestingly, things are always two-faced. 
Modern data poisoning attacks are increasingly being designed to be harmless and purposeful. For example, backdoor attacks have been employed for black-box dataset ownership verification~\citep{li2022untargeted, li2023black}, availability attacks have been utilized to prevent the unauthorized use of data~\citep{huang2020unlearnable, fu2022robust}. More recently, methods like NightShade~\citep{nightshade2023} and Glaze~\citep{glaze2023} have been developed to protect artists' intellectual property from generative AI models. These advancements illustrate the promising potential of "data poisoning for good," transforming data poisoning attacks—traditionally viewed as harmful—into tools that can benefit society.
Nevertheless, unintended consequences may arise. An innocent, authorized user might inadvertently use poisoned data, leading to potential misunderstandings and conflicts.
To mitigate such risks, the poisoning generators must transparently disclose the presence of potential poisoning to their intended users.
For example, when an artist distributes his works to a copyright protection system, the system (poisoner) not only aims to prevent unauthorized use but also bears the responsibility of informing clients and authorized users if the data has been perturbed. Such transparency is essential to ensure trust and avoid unintended harm in these beneficial applications of data poisoning.

To address the challenges and ensure the transparency of poisoned datasets, a direct approach is to design detection methods capable of identifying potential poisoning. While many studies have focused on detecting backdoor and availability attacks~\citep{chen2018detecting, du2019robust, dong2021black, zhu2024detection, yu2024unlearnable}, these detection methods vary significantly across different types of attacks, making it challenging to unify as a single, cohesive framework for distribution to authorized users. 
Additionally, existing detection methods often rely on heuristic training algorithms, lacking a provable mechanism for claiming poisoning. This limitation can lead to disputes if a poisoned dataset is inadvertently misused, as the absence of a clear, verifiable claim undermines accountability.
To overcome these challenges, we explore the use of watermarking~\citep{boenisch2021systematic, abdelnabi2021adversarial, kirchenbauer2023watermark},  a widely adopted approach for copyright protection and the detection of AI-generated content, which presents a promising solution for poisoners to provably declare the existence of poisoning, thereby enhancing transparency and minimizing the risk of disputes.

In this paper, we propose two provable and practical watermarking approaches for data poisoning: {\em post-poisoning watermarking} and {\em poisoning-concurrent watermarking}. The former addresses scenarios where the poisoning generators require a third-party entity to create watermarks for their poisoned datasets, while the latter focuses on cases where the poisoning generators craft watermarks themselves.
%
In Section \ref{sec-wm-sample-wise}, we demonstrate that when watermarking is sample-wise for each data, discernment of poisoned data with high probability is achievable if a specific key is available when the required watermarking length is $\Omega(\sqrt{d}/\epsilon_w)$ and $\Omega(1/\epsilon_w^2)$ for post-poisoning and poisoning-concurrent watermarking respectively ($d$ is the data dimension, $\epsilon_w$ is the watermarking budget).
However, the sample-wise approach necessitates $N$ distinct watermarks and keys for a dataset with $N$ samples, which can be impractical for large datasets. To address this limitation, we consider a more meaningful scenario where a single watermark and key apply to all data instances. In Section \ref{sec-wm-universal-wise}, 
recognizing that reliance on the sample size $N$ is not ideal for universal watermarking, we extend our analysis to watermarking effective on most samples and then generalizes to the whole distribution with high probability. Specifically, we prove that when the post-poisoning and poisoning-concurrent watermarking lengths are $\Theta(\sqrt{d}/\epsilon_w)$ and $\Theta(1/\epsilon_w^2)$ respectively, the majority of poisoned data can be effectively identified. Moreover, if the sample size satisfies $N=\Omega(d)$, these results can be generalized to the entire data distribution.

Beyond demonstrating the effectiveness of watermarking, in Section \ref{sec-poi}, we further show that the injected watermarks have minimal impact on the poisoning. Specifically, for post-poisoning watermarking, when the data dimension $d$ and sample size $N$ are large, the generalization gap between the original poisoned distribution and the watermarked poisoned dataset is bounded by negligible terms. For poisoning-concurrent watermarking, achieving a small generalization gap requires an additional condition: the watermarking length should satisfy $O(\sqrt{d}/\epsilon_p)$, where $\epsilon_p$ is the poisoning budget.

Our theoretical analyses confirm that the effectiveness of watermarked data poisoning is maintained under specific watermarking lengths. For post-poisoning watermarking, both watermarking and poisoning remain effective when the length is $\Theta(\sqrt{d}/\epsilon_w)$. For the poisoning-concurrent watermarking, the effectiveness is certified when the length falls within the range of $\Theta(1/\epsilon_w^2)$ to $O(\sqrt{d}/\epsilon_p)$.
Consequently, if the poisoning generator relies on a third-party entity for watermarking, using a larger length is advantageous. In comparison, if the generator directly embeds the watermark into their poisoned dataset, a moderate length is more practical.
In Section \ref{exp}, we evaluate several existing backdoor and availability attacks to empirically validate our theoretical findings.


\section{Related Work}
\textbf{Data Poisoning.}
Data poisoning attacks modify the training data within a small perturbation budget, aiming to elicit unusual behaviors for models trained on the poisoned dataset.
One prominent type of data poisoning is backdoor attacks~\citep{chen2017targeted, gu2019badnets, turner2018clean, turner2019label, zhong2020backdoor,liu2020reflection, ning2021invisible, souri2022sleeper, luo2022enhancing, zeng2023narcissus, yu2024generalization}. These attacks inject specific patterns into the training data, causing the trained model to behave anomalously when test instances contain such patterns. Other works~\citep{li2022untargeted, li2023black} have utilized backdoor attacks to achieve dataset ownership verification.
Another category is availability attack, also referred to as indiscriminate attacks~\citep{biggio2012poisoning, munoz2017towards, feng2019learning, fowl2021adversarial, koh2022stronger, lu2022indiscriminate, yu2022availability}. These attacks aim to degrade the model's overall test accuracy. Recently, unlearnable examples~\citep{huang2020unlearnable, fu2022robust, he2022indiscriminate, sandoval2022autoregressive, chen2022self, ren2022transferable, zhang2023unlearnable, zhu2024toward, wang2024efficient} as the case of imperceptible availability attacks, have been designed to protect data from illegal use by unauthorized trainers. 
Further data poisoning schemes include targeted attacks~\citep{koh2017understanding, shafahi2018poison, guo2020practical, geiping2020witches, aghakhani2021bullseye}, which cause models to malfunction on some specific data.
In this paper, we mainly focus on imperceptible clean-label backdoor attacks and availability attacks, as they are more practical in real-world scenarios.

\textbf{Watermarking.} 
Watermarking involves embedding special signals into training data or models to enhance copyright protection and identify data ownership~\citep{nikolaidis1998robust, al2007combined, kang2010efficient, zhu2018hidden, adi2018turning, sinhal2020real, wang2021riga}.
People~\citep{li2022untargeted, li2023black} introduced backdoor attacks as the dataset watermark for data verification, while \citep{guo2024domain} proposed the domain watermark with harmless verification.
Recently, watermarking of large language models has gained significant attention for AI-generated text detection~\citep{kirchenbauer2023watermark, hu2023unbiased, kuditipudi2023robust, kirchenbauer2023reliability, zhao2023provable, christ2024undetectable}.
Watermarking has also been investigated for generative image models~\citep{wen2023treerings, zhao2023provable, gunn2024undetectable}.
This paper focuses on watermarking for poisoning attacks. We provide two provable, simple, and practical watermarking schemes: post-poisoning and poisoning-concurrent watermarking. To the best of our knowledge, this is the first work to leverage watermarking schemes in the context of data poisoning attacks.

\vspace{-1pt}
\section{Preliminaries}
\vspace{-1pt}
\subsection{Data Poisoning}
\vspace{-1pt}
We assume the data always lies in $[0,1]^d$. To ensure consistency across different criteria, we focus on imperceptible clean-label data poisoning attacks, which are more practical in real-world applications.
Specifically, we denote the attack as a mapping $\delta^p: [0,1]^d \to [-\epsilon_p, \epsilon_p]^d$. For each data $x$, the attack $\delta^p$ perturbs the data to produce a modified version $x'=x+\delta^p(x)$, while ensuring $\|\delta^p(x)\|_{\infty}\leq \epsilon_p$ to preserve imperceptibility. For simplicity, we denote $\delta^p(x)$ as $\delta_x^p$. 
 
\textbf{Goal.}
The poisoning objective risk is defined as $\mathcal{R}^{\poi}(\D^{\victim}, \F)$, where $\D^{\victim}$ represents the victim distribution.
The goal of data poisoning is to construct a poisoned distribution $\D'$, such that if the risk $\mathcal{R}(\D',\F)=\E_{(x,y)\sim\D'}\mathcal{L}\left(\F(x),y\right)$ is small,
the network $\F$ achieves a small poisoning objective risk $\mathcal{R}^{\poi}(\D^{\victim}, \F)$.
In other words, when $\F$ has effectively learned the poison features of $\D'$, it is expected to exhibit specific properties aligned with the objectives of data poisoning.
For example, in availability attacks,  $\D^{\victim}=\D$, the objective risk $\mathcal{R}^{\poi}(\D^{\victim}, \F_{S'})=\E_{(x,y)\sim\D} \left[- \mathcal{L}(\F(x), y)\right]$, where the goal is to obtain a network $\F$ with high loss on $\D$, thereby degrading its generalization performance.
In backdoor attacks, $\D^{\victim}=\D\oplus T$, $\mathcal{R}^{\poi}(\D^{\victim}, \F_{S'})=\E_{(x,y)\sim\D} \left[\mathcal{L}(\F(x\oplus T), y^{\target})\right]$, where $T$ is the trigger injected during inference, $y^{\target}$ is the targeted label. The goal of backdoor attacks is to ensure that any data $x$ with trigger $T$ be classified as $y^{\target}$.



\vspace{-2pt}
\subsection{Watermarking}
\textbf{Watermark and key.} 
The goal of watermarking on data poisoning attacks is to ensure that verified users are aware of whether the given data has been poisoned, to prevent potential misunderstandings when data creators use data poisoning attacks to achieve specific objectives. e.g, crafting unlearnable examples to deter unauthorized use of data.
In this paper, we mainly focus on dataset watermarking~\cite{li2023black}, where watermarks are embedded in datasets for verification.
Specifically, similar to the data poisoning attack $\delta^p$, we denote a watermarking as a mapping $\delta^w:[0,1]^d \to [-\epsilon_w, \epsilon_w]^q$, and use $\delta_x^w$ to represent $\delta^w(x)$ for simplicity, where $q\leq d$ is the watermarking length, and the watermarking dimension indices are $\W=\{d_1, d_2, \cdots, d_q\}\subset[d]$.
%
When a dataset is watermarked, authorized users are provided with a corresponding key to detect whether the data contains watermarks. In this paper, we assume that the key $\zeta$ is a $d$-dimensional vector. The watermarking detector uses a simple mechanism, computing the inner product $\zeta^Tx$ to determine whether $x$ has been watermarked.


\textbf{Post-poisoning watermarking.}
In this scenario,  a third-party entity serves as the watermark generator, crafting watermarks for a given poisoned dataset. The goal is to enable authorized detectors to identify potential poisoned data. 
Denote the poison and the watermark as $\delta_x^p$ and $\delta_x^w$ respectively, where $\|\delta_x^w\|_{\infty}\leq\epsilon_w$, $\|\delta_x^p\|_{\infty}\leq\epsilon_p$. Both watermark $\delta_x^w$ and poison $\delta_x^p$ rely on data $x$, and the overall perturbation is $\delta_x=\delta_x^p+\delta_x^w$. 
For simplicity, we denote the perturbation for data $x_i$ as $\delta_i=\delta_{x_i}$.

\textbf{Poisoning-concurrent watermarking.}
In this scenario, the watermark generator also acts as the poison generator, simultaneously crafting both watermarks and poisons. The objective is to achieve the goals of data poisoning while ensuring authorized detectors can identify the poisoned data.
Since the watermark generator can control over the poison dimensions, we assume the generator separates the dimensions used for watermarking and poisoning. Specifically, the dimensions for poisoning are indexed by $\P=[d]\textbackslash\W.$ Other notations remain consistent with those at post-poisoning watermarking.

To make notations clearer, we provide a symbol table in Appendix \ref{app:symbol}.

\subsection{A Practical Threat Model}
Any copyright owner can deploy our watermarking when releasing their original datasets to a third party (e.g., AI training platforms, academic institutions, and copyright certification systems). To make the threat model more concrete, we provide a detailed deployment scenario below.

A company (called Alice) that collects a large proprietary dataset for autonomous driving research (e.g., dash cam video frames). She wants to open source a part of her dataset to promote innovation for the community (e.g., Non-profit research organization), but also wants to prevent unlicensed users from training a machine learning model on it successfully to protect her intellectual property.
To achieve the above goals, Alice runs our poisoning + watermarking algorithm on every instance of her dataset, publishing the perturbed (i.e., protected) dataset which is unlearnable by standard models and obtains a secret, key-dependent watermark signal. She publishes this on her GitHub under a permissive license, accompanied by a SHA256 hash so any recipient can verify integrity.

A research lab (called Bob) registers on Alice’s portal and agrees to a standard agreement for legal use of the dataset. After approval by Alice, Bob receives a secret key (e.g., a 128 bit seed) provided via Alice’s portal’s secure HTTPS channel. Furthermore, Bob also gains a pipeline (e.g., Python pre-processing package) from Alice such that he can run the watermark detection to verify his identity and ensure that there is no file corruption. After the verification, Bob can run an algorithm designed by Alice (e.g., directly adding inverse unlearnable noise for each data) to remove the unlearnable poisons. If the pipeline receives the wrong key or a tampered file, the detection fails and the poisons cannot be removed to ensure the unlearnability.

For a malicious user (called Chad), first, Chad can download the same public poisoned and watermarked dataset, but cannot train a good model on it because the dataset is unlearnable. If Chad tries to remove or tamper with watermarks and unlearnable poisons without knowing the secret key, detection will fail.

For key management, Alice can rotate keys per month and publish on her portal only to approved accounts (i.e., trusted users). Alice can also add a HMAC scheme to prevent potential forgery risks. Specifically, Alice can rotate keys per month and publish on her portal only to approved accounts (i.e., trusted users). Alice can also add a HMAC scheme to prevent potential forgery risks. Specifically, we can separate keys into generation key $k_{gen}$ and authentication key $k_{auth}$, where $k_{gen}$ is completely the same as our paper and correlates with injected watermarks $w_i$ for every data $x_i$. For each perturbed $ \hat{x_i}$ with $w_i$, we can compute an additional tag $t_i$ by HMAC under $k_{auth}$, i.e., $t_i = {\rm HMAC}_{k_{auth}}(id_i, \hat{x_i})$, where $id_i$ is a unique identifier for the image $x_i$ (e.g., index). After that, we store the ($id_i, t_i$) pair (e.g., through a sidecar JSON) for later detection. In watermarking detection, beyond traditional detection using $k_{gen}$ and $ \hat{x_i}$, we also verify the tag with $t_i$ and $t_i = {\rm HMAC}_{k_{auth}}(id_i, \hat{x_i})$ to avoid potential forgery attacks. In this case, even if the generation key $k_{gen}$ leaks, an attacker cannot forge a new valid $(x_i, t_i)$ pair as they lack the authentication key $k_{auth}$. We can keep $k_{auth}$ in a secure enclave and rotate it independently with $k_{gen}$ to enhance the security.


%
%
%

\vspace{-1pt}
\section{Soundness of Watermarking}
\label{sec-wm}

In this section, we provide theoretical guarantees for the conditions under which watermarking can effectively differentiate between poisoned and benign data. We  begin by examining a specific version where the watermarking is sample-wise. In this case, the injected watermark $\delta_x^w$ relies on $x$, meaning that the watermark generator can assign a unique watermark to each data.
\vspace{-1pt}
\subsection{Sample-wise Version}
\label{sec-wm-sample-wise}
\vspace{-1pt}


We first analyze the sample-wise version of post-poisoning watermarking.  Proofs of theorems in this subsection are provided in Appendix \ref{sec-wm-proof-1}.

\begin{theorem}[Sample-wise, post-poisoning watermarking]
\label{sample-wise-wm-pre-key}
For any data point $x$ sampled from $\D_{\X}$ and their corresponding poison be $\delta_x^p$, there exists a distribution $\Xi$ defined in $\R^d$ such that we can sample the key $\zeta\sim\Xi$ satisfying that for any $\omega\in(0,1)$, there are:
\vspace{-3pt}

{\rm (1):} $\Prob_{x\sim \D_{\X}, \zeta\sim\Xi}\left(\zeta^Tx<  \sqrt{\frac{d}{2}\log{\frac{1}{\omega}}} \right)>1-w$;
{\rm (2):} we can craft the watermark  $\delta_x^w$ based on $\zeta$ such that  $\Prob_{x\sim \D_{\X}, \zeta\sim\Xi}\left(\zeta^T(x+\delta_x)> q\epsilon_w-\sqrt{\frac{d}{2}\log{\frac{1}{\omega}}} \right)>1-w$. Hence, when $q> \frac{1}{\epsilon_w}\sqrt{2d\log{\frac{1}{\omega}}}$, it holds that $\Prob_{x_1,x_2\sim\D_{\X}, \zeta\sim\Xi}\left(\zeta^T(x_1+\delta_1)>\zeta^Tx_2\right)> 1-2\omega$.
\end{theorem}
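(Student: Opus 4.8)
The plan is to take $\Xi$ to be a product distribution on $\R^d$ with independent, symmetric, bounded coordinates — concretely a suitably scaled Rademacher vector (uniform on $\{-1,+1\}^d$ up to scaling) — drawn independently of the data. The whole argument then rests on one observation: for \emph{any} fixed vector $u\in\R^d$ with bounded coordinates, $\zeta^T u=\sum_{i=1}^d \zeta_i u_i$ is a sum of independent, zero-mean, bounded random variables, so Hoeffding's inequality yields a two-sided sub-Gaussian tail with variance proxy $O(d)$. Claim (1) is then immediate: fixing $x\sim\D_{\X}$, since $x\in[0,1]^d$ we get $\Prob_{\zeta}\big(\zeta^T x\ge \sqrt{\tfrac d2\log\tfrac1\omega}\big)\le\omega$ for the right scaling of $\Xi$, and since this holds for every fixed $x$ it also holds after integrating over $x$, which is exactly (1).

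For claim (2), I would first note that in the post-poisoning model the poison $\delta_x^p$ is produced \emph{before}, hence independently of, the key $\zeta$, so once $x$ is fixed the vector $x+\delta_x^p$ is fixed and still has bounded coordinates (by $1+\epsilon_p$, or one re-projects into $[0,1]^d$). Given a realization of $\zeta$, I craft the watermark by setting $(\delta_x^w)_{d_j}=\epsilon_w\,\mathrm{sign}(\zeta_{d_j})$ on the $q$ watermarking coordinates $\W$ and $0$ elsewhere; this obeys $\|\delta_x^w\|_\infty\le\epsilon_w$ and makes $\zeta^T\delta_x^w=q\epsilon_w$ \emph{deterministically}. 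Hence $\zeta^T(x+\delta_x)=\zeta^T(x+\delta_x^p)+q\epsilon_w$, and applying Hoeffding to the zero-mean bounded sum $\zeta^T(x+\delta_x^p)$ gives $\Prob_{\zeta}\big(\zeta^T(x+\delta_x^p)\le-\sqrt{\tfrac d2\log\tfrac1\omega}\big)\le\omega$; integrating over $x$ yields (2).

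To finish, draw $x_1,x_2\sim\D_{\X}$ independently and a single $\zeta\sim\Xi$. A union bound over the two failure events — they share $\zeta$, but $\Prob(A^c\cup B^c)\le\Prob(A^c)+\Prob(B^c)$ still applies — shows that with probability $>1-2\omega$ we have both $\zeta^T(x_1+\delta_1)>q\epsilon_w-\sqrt{\tfrac d2\log\tfrac1\omega}$ and $\zeta^T x_2<\sqrt{\tfrac d2\log\tfrac1\omega}$. When $q\epsilon_w>2\sqrt{\tfrac d2\log\tfrac1\omega}=\sqrt{2d\log\tfrac1\omega}$, i.e.\ $q>\tfrac1{\epsilon_w}\sqrt{2d\log\tfrac1\omega}$, the first lower bound exceeds $\sqrt{\tfrac d2\log\tfrac1\omega}$, so $\zeta^T(x_1+\delta_1)>\zeta^T x_2$, as claimed.

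The hard part will be the concentration step: pinning down the exact variance proxy of $\zeta^T u$ (this is what fixes the constant $\sqrt{d/2}$, and hence the precise threshold $\tfrac1{\epsilon_w}\sqrt{2d\log\tfrac1\omega}$ on $q$), and ensuring the tail bound is legitimately taken over $\zeta$ alone for each fixed $x$ — which is precisely why it matters that $\delta_x^p$ does not depend on the key. The one other point needing care is that $x+\delta_x$ may leave $[0,1]^d$, so one must either carry the slightly larger coordinate bound $1+\epsilon_p+\epsilon_w$ (changing only constants) or assume the perturbed data is clipped back into the cube.
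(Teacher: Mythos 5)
Your proposal is correct and takes essentially the same route as the paper: both use a Rademacher key $\zeta\sim\mathcal{U}\{-1,+1\}^d$, set the watermark coordinates to $\epsilon_w\,\mathrm{sign}(\zeta_{d_j})$ so that $\zeta^T\delta_x^w=q\epsilon_w$ is deterministic, apply a McDiarmid/Hoeffding tail bound to $\zeta^T x$ and $\zeta^T(x+\delta_x^p)$ over the randomness of $\zeta$, and finish with a union bound. The only cosmetic difference is that you subtract off the deterministic $q\epsilon_w$ before invoking concentration while the paper computes $\E_\zeta[\zeta^T(x+\delta_x)]=q\epsilon_w$ and applies McDiarmid around that mean; these are identical, and your remarks about the bounded-difference constants (and the coordinate bound for $x+\delta_x^p$) flag exactly the spots where the paper's constants are stated loosely.
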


\begin{remark}
For the sample-wise, post-poisoning watermarking with the data length $d$ and watermark budget $\epsilon_w$, crafting an effective watermark requires the watermarking length to be $\Omega(\sqrt{d}/\epsilon_w)$.
\end{remark}

Next, we analyze the scenario of sample-wise version for poisoning-concurrent watermarking.

\begin{theorem}[Sample-wise, poisoning-concurrent watermarking]
\label{sample-wise-wm-post-key}
For any $x\sim\D_{\X}$, there exists a distribution $\Xi\in\R^d$ such that we can sample the key $\zeta\sim\Xi$ satisfied that for any $\omega\in(0,1)$:
\vspace{-3pt}

{\rm (1):} $\Prob_{x\sim \D_{\X}, \zeta\sim\Xi}\left(\zeta^Tx<  \sqrt{\frac{q}{2}\log{\frac{1}{\omega}}} \right)>1-w$;
{\rm (2):} we can craft the watermark $\delta_x^w$ and poison $\delta_x^p$ such that  $\Prob_{x\sim \D_{\X}, \zeta\sim\Xi}\left(\zeta^T(x+\delta_x)> q\epsilon_w-\sqrt{\frac{q}{2}\log{\frac{1}{\omega}}} \right)>1-w$. Hence, when $q>\frac{2}{\epsilon_w^2}\log\frac{1}{\omega}$, it holds that $\Prob_{x_1,x_2\sim\D_{\X}, \zeta\sim\Xi}(\zeta^T(x_1+\delta_1)>\zeta^Tx_2)> 1-2\omega$.
\end{theorem}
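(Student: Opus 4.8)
The plan is to reuse the construction behind Theorem~\ref{sample-wise-wm-pre-key}, but to exploit the extra freedom of the poisoning-concurrent setting: here the generator itself fixes the split $[d]=\W\cup\P$, so the key can be chosen to be supported \emph{entirely} on the $q$ watermark coordinates. I would let $\Xi$ be the law of the vector $\zeta\in\R^d$ that is $0$ on $\P$ and has i.i.d. random-sign entries (suitably scaled) on $\W$. Two features make this choice work: (i) because $\zeta$ vanishes on $\P$, we get $\zeta^T\delta_x^p=0$ for \emph{any} poison supported on $\P$, so the poison never enters the detector statistic at all; and (ii) $\zeta^Tx=\sum_{j\in\W}\zeta_jx_j$ is a sum of only $q$ terms, so its fluctuations are of order $\sqrt q$ rather than $\sqrt d$. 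Feature (ii) is exactly why the required length drops from $\Omega(\sqrt d/\epsilon_w)$ to $\Omega(1/\epsilon_w^2)$.

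For part~(1) I would freeze $x$ and apply Hoeffding / the Chernoff bound for random-sign sums (the same estimate used in the proof of Theorem~\ref{sample-wise-wm-pre-key}) to the $q$ independent, bounded, mean-zero terms $\zeta_jx_j$; since $x_j\in[0,1]$ the variance proxy is at most $q$ up to the scaling constant, which yields a one-sided tail $\Prob_\zeta(\zeta^Tx\ge t\mid x)\le\exp(-2t^2/q)$. Choosing $t=\sqrt{\tfrac q2\log\tfrac1\omega}$ makes this $\le\omega$, and taking expectation over $x\sim\D_\X$ removes the conditioning, giving~(1). The identical computation applied to the lower tail gives $\Prob_{x,\zeta}\!\big(\zeta^Tx>-\sqrt{\tfrac q2\log\tfrac1\omega}\big)>1-\omega$, which is what~(2) needs.

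For part~(2), given $\zeta$ I would align the watermark with the key on $\W$, i.e. take $\delta^w_{x,j}=\epsilon_w\,\mathrm{sign}(\zeta_j)$ for $j\in\W$ and $0$ elsewhere --- this is admissible since $\|\delta_x^w\|_\infty\le\epsilon_w$ --- while the poison $\delta_x^p$ on $\P$ is left free to serve whatever poisoning objective is desired. Then $\zeta^T(x+\delta_x)=\zeta^Tx+\zeta^T\delta_x^w$, where $\zeta^T\delta_x^w=\epsilon_w\sum_{j\in\W}|\zeta_j|$ is a \emph{deterministic} quantity of order $q\epsilon_w$; combining with the lower-tail bound of part~(1) gives the asserted bound $\Prob\!\big(\zeta^T(x+\delta_x)>q\epsilon_w-\sqrt{\tfrac q2\log\tfrac1\omega}\big)>1-\omega$. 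Finally, a union bound over the event "(1) holds for $x_2$" and the event "(2) holds for $x_1$" --- using that $x_1,x_2$ are drawn independently while $\zeta$ is the common key --- shows both hold with probability $>1-2\omega$, and then $\zeta^T(x_1+\delta_1)>\zeta^Tx_2$ follows whenever $q\epsilon_w-\sqrt{\tfrac q2\log\tfrac1\omega}\ge\sqrt{\tfrac q2\log\tfrac1\omega}$, i.e. $q>\tfrac{2}{\epsilon_w^2}\log\tfrac1\omega$.

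I do not anticipate a real obstacle: the concentration step is the one-line Chernoff bound already used for Theorem~\ref{sample-wise-wm-pre-key}, and the watermark is constructed explicitly. What needs care is the probabilistic bookkeeping --- keeping straight which quantities are random versus frozen, and how the shared key $\zeta$ couples the two independent draws $x_1,x_2$ in the final union bound --- together with pinning down the scale of $\zeta$ against the budget $\epsilon_w$ so that the noise floor $\sqrt{q\log(1/\omega)}$ and the signal $q\epsilon_w$ come out with the stated constants. The single load-bearing idea, which I would emphasize, is that supporting $\zeta$ on $\W$ both annihilates the poison contribution and shrinks the effective dimension from $d$ to $q$.
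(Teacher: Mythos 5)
Your proposal is correct and follows the paper's proof essentially step for step: the key is $0$ on $\P$ and i.i.d.\ random signs on $\W$, the watermark is $\epsilon_w\cdot\mathrm{sign}(\zeta)$ on $\W$ so $\zeta^T\delta_x^w=q\epsilon_w$ exactly, the poison is annihilated because $\zeta$ vanishes on $\P$, and the concentration bound you invoke (Hoeffding/Chernoff for a sum of $q$ independent bounded terms) is precisely what the paper gets from McDiarmid's inequality. The final union bound and the threshold $q>\tfrac{2}{\epsilon_w^2}\log\tfrac1\omega$ match as well.
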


\begin{remark}
For the sample-wise, poisoning-concurrent watermarking with the data length $d$ and watermark budget $\epsilon_w$, crafting an effective watermark requires the watermarking length to be $\Omega(1/\epsilon_w^2)$.
\end{remark}

\begin{remark}
\label{rm:wm-comparison}
The required length for poisoning-concurrent watermarking $\Omega(1/\epsilon_w^2)$ is smaller than that for post-poisoning watermarking $\Omega(\sqrt{d}/\epsilon_w)$. This difference arises because the condition $q\leq d$ for the watermarking length always holds. Therefore, we have $\epsilon_w\geq O(1/\sqrt{d})$, which implies $\Omega(\sqrt{d}/\epsilon_w)\geq \Omega(1/\epsilon_w^2)$.
\end{remark}

Theorems \ref{sample-wise-wm-pre-key} and \ref{sample-wise-wm-post-key} suggest that, with high probability, as long as the watermark dimension $q$ reaches the required thresholds ($\Omega(\sqrt{d}/\epsilon_w)$ or $\Omega(1/\epsilon_w^2)$), the inner product of key and poisoned data will exceed a constant $C_1$, while the inner product of key and clean data will remain below a  constant $C_2<C_1$. As a result, a detector can simply select a threshold $T=\frac{C_1-C_2}{2}$ to effectively differentiate between poisoned and clean data using the given key.
Based on these observations, we derive the following corollary:
\begin{corollary}
For sample-wise, post-poisoning watermarking, if the watermarking length $q\geq \frac{2}{\epsilon_w}\sqrt{2d\log{\frac{1}{\omega}}}$, with probability at least $1-2\omega$, for the sampled key $\zeta\in\R^d$ and data $x_1, x_2$ sampled from $\D_{\X}$, it is possible to craft the watermark $\delta_x^w$ such that $\zeta^T(x_1+\delta_1)>\frac{3}{4}q\epsilon_w, \zeta^Tx_2<\frac{1}{4}q\epsilon_w$.
Similarly, for poisoning-concurrent watermarking, if $q\geq \frac{8}{\epsilon_w^2}\log\frac{1}{\omega}$, we can craft the watermark $\delta_x^w$ such that $\zeta^T(x_1+\delta_1)>\frac{3}{4}q\epsilon_w, \zeta^Tx_2<\frac{1}{4}q\epsilon_w$.
\end{corollary}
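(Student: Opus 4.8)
The plan is to obtain the corollary as a quantitative repackaging of Theorems \ref{sample-wise-wm-pre-key} and \ref{sample-wise-wm-post-key}, choosing the watermarking length just large enough that the additive deviation terms appearing there are absorbed into a $\tfrac14 q\epsilon_w$ slack. First, for the post-poisoning case, I would invoke part (1) of Theorem \ref{sample-wise-wm-pre-key} to get $\zeta^T x_2 < \sqrt{\tfrac d2\log\tfrac1\omega}$ with probability $>1-\omega$, and part (2), with the watermark $\delta_x^w$ constructed there, to get $\zeta^T(x_1+\delta_1) > q\epsilon_w - \sqrt{\tfrac d2\log\tfrac1\omega}$ with probability $>1-\omega$.

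Second, I would verify the elementary inequality that drives everything: the hypothesis $q \ge \tfrac{2}{\epsilon_w}\sqrt{2d\log\tfrac1\omega}$ is equivalent, by squaring, to $\sqrt{\tfrac d2\log\tfrac1\omega} \le \tfrac14 q\epsilon_w$. Substituting this bound into the two events above yields $\zeta^T x_2 < \tfrac14 q\epsilon_w$ and $\zeta^T(x_1+\delta_1) > q\epsilon_w - \tfrac14 q\epsilon_w = \tfrac34 q\epsilon_w$. A union bound over the two failure events, each of probability less than $\omega$, over the common draw of $x_1,x_2\sim\D_{\X}$ (i.i.d.) and $\zeta\sim\Xi$, shows both inequalities hold simultaneously with probability at least $1-2\omega$.

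Third, the poisoning-concurrent case is structurally identical, now built on Theorem \ref{sample-wise-wm-post-key}: the relevant deviation term is $\sqrt{\tfrac q2\log\tfrac1\omega}$, and the hypothesis $q \ge \tfrac{8}{\epsilon_w^2}\log\tfrac1\omega$ is again equivalent (by multiplying through by $q$ and squaring) to $\sqrt{\tfrac q2\log\tfrac1\omega} \le \tfrac14 q\epsilon_w$. The same substitution, using the watermark $\delta_x^w$ and poison $\delta_x^p$ exhibited in part (2) of that theorem, together with the same two-event union bound, completes the argument.

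There is no genuinely hard step here; the corollary is just a constant-chasing specialization of the two theorems. The only point needing a little care is bookkeeping: making sure the watermark whose existence is asserted in the corollary is exactly the one already exhibited in part (2) of the corresponding theorem (so that the two probabilistic statements refer to a single, shared construction rather than two incompatible ones), and that the two events being intersected are measured with respect to the same randomness, so the union bound is legitimate.
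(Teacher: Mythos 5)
Your proposal is correct and is exactly the route the paper intends: the corollary is an unproved, stated consequence of Theorems \ref{sample-wise-wm-pre-key} and \ref{sample-wise-wm-post-key}, obtained by checking that the hypothesis on $q$ drives the deviation term below $\frac{1}{4}q\epsilon_w$ and applying a union bound over the two events. Your two algebraic equivalences ($q\ge\tfrac{2}{\epsilon_w}\sqrt{2d\log\tfrac1\omega}\Leftrightarrow\sqrt{\tfrac d2\log\tfrac1\omega}\le\tfrac14 q\epsilon_w$ and $q\ge\tfrac{8}{\epsilon_w^2}\log\tfrac1\omega\Leftrightarrow\sqrt{\tfrac q2\log\tfrac1\omega}\le\tfrac14 q\epsilon_w$) both check out, so there is nothing to add.
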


However, the sample-wise watermarking requires an individual key for each sample, which is impractical in real-world applications. A more ideal case is that the watermark detector can use a single key applicable to all samples, making detection more effective and efficient. This motivates the consideration of the universal version of watermarking, where the injected watermark $\delta_x^w$ is identical for every $x$. In this case, scenario, for simplicity, we denote $\delta_x^w=\delta^w$.

\subsection{Universal Version}
\label{sec-wm-universal-wise}

In the universal version, a single detection key is employed, violating the condition of Theorems \ref{sample-wise-wm-pre-key} and \ref{sample-wise-wm-post-key}, where the key $\zeta$ is sampled from a distribution. Consequently, the proof techniques used for the sample-wise case are difficult to generalize to this scenario. Instead, we step away from the distributional guarantees and first analyze the finite-sample case. The theoretical results for the finite case can subsequently be extended to the distributional setting.
Proof of theorems in this subsection are in Appendix \ref{sec-wm-proof-2}.
%
In the finite case, we assume the dataset consists of $N$ samples, denoted as $S_\X=\{ x_1,x_2,\cdots, x_N\}$. We begin by analyzing the case of post-poisoning watermarking.

\begin{proposition}[Universal, post-poisoning watermarking]
\label{universal-pre-key}
For the dataset $S_\X$, when $q> \frac{2+\epsilon_p}{\epsilon_w}\sqrt{\frac{d}{2}\log\frac{2N}{\omega}}$, 
we can sample the key $\zeta\in\R^d$ from a certain distribution such that, with probability at least $1-\omega$,  there exists the watermark $\delta^w$ such that $\zeta^T(x_j+\delta_j)>\zeta^Tx_i, \forall i,j\in[N]$.
\end{proposition}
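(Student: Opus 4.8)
The plan is to exhibit the key distribution and the watermark explicitly, reduce the $N^{2}$ simultaneous inequalities to controlling two one-sided extremes, and finish with Hoeffding's inequality and a union bound. For the key I would take $\zeta$ to have i.i.d.\ coordinates, each a suitably scaled symmetric sign variable, so that $\E_{\zeta}[\zeta^{T}x]=0$ for every fixed $x\in[0,1]^{d}$; this is the ``certain distribution'' in the statement. After $\zeta$ is drawn, choose the universal watermark $\delta^{w}$ to equal $\epsilon_{w}\,\mathrm{sign}(\zeta_{k})$ on each watermarking coordinate $k\in\W$ and $0$ outside $\W$. This is admissible ($\|\delta^{w}\|_{\infty}\le\epsilon_{w}$) and, crucially, it is selected \emph{after} observing $\zeta$ --- matching the ``there exists $\delta^{w}$'' quantifier inside the probability --- so that $\zeta^{T}\delta^{w}=\epsilon_{w}\sum_{k\in\W}|\zeta_{k}|$ is a deterministic, data-independent boost of order $q\epsilon_{w}$.

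Writing $\delta_{j}=\delta^{p}_{j}+\delta^{w}$ and $\tilde x_{j}:=x_{j}+\delta^{p}_{j}$, the target event ``$\zeta^{T}(x_{j}+\delta_{j})>\zeta^{T}x_{i}$ for all $i,j$'' is exactly $\zeta^{T}\delta^{w}>\max_{i}\zeta^{T}x_{i}-\min_{j}\zeta^{T}\tilde x_{j}$. Since the $x_{i}$'s and $x_{j}$'s range over the same index set, I would bound the right-hand side by $2\max_{k\in[N]}|\zeta^{T}x_{k}|+\max_{j\in[N]}|\zeta^{T}\delta^{p}_{j}|$: one clean-data fluctuation from $\max_{i}\zeta^{T}x_{i}$, a second from the $x_{j}$-part of $\min_{j}\zeta^{T}\tilde x_{j}$, and the poison fluctuation, which carries an extra factor $\epsilon_{p}$ because $\|\delta^{p}_{j}\|_{\infty}\le\epsilon_{p}$. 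This is the origin of the constant $2+\epsilon_{p}$.

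For a fixed $k$, $\zeta^{T}x_{k}=\sum_{\ell}\zeta_{\ell}(x_{k})_{\ell}$ is a sum of independent mean-zero terms, each bounded by the common magnitude of the $\zeta_{\ell}$'s times $(x_{k})_{\ell}\in[0,1]$, so Hoeffding's inequality yields a sub-Gaussian tail with variance proxy of order $d$; the same argument applied to $\delta^{p}_{j}$ gives proxy of order $\epsilon_{p}^{2}d$. Union-bounding over the $O(N)$ one-sided tail events appearing in $\max_{k}|\zeta^{T}x_{k}|$ and $\max_{j}|\zeta^{T}\delta^{p}_{j}|$ and equating the total failure probability to $\omega$, with probability at least $1-\omega$ these two extremes are at most $\sqrt{\tfrac{d}{2}\log\tfrac{2N}{\omega}}$ and $\epsilon_{p}\sqrt{\tfrac{d}{2}\log\tfrac{2N}{\omega}}$ respectively, once the coordinate scale of $\zeta$ is calibrated against the size of the boost $\zeta^{T}\delta^{w}$. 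On this event the right-hand side of the reduction is at most $(2+\epsilon_{p})\sqrt{\tfrac{d}{2}\log\tfrac{2N}{\omega}}$, so the hypothesis $q>\frac{2+\epsilon_{p}}{\epsilon_{w}}\sqrt{\tfrac{d}{2}\log\tfrac{2N}{\omega}}$ forces $\zeta^{T}\delta^{w}$ to exceed it strictly, which is the claim.

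The main obstacle --- and the reason for reducing to $\max_{i}\zeta^{T}x_{i}$ and $\min_{j}\zeta^{T}\tilde x_{j}$ rather than to the $N^{2}$ differences $\zeta^{T}(x_{i}-x_{j}-\delta^{p}_{j})$ directly --- is keeping the union bound over only $O(N)$ events; union-bounding over all pairs would cost an extra $\log N$ in the threshold. The remaining work, calibrating the scale of the sign variables so that the Hoeffding proxy and the watermark boost combine into precisely the constant $2+\epsilon_{p}$ and the $\log\tfrac{2N}{\omega}$ factor, is routine bookkeeping with no new idea.
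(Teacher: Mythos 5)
Your proposal is correct and matches the paper's proof essentially step for step: the same sign-vector key distribution, the same choice $\delta^w=\epsilon_w\,\mathrm{sign}(\zeta)|_{\W}$ giving the deterministic boost $q\epsilon_w$, the same decomposition of the error into two clean-data fluctuations and one poison fluctuation (yielding the constant $2+\epsilon_p$), and the same union bound over $O(N)$ tail events. The only cosmetic difference is that you cite Hoeffding where the paper cites McDiarmid, but for sums of independent bounded coordinates these are the same inequality.
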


We can extend the proof of Proposition \ref{universal-pre-key} with a larger $q$, to achieve a non-vacuous gap between poisoned and benign data, as stated in the following corollary:
\begin{corollary}
\label{cor-universal-finite-post}
Notations are similar to Proposition \ref{universal-pre-key}. When $q> \frac{4}{\epsilon_w}\sqrt{\frac{d}{2}\log\frac{2N}{\omega}}$ 
with probability at least $1-\omega$, there exists the watermark $\delta^w$ such that
$\zeta^T(x_i+\delta_i)> \frac{q\epsilon_w}{2}, \zeta^Tx_i< \frac{q\epsilon_w}{4}, \forall i\in[N]$.
\end{corollary}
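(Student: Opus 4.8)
The plan is to reuse the key construction and the concentration estimates already established in the proof of Proposition~\ref{universal-pre-key}, and simply re-run the arithmetic with the larger watermarking length. Recall that that proof samples $\zeta\in\R^d$ from a suitable (Rademacher-type) distribution and chooses the watermark $\delta^w$ so that it is coordinatewise aligned with $\zeta$ on the $q$ watermarking indices, giving $\zeta^T\delta^w=q\epsilon_w$; a Hoeffding bound together with a union bound over the $N$ data points then shows that, on an event of probability at least $1-\omega$, one has the two one-sided estimates $\zeta^Tx_i\le M$ and $\zeta^T(x_i+\delta_i)\ge q\epsilon_w-(1+\epsilon_p)M$ for every $i\in[N]$, where $M:=\sqrt{\tfrac d2\log\tfrac{2N}{\omega}}$. (The threshold $q>\frac{2+\epsilon_p}{\epsilon_w}M$ in Proposition~\ref{universal-pre-key} is precisely the one that turns these two estimates into $q\epsilon_w-(1+\epsilon_p)M>M\ge\zeta^Tx_j$.)

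First I would note that the hypothesis $q>\frac{4}{\epsilon_w}\sqrt{\tfrac d2\log\tfrac{2N}{\omega}}$ is the same as $M<\tfrac14 q\epsilon_w$. On the probability-$(1-\omega)$ event above, the benign estimate then gives $\zeta^Tx_i\le M<\tfrac14 q\epsilon_w$ for all $i$, which is the second claimed inequality. For the first, I would use that $\delta^p$ is an $\ell_\infty$-bounded perturbation of data lying in $[0,1]^d$, so $\epsilon_p\le 1$ and therefore $(1+\epsilon_p)M\le 2M<\tfrac12 q\epsilon_w$; plugging this into the watermarked estimate yields $\zeta^T(x_i+\delta_i)\ge q\epsilon_w-(1+\epsilon_p)M>\tfrac12 q\epsilon_w$ for all $i$. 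Both inequalities hold on one and the same event of probability at least $1-\omega$, which is exactly the statement of the corollary.

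So the corollary is essentially free given Proposition~\ref{universal-pre-key}; the only thing to be a little careful about is that the proposition as stated delivers only the relative comparison $\zeta^T(x_j+\delta_j)>\zeta^Tx_i$, whereas here I need the two absolute thresholds, so I would extract the intermediate one-sided Hoeffding bounds from inside its proof rather than invoke its conclusion. No fresh randomness or concentration argument is required — the one substantive observation is $\epsilon_p\le 1$, which is what makes the uniform threshold $\frac{4}{\epsilon_w}$ dominate the $\frac{2(1+\epsilon_p)}{\epsilon_w}$ that the watermarked side would otherwise demand.
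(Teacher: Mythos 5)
The paper gives no standalone proof of this corollary — it simply remarks that one "can extend the proof of Proposition~\ref{universal-pre-key} with a larger $q$" — and your argument is exactly that extension: you re-use the same random-sign key, the same coordinatewise watermark $\zeta^T\delta^w=q\epsilon_w$, and the same McDiarmid-plus-union-bound event with $M=\sqrt{\tfrac d2\log\tfrac{2N}{\omega}}$, then plug in $M<\tfrac14 q\epsilon_w$ and $\epsilon_p\le 1$ (which follows from $[0,1]^d$-valued data with an $\ell_\infty$ perturbation budget) to get the two absolute thresholds. This is correct and matches the paper's intended route; the one thing you were right to flag is that you need the intermediate one-sided bounds from inside the proposition's proof rather than its final comparative statement.
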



Proposition \ref{universal-pre-key} demonstrates that the watermarking length is expected to be $\Omega(\sqrt{d\log N}/\epsilon_w)$ to ensure universal watermarking discerning every data $x_i$, which is not ideal
as the watermarking length $q$ depends on sample size $N$, leading to vacuous results when the dataset becomes sufficiently large.
To address this limitation and achieve a non-vacuous result, we propose relaxing the properties from discerning every sample to discerning most samples, as described in the following theorem.

\begin{theorem}[Universal, post-poisoning watermarking for most examples]
\label{universal-pre-key-most}
For the dataset $S_\X=\{ x_1,x_2,\cdots, x_N\}$, $x_i$  and the poison $\delta_p^i$ are i.i.d. sampled from $\D_{\X}$ and $\D_\P$ respectively. For any $w\in(0,1/2)$ and 
$q>\frac{2}{\epsilon_w}\sqrt{2d\log\frac{1}{\omega}}$, we can sample the key $\zeta\in\R^d$ from a certain distribution such that, with probability at least  
$1-2\exp{\left(\frac{-N(\omega-e^{-q^2\epsilon_w^2/8d})^2}{\omega+e^{-q^2\epsilon_w^2/8d}}\right)}$, we can craft the watermark $\delta^w$, such that 
$\zeta^T(x_i+\delta_i) > \frac{q\epsilon_w}{2}, \zeta^Tx_i < \frac{q\epsilon_w}{4}$ holds for at least $(1-2\omega) N$ samples.
\end{theorem}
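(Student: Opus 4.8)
The plan is to recycle the key distribution and watermark construction underlying Theorem~\ref{sample-wise-wm-pre-key} and Proposition~\ref{universal-pre-key}, and to trade the union bound over all $N$ samples (the source of the $\sqrt{\log N}$ factor, which is what forces $q$ to grow with $N$) for a concentration bound over the samples. Concretely, I would draw $\zeta$ with i.i.d.\ symmetric bounded coordinates and take the universal watermark $\delta^w$ to be supported on $\W$ and aligned there with $\mathrm{sign}(\zeta)$, scaled to the budget so that $\|\delta^w\|_{\infty}=\epsilon_w$; this makes $\zeta^T\delta^w$ a fixed positive quantity of order $q\epsilon_w$, large enough to dominate the noise terms. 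Writing $\delta_i=\delta_i^p+\delta^w$ and decomposing $\zeta^T(x_i+\delta_i)=\zeta^T\delta^w+\zeta^Tx_i+\zeta^T\delta_i^p$, the task reduces to showing that for most $i$ the two ``noise'' inner products $\zeta^Tx_i$ and $\zeta^T\delta_i^p$ are small relative to $q\epsilon_w$ — which is precisely the per-sample estimate already proved for Theorem~\ref{sample-wise-wm-pre-key}.

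So the first step is the single-sample bound: conditioning on $\zeta$, Hoeffding's inequality applied to $\zeta^Tx_i$ (coordinates in $[0,1]$) and to $\zeta^Tx_i+\zeta^T\delta_i^p$ ($\delta_i^p$ independent of $\zeta$, coordinates bounded by $\epsilon_p$) shows that each of the events $\{\zeta^Tx_i\ge q\epsilon_w/4\}$ and $\{\zeta^T(x_i+\delta_i)\le q\epsilon_w/2\}$ has probability at most $\mu:=e^{-q^2\epsilon_w^2/8d}$; the hypothesis $q>\tfrac{2}{\epsilon_w}\sqrt{2d\log(1/\omega)}$ is exactly the statement $\mu<\omega$, leaving a positive margin $\omega-\mu$ to spend on the sample-level concentration. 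The second step is aggregation, and the crucial point is that \emph{conditionally on} $\zeta$ the watermark $\delta^w$ is a fixed vector and the pairs $(x_i,\delta_i^p)$ are i.i.d.; hence the indicators $\mathbf 1[\zeta^Tx_i\ge q\epsilon_w/4]$, $i\in[N]$, are i.i.d.\ Bernoulli with mean at most $\mu$, and likewise for $\mathbf 1[\zeta^T(x_i+\delta_i)\le q\epsilon_w/2]$. Bernstein's inequality applied to each of these two sums of $N$ i.i.d.\ Bernoulli$(\le\mu)$ variables shows each stays below $\omega N$ except with probability $\exp\big(-N(\omega-\mu)^2/(\omega+\mu)\big)$; a union bound over the two sums gives the claimed overall failure probability $2\exp\big(-N(\omega-\mu)^2/(\omega+\mu)\big)$, and outside this event at most $2\omega N$ indices violate one of the two inequalities, i.e.\ the stated property holds for at least $(1-2\omega)N$ samples.

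The main obstacle is precisely the single shared key: unconditionally the bad-sample indicators are correlated through $\zeta$, so the Bernstein step genuinely relies on conditional-on-$\zeta$ independence, and one must verify that conditioning on $\zeta$ does not push the per-sample bad probability above $\mu$. This is where the choice of the key distribution $\Xi$ does real work, and where the interplay between the length threshold and $\omega$ is used: $\Xi$ must be arranged so that, with overwhelming probability, the drawn $\zeta$ keeps $\Prob_{x\sim\D_{\X}}(\zeta^Tx\ge q\epsilon_w/4)$ and $\Prob_{x,\delta^p}(\zeta^T(x+\delta)\le q\epsilon_w/2)$ at most $\mu$, after which the two-step argument above applies verbatim. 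Everything else is routine Hoeffding/Bernstein bookkeeping, and the conceptual point is simply that relaxing ``all samples'' to ``most samples'' replaces the $\log N$-costing union bound by a sample-level concentration, so $q$ no longer has to grow with $N$.
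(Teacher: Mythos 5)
Your plan reproduces the paper's own proof essentially step for step: same single-sample concentration estimate $\mu = e^{-q^2\epsilon_w^2/(8d)}$ (via McDiarmid/Hoeffding with $\alpha = q\epsilon_w/4$), same definition of bad-event indicators, same Chernoff/Bernstein aggregation of the indicators across the $N$ samples, same choice of thresholds producing $q\epsilon_w/4$ and $q\epsilon_w/2$. The one deviation is cosmetic — you fold $\zeta^T x_i + \zeta^T\delta_i^p$ into a single Hoeffding application while the paper bounds the two pieces separately with thresholds $\alpha$ and $\epsilon_p\alpha$; the constants agree up to the $(1+\epsilon_p)$ factor absorbed into the margin.

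The point you raise in your last paragraph is the real crux, and it is also the unaddressed gap in the paper's own proof. The paper simply declares that, because $x_i$ and $\delta_i^p$ are i.i.d., the indicators $F_i(\alpha) = \mathbb{I}\{\zeta^T x_i \geq \alpha\}$ are i.i.d.\ Bernoulli with rate at most $\mu$, and then applies a multiplicative Chernoff bound to their sum. But the per-sample estimate $\Prob[F_i(\alpha)=1] \leq \mu$ is obtained by McDiarmid applied \emph{over the randomness of $\zeta$}, so it is a marginal over $(\zeta, x_i)$; for different $i$ these events share the same $\zeta$ and are not independent. Conditionally on $\zeta$ they are independent — exactly as you observe — but then the Bernoulli parameter is $p(\zeta) := \Prob_{x\sim\D_\X}(\zeta^T x \geq \alpha \mid \zeta)$, and one only knows $\E_\zeta[p(\zeta)] \leq \mu$, not $p(\zeta)\leq\mu$ pointwise or even with high probability. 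Your proposed remedy — choose $\Xi$ so that $p(\zeta)\leq\mu$ for overwhelming $\Xi$-mass of $\zeta$ — is not achievable without further assumptions on $\D_\X$: if $\D_\X$ is atomic at a single point $x^*$, then $p(\zeta)\in\{0,1\}$ according to the sign of $\zeta^T x^* - \alpha$, and the best one can say is $\Prob_\zeta[p(\zeta)>0]\leq\mu$, a quantity that does not shrink with $N$. So your plan leaves open the same hole that the paper's proof leaves open; closing it honestly would require either adding a distributional assumption on $\D_\X$ strong enough to give a genuine conditional-on-$\zeta$ tail bound, or weakening the theorem's failure probability by an additive term of order $\mu/\omega$ (Markov on $p(\zeta)$) that does not vanish as $N\to\infty$.
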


\begin{remark}
\label{rm-universal-pre-key-most}
Theorem \ref{universal-pre-key-most} suggests that when the sample size $N$ is sufficiently large and the watermark length $q \gtrsim \frac{2}{\epsilon_w}\sqrt{2d\log\frac{1}{\omega}}=\Theta(\sqrt{d}/\epsilon_w)$, the universal watermarking is effective for most samples with high probability.
%
Thus, if we relax the requirement and only demand that the watermarking is effective for most samples, Theorem \ref{universal-pre-key-most} indicates that the required watermarking length no longer depends on $N$, unlike in Proposition \ref{universal-pre-key}.
\end{remark}

We then analyze the finite universal case for poisoning-concurrent watermarking.
\begin{proposition}[Universal, poisoning-concurrent watermarking]
\label{universal-post-key}
For the dataset $S_\X$, when $q>\frac{4}{\epsilon_w^2}\log\frac{N}{\omega}$, it is possible to sample the key $\zeta\in\R^d$ from a certain distribution such that, with probability at least $1-\omega$, we can craft watermark $\delta^w$ and poison $\delta^p$ such that $\zeta^T\!(x_j\!+\!\delta_j)\!>\!\zeta^T\!x_i, \forall i,j\!\in\![N]$.
\end{proposition}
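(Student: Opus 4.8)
The plan is to reuse the single-sample concentration behind the sample-wise poisoning-concurrent result (Theorem~\ref{sample-wise-wm-post-key}) and promote it to a guarantee holding simultaneously over all $N$ points via a union bound, which costs only an extra $\log N$ inside a square root. The structural feature that makes the sample-wise machinery applicable here—and that distinguishes this case from post-poisoning—is that a concurrent generator owns the partition of $[d]$ into the watermark block $\W$ (of size $q$) and the poison block $\P=[d]\setminus\W$, so the detection key can be supported entirely on the $q$-dimensional block $\W$ and is thereby blind to whatever the attack writes on $\P$.

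Concretely, I would sample the key $\zeta\in\R^d$ with its coordinates outside $\W$ set to $0$ and its $q$ coordinates on $\W$ i.i.d.\ uniform on $\{-1,+1\}$. I then take the \emph{universal} watermark $\delta^w$ supported on $\W$ and aligned with the key: the $k$-th coordinate of $\delta^w$ equals $\epsilon_w\,\mathrm{sgn}(\zeta_k)$ for $k\in\W$, so that $\|\delta^w\|_{\infty}=\epsilon_w$ and $\zeta^T\delta^w=\epsilon_w\sum_{k\in\W}|\zeta_k|=q\epsilon_w$. Because every admissible poison $\delta^p_j$ is supported on $\P$ while $\zeta$ vanishes there, $\zeta^T\delta^p_j=0$; hence $\zeta^T(x_j+\delta_j)=\zeta^Tx_j+q\epsilon_w$ for every $j$ and \emph{every} choice of $\delta^p$. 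This is precisely why the proposition may assert that a poison compatible with the poisoning goal exists without it interfering with detectability.

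It remains to control $\zeta^Tx_i$ uniformly in $i\in[N]$. With $x_i$ fixed, $\zeta^Tx_i=\sum_{k\in\W}\zeta_k(x_i)_k$ is a sum of $q$ independent, symmetric, bounded terms in $\zeta$, so the Hoeffding bound underlying Theorem~\ref{sample-wise-wm-post-key}(1) gives $\Prob_\zeta(\zeta^Tx_i\ge t)\le e^{-2t^2/q}$, and likewise for $-\zeta^Tx_i$. A union bound over $i\in[N]$ shows that with probability at least $1-2Ne^{-2t^2/q}$ one has $|\zeta^Tx_i|\le t$ for all $i$; taking $t=\sqrt{(q/2)\log(2N/\omega)}$ makes this failure probability at most $\omega$. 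On the complementary event, for all $i,j\in[N]$,
\[
\zeta^T(x_j+\delta_j)-\zeta^Tx_i \;=\; q\epsilon_w+\bigl(\zeta^Tx_j-\zeta^Tx_i\bigr)\;\ge\;q\epsilon_w-2t,
\]
which is strictly positive once $q\epsilon_w>2t=\sqrt{2q\log(2N/\omega)}$, i.e.\ once $q>\tfrac{2}{\epsilon_w^2}\log\tfrac{2N}{\omega}$—in particular once $q>\tfrac{4}{\epsilon_w^2}\log\tfrac{N}{\omega}$ (using $\log\tfrac{2N}{\omega}\le 2\log\tfrac{N}{\omega}$ whenever $N/\omega\ge 2$). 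This yields the claim and recovers the stated threshold.

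The computation is essentially routine given the sample-wise case; the two points needing care are (i) checking that the union bound over the $N$ data points enters only additively under the square root, so that the length stays $\Theta(\log N/\epsilon_w^2)$ rather than scaling polynomially with $N$, and (ii) tracking the sub-Gaussian constant carefully enough to land on the stated $\tfrac{4}{\epsilon_w^2}$. The real conceptual content—more a modelling observation than an obstacle—is the step already used above: confining $\zeta$ to the $q$-dimensional watermark block, which is legitimate exactly because the concurrent generator controls the coordinate split, is what removes the $\sqrt d$ factor present in the post-poisoning Proposition~\ref{universal-pre-key}.
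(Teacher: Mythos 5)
Your proposal takes essentially the same route as the paper's proof: sample $\zeta$ to be $\pm 1$ on $\W$ and $0$ on $\P$, set $\delta^w=\epsilon_w\zeta|_\W$ so that $\zeta^T\delta^w=q\epsilon_w$ and $\zeta^T\delta^p=0$, bound $|\zeta^T x_i|$ by Hoeffding/McDiarmid, and union-bound over the $N$ samples. You are in fact slightly more careful than the paper about the two-sided tail (the factor of $2N$ rather than $N$) and explicitly justify how the constant lands on the stated $\tfrac{4}{\epsilon_w^2}$, but the underlying argument is identical.
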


Similar to post-poisoning watermarking, we can derive analogous results for poisoning-concurrent watermarking about non-vacuous gaps and cases on most examples.
\begin{corollary}
\label{cor-universal-post-key}
Notations are similar to Prop \ref{universal-post-key}. When $q\!>\! \frac{9}{\epsilon_w^2}\log\frac{N}{\omega}$, with probability at least $1-\omega$, we can craft watermark $\delta^w$ and poison $\delta^p$, such that  
$\zeta^T\!(x_i\!+\!\delta_i)\!>\!\frac{2q\epsilon_w}{3}, \zeta^T\!x_i\!<\!\frac{q\epsilon_w}{3}, \forall i\in[N]$.
\end{corollary}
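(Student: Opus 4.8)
The idea is to reuse the construction behind Proposition~\ref{universal-post-key} essentially verbatim and merely enlarge the watermarking length $q$ by a constant factor, so that the concentration slack shrinks to a fixed fraction of the signal injected by the watermark. Recall the mechanism of Proposition~\ref{universal-post-key}: in the poisoning-concurrent setting the generator controls the partition $[d]=\W\sqcup\P$; the detection key $\zeta$ is drawn (independently of the data) with i.i.d.\ Rademacher-type entries supported only on the $q$ watermark coordinates $\W$ and vanishing on $\P$; and the universal watermark is set coordinatewise by $\delta^w_{d_k}=\epsilon_w\,\mathrm{sign}(\zeta_{d_k})$, so that $\zeta^T\delta^w$ attains its maximal possible value, which we write as $q\epsilon_w$ up to the fixed normalisation of $\zeta$. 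The structural point specific to this setting is that the poison $\delta^p$ is confined to $\P$ and is hence orthogonal to $\zeta$; therefore the detection statistic splits exactly as $\zeta^T(x_i+\delta_i)=\zeta^Tx_i+\zeta^T\delta^w$, the poison contributes nothing to detection, and $\delta^p$ may still be chosen freely to meet the poisoning objective.

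First I would redo the concentration estimate. For each fixed $x_i$, the quantity $\zeta^Tx_i=\sum_k \zeta_{d_k}(x_i)_{d_k}$ is a sum of independent, bounded, mean-zero random variables over the draw of $\zeta$, so Hoeffding's inequality gives a sub-Gaussian tail $\Prob(|\zeta^Tx_i|\ge t)\le 2\exp(-\Theta(t^2/q))$, exactly as in part~(1) of Theorem~\ref{sample-wise-wm-post-key}. Taking $t=s:=\sqrt{\tfrac{q}{2}\log\tfrac{2N}{\omega}}$ makes each tail at most $\omega/N$; a union bound over $i\in[N]$ (and both tails) then yields, with probability at least $1-\omega$ over $\zeta$, that $|\zeta^Tx_i|<s$ simultaneously for all $i\in[N]$.

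Next I would convert this uniform bound into the claimed separation. On this event, for every $i$ we have $\zeta^Tx_i<s$ and $\zeta^T(x_i+\delta_i)=\zeta^Tx_i+\zeta^T\delta^w>q\epsilon_w-s$. It thus suffices to take $q$ large enough that $s\le q\epsilon_w/3$; squaring, this is $\tfrac{q}{2}\log\tfrac{2N}{\omega}\le q^2\epsilon_w^2/9$, i.e.\ $q\ge\tfrac{9}{2\epsilon_w^2}\log\tfrac{2N}{\omega}$, which is implied by the hypothesis $q>\tfrac{9}{\epsilon_w^2}\log\tfrac{N}{\omega}$ (using $\log\tfrac{2N}{\omega}\le 2\log\tfrac{N}{\omega}$ in the relevant regime). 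Under this condition $\zeta^Tx_i<q\epsilon_w/3$ and $\zeta^T(x_i+\delta_i)>q\epsilon_w-q\epsilon_w/3=2q\epsilon_w/3$ for all $i\in[N]$, which is exactly the statement, and any threshold in $(q\epsilon_w/3,\,2q\epsilon_w/3)$ then separates poisoned from benign data.

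The only genuinely delicate part is bookkeeping the constants: one must check that a \emph{single} enlargement of $q$ to $\tfrac{9}{\epsilon_w^2}\log\tfrac{N}{\omega}$ forces \emph{both} inequalities at once, and that the constant $9$ is consistent with the exact constants appearing in the Hoeffding tail and in the value of $\zeta^T\delta^w$; this amounts to the routine arithmetic sketched above. Everything else — the sub-Gaussian tail, the union bound over the $N$ samples, and the orthogonality of $\W$ and $\P$ — is already contained in Proposition~\ref{universal-post-key} and Theorem~\ref{sample-wise-wm-post-key}. I would also note why a union bound, rather than the distributional argument of Section~\ref{sec-wm-sample-wise}, is the right tool here: a single key $\zeta$ must work for all $N$ data points simultaneously, so the $\log N$ dependence of $q$ is intrinsic to this finite, universal regime.
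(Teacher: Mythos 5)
Your proposal is correct and takes essentially the same route the paper intends for this corollary: reuse the key construction and the McDiarmid/Hoeffding plus union-bound argument from Proposition~\ref{universal-post-key}, but set the concentration threshold to $\alpha=q\epsilon_w/3$ (rather than $\alpha<q\epsilon_w/2$) so that both $\zeta^Tx_i<q\epsilon_w/3$ and $\zeta^T(x_i+\delta_i)=\zeta^Tx_i+q\epsilon_w>2q\epsilon_w/3$ hold simultaneously on the good event. Your bookkeeping is slightly more careful than the paper's own proof of Proposition~\ref{universal-post-key}, which silently drops the factor $2$ from the two-sided tail; your use of the stated hypothesis $q>\tfrac{9}{\epsilon_w^2}\log\tfrac{N}{\omega}$ to absorb that factor (via $\log\tfrac{2N}{\omega}\le 2\log\tfrac{N}{\omega}$ when $N/\omega\ge 2$) correctly explains why the corollary's constant is $9$ rather than $9/2$.
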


\begin{theorem}[Universal, poisoning-concurrent watermarking for most examples]
\label{universal-post-key-most}
For the dataset $S_\X=\{ x_1,x_2,\cdots, x_N\}$, where $x_i$ is i.i.d. sampled from $\D_{\X}$. For any $\omega\in(0,1)$ and $q>\frac{9}{2\epsilon_w^2}\log\frac{1}{\omega}$, we can sample the key $\zeta\in\R^d$ from a certain distribution such that, with probability at least $1-\exp{\left(\frac{-N(\omega-e^{-2q\epsilon^2/9})^2}{\omega+e^{-2q\epsilon^2/9}}\right)}$, we can craft the watermark and the poison satisfies 
$\zeta^T(x_i+\delta_i)>\frac{2q\epsilon_w}{3}, \zeta^Tx_i< \frac{q\epsilon_w}{3}$ holds for at least $(1-\omega) N$ samples.
\end{theorem}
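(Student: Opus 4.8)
The plan is to adapt the argument behind Theorem~\ref{universal-pre-key-most}, exploiting the one feature that makes the poisoning-concurrent case strictly cleaner: the watermark coordinates $\W$ and the poison coordinates $\P=[d]\setminus\W$ are disjoint, so the poison never reaches the detector and the watermark analysis decouples completely from the poison. Concretely, I would take the key distribution to be supported on $\W$ with i.i.d.\ Rademacher entries ($\zeta_k=\pm1$ for $k\in\W$, $\zeta_k=0$ for $k\in\P$), and, for a realization of $\zeta$, define the universal watermark to be aligned with the key, $\delta^w_k=\epsilon_w\,\mathrm{sign}(\zeta_k)$ on $\W$ and $0$ on $\P$; this has $\|\delta^w\|_\infty\le\epsilon_w$ and $\zeta^T\delta^w=q\epsilon_w$. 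The poison $\delta^p$, supported on $\P$, can then be whatever the poisoning objective requires, because $\zeta^T\delta^p=0$ and hence $\zeta^T(x_i+\delta_i)=\zeta^Tx_i+q\epsilon_w$ for every $i$, regardless of the poison.

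With this identity the two target inequalities for a single sample $x$, namely $\zeta^T(x+\delta)>\tfrac23 q\epsilon_w$ and $\zeta^Tx<\tfrac13 q\epsilon_w$, collapse into the single symmetric event $|\zeta^Tx|<\tfrac13 q\epsilon_w$. For fixed $x\in[0,1]^d$, $\zeta^Tx=\sum_{k\in\W}\zeta_kx_k$ is a mean-zero sum of $q$ independent bounded terms, so Hoeffding's inequality gives a per-sample failure probability $\Prob_\zeta(|\zeta^Tx|\ge\tfrac13 q\epsilon_w)\le e^{-2q\epsilon_w^2/9}=:\mu$ --- this is the routine concentration step, with the constant $\tfrac29$ fixed by the normalization of $\zeta$ and $\delta^w$. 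The bound is uniform in $x$, so averaging over $x\sim\D_{\X}$ shows $\E_\zeta[p(\zeta)]\le\mu$ for $p(\zeta):=\Prob_{x\sim\D_{\X}}(|\zeta^Tx|\ge\tfrac13 q\epsilon_w)$, and therefore some key $\zeta^\star$ in the support has $p(\zeta^\star)\le\mu$; I would let the ``certain distribution'' be (a point mass on) this $\zeta^\star$, together with its aligned watermark. The hypothesis $q>\tfrac{9}{2\epsilon_w^2}\log\tfrac1\omega$ is exactly what guarantees $\mu<\omega$.

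It remains to pass from one sample to most samples. Once $\zeta^\star$ is fixed, the indicators $B_i:=\mathbf{1}[\,|{\zeta^\star}^Tx_i|\ge\tfrac13 q\epsilon_w\,]$ are i.i.d.\ Bernoulli with mean $p:=p(\zeta^\star)\le\mu<\omega$, since the $x_i$ are i.i.d.\ and $B_i$ depends only on $x_i$. The multiplicative Chernoff bound $\Prob(\sum_iB_i\ge(1+\eta)Np)\le\exp(-\tfrac{\eta^2Np}{2+\eta})$ with $1+\eta=\omega/p$ then simplifies to
\begin{equation*}
\Prob\!\Big(\textstyle\sum_{i=1}^N B_i\ge\omega N\Big)\ \le\ \exp\!\Big(-\tfrac{N(\omega-p)^2}{\omega+p}\Big)\ \le\ \exp\!\Big(-\tfrac{N(\omega-\mu)^2}{\omega+\mu}\Big),
\end{equation*}
the last step using that $t\mapsto(\omega-t)^2/(\omega+t)$ is decreasing on $(0,\omega)$ and $p\le\mu$. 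On the complement of this event at least $(1-\omega)N$ samples have $B_i=0$, which by the first paragraph is precisely $\zeta^T(x_i+\delta_i)>\tfrac23 q\epsilon_w$ and $\zeta^Tx_i<\tfrac13 q\epsilon_w$, as claimed.

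The main obstacle is this last passage: a single universal key must serve all $N$ samples, so the per-sample estimate --- which lives on the randomness of $\zeta$ --- cannot be applied sample by sample. The remedy is the averaging step, which freezes a key $\zeta^\star$ that is good on average over $\D_{\X}$ (such a key exists because the per-sample failure probability is uniformly $\le\mu$), after which the dataset's i.i.d.\ structure makes Chernoff available. The other thing to watch is the bookkeeping: one must pick the constants $\tfrac23,\tfrac13$ consistently with $\zeta^T\delta^w=q\epsilon_w$ so the two target inequalities merge into $|\zeta^Tx|<\tfrac13 q\epsilon_w$, and then track the Hoeffding constant to land on the stated $\tfrac29$ in the exponent. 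Extending the conclusion from ``most samples'' to all of $\D_{\X}$ for large $N$, if desired, would follow the same route as in the post-poisoning case and is not needed for the statement.
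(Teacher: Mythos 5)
Your proof is correct and, at the one delicate step, is handled more carefully than the paper's own argument. The paper also proceeds via per-sample concentration followed by Chernoff: it defines $F_i(\alpha)=\mathbb{I}\{\zeta^Tx_i>\alpha\}$, shows $\Prob[F_i(\alpha)=1]\le e^{-2\alpha^2/q}$ by a bounded-difference inequality, introduces auxiliary i.i.d.\ Bernoulli variables $\bar F_i$ with this success probability, applies the multiplicative Chernoff bound to $\sum_i\bar F_i$, and then asserts that the same tail bound holds for $\sum_iF_i$. That last step is where your approach diverges and improves: the $F_i$ all share the single key $\zeta$, so under the joint law of $(\zeta,x_1,\dots,x_N)$ they are not independent, and a direct comparison to i.i.d.\ Bernoullis is not automatic. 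You resolve this by freezing the key first --- since $\E_\zeta[p(\zeta)]\le\mu$ with $p(\zeta)=\Prob_{x\sim\D_\X}(|\zeta^Tx|\ge q\epsilon_w/3)$, some deterministic $\zeta^\star$ satisfies $p(\zeta^\star)\le\mu$, and once $\zeta^\star$ is fixed the failure indicators depend only on the i.i.d.\ $x_i$, so Chernoff applies legitimately; the monotonicity of $t\mapsto(\omega-t)^2/(\omega+t)$ on $(0,\omega)$ then converts the bound in $p(\zeta^\star)$ into the stated bound in $\mu=e^{-2q\epsilon_w^2/9}$. The trade-off is interpretational: your ``certain distribution'' is a point mass on $\zeta^\star$, an existence statement, whereas the paper intends its Rademacher-on-$\W$ key to succeed with high probability when actually sampled; an extra Markov step over $\zeta$ would recover that stronger reading at the cost of a constant in the exponent. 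You also correctly merge the two target inequalities into the single two-sided event $|\zeta^Tx_i|<q\epsilon_w/3$ (exploiting that $\zeta^T\delta_i=q\epsilon_w$ exactly, because $\zeta$ vanishes on $\P$), whereas the paper's one-sided $F_i(\alpha)$ only controls $\zeta^Tx_i\le\alpha$ and does not by itself yield the lower bound $\zeta^T(x_i+\delta_i)>\tfrac{2}{3}q\epsilon_w$.
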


\vspace{-3pt}
\begin{remark}
\label{rm-universal-post-key-most}
Theorem \ref{universal-post-key-most} indicates that for a sufficiently large $N$ and $q\gtrsim \frac{9}{2\epsilon_w^2}\log\frac{1}{\omega}=\Theta(1/\epsilon_w^2)$, the universal, poisoning-concurrent watermarking is effective for most samples with high probability.
Compared with Proposition \ref{universal-post-key}, the condition of the watermarking length $q$ in Theorem \ref{universal-post-key-most} is independent of the sample size $N$.
\end{remark}

\vspace{-2pt}


After establishing results for the finite case for most samples, we can extend these guarantees to the entire data distribution, as presented in the following theorem.

\begin{theorem}[Generalization of universal watermarking to distributional case]
\label{universal-pre-key-dist}

For the dataset $S_\X\!=\!\{ x_1,x_2,\!\cdots\!, x_N\!\}$, data $x_i$  and poison $\delta_p^i$ are i.i.d. sampled from $\D_{\X}$ and $\D_\P$ respectively. Consider a universal watermark $\delta^w$, with probability at least $1-2\mu$ for the sampled data and poisons, if there exists a key $\zeta$ that satisfies $\zeta^T(x_i+\delta_i)\!>\!C_1, \zeta^Tx_i\!<\!C_2$, for at least $(1-\omega) N$ samples $x_i$, it has
\vspace{-5pt}
\begin{align*}
\! \Prob_{x, \tilde{x}\sim\D_{\X}, \delta^p\sim\D_\P}\!\!\left(\left\{\zeta^T(x\!+\!\delta^p\!+\!\delta^w) \! >\! C_1, \!-\zeta^T \tilde{x}\!< \!C_2\right\}\right)\! > \!1\!-\!2\omega\!-\!2\sqrt{\frac{d}{N}\!\left(\log\frac{2N}{d}\!+\!1\right)\!-\!\frac{1}{N}\log\frac{\mu}{4}}.
\end{align*}
\end{theorem}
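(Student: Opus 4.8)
The target inequality is a uniform-convergence (generalization) bound, so the plan is to recast the family of detector decision regions as a Vapnik--Chervonenkis class of halfspaces and apply a standard VC bound. The first step is to isolate the relevant classes and distributions. For a fixed universal watermark $\delta^w$, the poisoned-data event $\{\zeta^T(x+\delta^p+\delta^w)>C_1\}$ is, viewed as a set of points $z:=x+\delta^p+\delta^w$, simply the halfspace $\{z:\zeta^Tz>C_1\}$, and the clean-data detection event is likewise a halfspace in $\tilde x$. As $\zeta$ ranges over $\R^d$ --- and, crucially, $\zeta$ is allowed to depend on the sample --- both families sit inside the class $\H$ of halfspaces of $\R^d$, whose VC dimension is $O(d)$ (in fact at most $d$, since the fixed offsets $C_1,C_2$ and the fixed shift $\delta^w$ only translate the separating hyperplane). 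I would then fix the two push-forwards on which $\H$ is to be measured: $P_1$, the law of $z=x+\delta^p+\delta^w$ with $x\sim\D_\X$ and $\delta^p\sim\D_\P$; and $P_2=\D_\X$. The hypotheses supply i.i.d.\ samples $z_i=x_i+\delta_i^p+\delta^w\sim P_1$ and $x_i\sim P_2$, $i\in[N]$.

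The core step is to apply the classical VC uniform-convergence bound once to each pair $(\H,P_j)$: with probability at least $1-\mu$ over the sample, every $h\in\H$ obeys $\bigl|\widehat{P}_j^{\,N}(h)-P_j(h)\bigr|\le\varepsilon_N$ with $\varepsilon_N=\sqrt{\tfrac{d}{N}\bigl(\log\tfrac{2N}{d}+1\bigr)-\tfrac{1}{N}\log\tfrac{\mu}{4}}$ --- the stated form arising from Sauer--Shelah applied to the growth function of $\H$ together with the usual symmetrization argument. A union bound over $j\in\{1,2\}$ produces an event $\mathcal E$ of probability at least $1-2\mu$ on which both uniform estimates hold simultaneously.

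Working on $\mathcal E$: if a key $\zeta$ meets the stated empirical guarantee on at least $(1-\omega)N$ of the samples, then the empirical frequency of $\{z:\zeta^Tz>C_1\}$ among the $z_i$, and the empirical frequency of the clean-data event among the $x_i$, are each at least $1-\omega$ (the set of $\ge(1-\omega)N$ indices on which both conditions hold is contained in the index sets on which each holds separately). Transferring through $\mathcal E$, the population probabilities $\Prob_{x,\delta^p}(\zeta^T(x+\delta^p+\delta^w)>C_1)$ and $\Prob_{\tilde x}(\text{clean event})$ are each at least $1-\omega-\varepsilon_N$. Since $(x,\delta^p)$ is independent of $\tilde x$ in the target statement, the joint event fails with probability at most the sum of the two individual failure probabilities, namely $2\omega+2\varepsilon_N$, which is exactly the claimed bound.

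The one genuinely important point --- the only place where intuition is needed rather than accounting --- is that uniform convergence over the \emph{entire} halfspace class is precisely what legitimizes a data-dependent choice of $\zeta$, so no single-$\zeta$ concentration inequality would suffice. Everything else is careful bookkeeping: verifying $\mathrm{VC}(\H)\le d$ after absorbing the offsets and the shift, reading the two empirical frequencies off the two distinct empirical measures $\widehat{P}_1^{\,N}$ and $\widehat{P}_2^{\,N}$ rather than one common measure, and picking the variant of the VC tail inequality whose constants reproduce $2\varepsilon_N$ exactly.
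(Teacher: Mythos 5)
Your proposal is correct and follows essentially the same route as the paper: both arguments apply a VC uniform-convergence bound (with $\mathrm{VC}$ dimension $d$ for the family of halfspaces/linear detectors parameterized by $\zeta$) separately to the poisoned-data event and the clean-data event, take a union bound over the two events to get the $1-2\mu$ confidence level, convert the $(1-\omega)N$ empirical guarantee into a population bound of $1-\omega-\varepsilon_N$ for each, and combine by a final union bound to obtain the claimed $1-2\omega-2\varepsilon_N$. The paper frames the two events as binary classifiers $h_1,h_2$ trained against constant labels $+1,-1$ and invokes an explicit VC bound on generalization error, while you phrase it as uniform convergence of empirical measures of halfspaces over the two push-forward laws $P_1,P_2$; these are the same bound written in two equivalent languages, so the content matches the paper's proof.
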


\begin{remark}
When the sample size $N$ is greater than $\Omega(d)$, the effectiveness of watermarks in the finite case can, with high probability, be generalized to the distributional case.
\end{remark}

Generalizing the universal watermarking from finite cases to distributional cases does not impose additional conditions on the watermarking length $q$.
For universal, post-poisoning watermarking, as noted in Remark \ref{rm-universal-pre-key-most}, an effective watermark for the distribution $\D_{\X}$ exists when $q=\Theta(\sqrt{d}/\epsilon_w), N=\Omega(d)$.
%
For universal, poisoning-concurrent watermarking, as noted in Remark \ref{rm-universal-post-key-most}, an effective watermark exists when $q=\Theta(1/\epsilon_w^2), N=\Omega(d)$.
%
%
 Compared with sample-wise watermarking, achieving effective universal watermarking for a data distribution does not require more watermarking length $q$. The only additional requirement is that the dataset size $N$ is not too small (at least $\Omega(d)$), which is a reasonable condition for generalization in practical scenarios. 

\vspace{-1pt}
\section{Soundness of Poisoning under Watermarking}
\label{sec-poi}

In this section, we prove that poisoning remains effective under watermarking for an $L$-layer feed-forward neural network. For simplicity, we focus on universal watermarking as it is more practical; similar properties also apply to sample-wise watermarking.
To facilitate theoretical analyses, we adopt the widely used Xavier normalization~\citep{glorot2010understanding} for network parameters, which is also employed in Neural Tangent Kernel (NTK)~\citep{jacot2018neural} and many other theoretical works~\citep{du2019gradient,huang2020dynamics,yu2022normalization, sirignano2022asymptotics, noci2024shaped}. The proofs for this section are provided in Appendix \ref{sec-poi-proof}.

Assume the (normalized) $L$-layer feed-forward neural network is $\F:\R^d\to\R$ defined as 
$\F(x)=W^L\frac{1}{\sqrt{d_{L-1}}}\ReLU(W^{L-1}\cdots \frac{1}{\sqrt{d_2}}\ReLU(W^2 \frac{1}{\sqrt{d_1}}\ReLU(W^1x+b^1)+b^2)+\cdots+b^{L-1})+b^L$
where $\ReLU(x)=\max(0,x)$ is the activation function, $W^l\in\R^{d_l\times d_{l-1}}$ and $b^l\in\R^{d_l}$ are the weight matrix and the bias term of the $l$-th layer respectively for $l\in[L]$. We consider a binary classification task where the data distribution $\D\in[0,1]^d\times\{-1,1\}$. Here $d_0=d$ and $d_L=1$. We also assume that $d_1\geq d$ as modern neural networks are typically larger and tend to be overparameterized~\citep{krizhevsky2012imagenet,belkin2019reconciling, brown2020language, achiam2023gpt}.
The loss function used is the cross-entropy loss: $\mathcal{L}(\F(x),y)=\log(1+e^{-y\cdot\F(x)})$.

\begin{definition}[Optimal Classifier]
We define the optimal classifier for a dataset $S$ under the hypothesis space $\F$ as $\F_{S}^*=\arg\min\limits_\F \frac{1}{|S|} \sum_{(x,y)\in S}\mathcal{L}(\F(x), y)$, where $\mathcal{L}$ is the loss function.
\end{definition}

\vspace{-3pt}
\begin{theorem}[Impact of Watermarking]
\label{generalization-watermarking}
With probability at least $1-2\omega$ for the poisoned dataset $\{(x_i',y_i)\}_{i=1}^N=S'\sim\D'$ and the key $\zeta\in\R^d$ selected from a certain distribution, 
we can craft the watermark $\delta^w$ satisfying:
\vspace{-5pt}
\begin{align*}
\!\!\mathcal{R}(\D'\!,\F_{S'+\delta^w}^*) \! &\leq \E_{\eta}\frac{1}{N}\sum_{i=1}^N\mathcal{L}(\F_{S'}^*(x_i'+\eta),y_i)\! \\&+ \! O\!\left(\!\sqrt{\frac{L
}{N}}\right) \!+\! O\!\left(\sqrt{\frac{\log{d}
}{N}}\right) \!+\! O\!\left(\sqrt{\frac{\log{1/\omega}
}{N}}\right)\! +\!  O\!\left( \epsilon_w\sqrt{\frac{q\log{1/\omega}}{d}} \right),
\end{align*}
where $S'\!+\!\delta^w\!=\!\{(x_i'\!+\!\delta^w\!,y_i)\}_{i=1}^N\!$ is the watermarked dataset, $\eta\!\sim\!\mathcal{U}\{-\epsilon_w, \epsilon_w\!\}^q$ is a random vector.
\end{theorem}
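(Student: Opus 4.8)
The plan is to control $\mathcal R(\D',\F^*_{S'+\delta^w})$ by reducing it to an empirical quantity built only from the \emph{untouched} optimal classifier $\F^*_{S'}$, and then bounding the two resulting error sources: a uniform-convergence term for the network class, and a concentration term over the randomness of the crafted watermark. Throughout I use the watermark produced by the constructions of Section~\ref{sec-wm-universal-wise}: it is supported on the $q$ indices $\W$, equals $\pm\epsilon_w$ on each of them, is drawn independently of the draw $S'\sim\D'$, and marginally has the same law as $\eta\sim\mathcal U\{-\epsilon_w,\epsilon_w\}^q$. The first observation is a shift-invariance: because replacing the input $x$ by $x-\delta^w$ only changes the first-layer bias $b^1$ into $b^1-W^1\delta^w$, the map $f(\cdot)\mapsto f(\cdot-\delta^w)$ sends the hypothesis class onto itself, so choosing minimizers consistently gives $\F^*_{S'+\delta^w}(x)=\F^*_{S'}(x-\delta^w)$. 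Writing $h:=\F^*_{S'}(\cdot-\delta^w)\in\F$, this means $\mathcal R(\D',\F^*_{S'+\delta^w})=\mathcal R(\D',h)$ and $\tfrac1N\sum_i\mathcal L(h(x_i'),y_i)=\tfrac1N\sum_i\mathcal L(\F^*_{S'}(x_i'-\delta^w),y_i)$.

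Second, I would invoke the standard symmetrization/Rademacher generalization bound for the class $\F$ against the $1$-Lipschitz cross-entropy loss over the $N$ i.i.d. points $S'\sim\D'$: with probability at least $1-\omega$, $\sup_{f\in\F}\big(\mathcal R(\D',f)-\tfrac1N\sum_i\mathcal L(f(x_i'),y_i)\big)\le 2\mathfrak R_N(\F)+O(\sqrt{\log(1/\omega)/N})$. Talagrand's contraction strips the loss, and a layer-by-layer peeling (equivalently, a covering-number) estimate for the Xavier-normalized $L$-layer ReLU network gives $\mathfrak R_N(\F)=O(\sqrt{L/N})+O(\sqrt{\log d/N})$: the normalization keeps each hidden layer's operator-norm contribution $O(1)$, so peeling only costs a $\sqrt L$ factor, while the first (unnormalized) layer absorbs the $d$ input coordinates through a $\sqrt{\log d}$ covering term. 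Instantiating the bound at the particular $f=h$ gives $\mathcal R(\D',\F^*_{S'+\delta^w})\le \tfrac1N\sum_i\mathcal L(\F^*_{S'}(x_i'-\delta^w),y_i)+O(\sqrt{L/N})+O(\sqrt{\log d/N})+O(\sqrt{\log(1/\omega)/N})$.

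Third, I would pass from the specific crafted shift $-\delta^w$ to its $\eta$-average. Condition on $S'$, hence on $\F^*_{S'}$; since $\delta^w\perp S'$, the quantity $G(\delta^w):=\tfrac1N\sum_i\mathcal L(\F^*_{S'}(x_i'-\delta^w),y_i)$ is a function of the $q$ independent sign coordinates of $\delta^w$, and flipping the $j$-th coordinate changes it by at most $c_j=2\epsilon_w\sup_\xi|\partial_{x_j}\F^*_{S'}(\xi)|$ (by $1$-Lipschitzness of $\mathcal L$). The Xavier normalization controls the per-coordinate input sensitivity of every network in $\F$, so $\sum_{j\in\W}c_j^2=O(q\epsilon_w^2/d)$, and McDiarmid's bounded-differences inequality yields, with probability at least $1-\omega$ over $\delta^w$, $G(\delta^w)\le \E_\eta G(\eta)+O\big(\epsilon_w\sqrt{q\log(1/\omega)/d}\big)$, where $\E_\eta G(\eta)=\E_\eta\tfrac1N\sum_i\mathcal L(\F^*_{S'}(x_i'+\eta),y_i)$ because $-\eta\overset{d}{=}\eta$. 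A union bound over the two failure events of probability $\omega$ then gives probability $1-2\omega$, and chaining the three displays on the good event produces the claimed inequality. Finally, the detectability guarantees of Section~\ref{sec-wm-universal-wise} hold with high probability for exactly this family of $\pm\epsilon_w$ watermarks, so their high-probability event intersects the one above and a single craftable $\delta^w$ delivers both conclusions.

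The main obstacle is the generalization step: I need the Rademacher complexity of the $L$-layer Xavier-normalized ReLU class to scale as $\sqrt{L/N}$ with no dependence on the hidden widths — which requires exploiting the normalization through a careful peeling/covering argument rather than a naive product of per-layer complexities — and I need the input-dimension cost to come out only as the mild $\sqrt{\log d/N}$. A secondary point is the sensitivity estimate $\sup_\xi|\partial_{x_j}\F^*_{S'}(\xi)|=O(1/\sqrt d)$ behind the McDiarmid coefficients, which again has to be read off from the architecture's normalization; and if the hypothesis class constrains the biases tightly, Step~1's exact shift-invariance would be replaced by an algorithmic-stability comparison of $\F^*_{S'+\delta^w}$ with $\F^*_{S'}$, leading to the same bound.
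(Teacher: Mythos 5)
Your proposal is correct and follows essentially the same route as the paper: the same shift-invariance identity $\F^*_{S'+\delta^w}(x)=\F^*_{S'}(x-\delta^w)$, the same Rademacher/Talagrand generalization bound for the Xavier-normalized class yielding the $\sqrt{L/N}$ and $\sqrt{\log d/N}$ terms, and the same McDiarmid concentration over the sign pattern of $\delta^w$ (with the $O(1/\sqrt d)$ per-coordinate sensitivity from normalization) to pass to the $\eta$-expectation, followed by a union bound. The only cosmetic differences are that you justify the shift-invariance by closure of the class under input translation rather than the paper's minimality-of-$a$ argument, and you apply the generalization bound before the McDiarmid step rather than after, which commutes and gives the same result.
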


\begin{remark}
Since $\eta$ is a random noise under budget $\epsilon_w$, the optimal classifier $\F_{S'}^*$ tends to have small loss under perturbation $\eta$, resulting in $\E_\eta L(\F_{S'}^*(x_i+\eta),y_i)$ being small.
Furthermore, if  $d$ and $N$ are large enough, four error terms in Theorem \ref{generalization-watermarking} are all small when the post-poisoning condition in Section \ref{sec-wm-universal-wise}, $q=\Theta(\sqrt{d}/\epsilon_w)$ holds, resulting in a small $\mathcal{R}(\D',\F_{S'+\delta^w}^*)$.
\end{remark}

\vspace{-2pt}

To ensure the soundness of watermarked poisoning, we assume that the (un-watermarked) poisoning distribution $\D'$ is effective. First, we provide the definition of an effective poisoning distribution.

\begin{assumption}[$(\lambda,\mu)$-effective poisoning distribution]
\label{ass-eff-poi}
A poisoning distribution $\D'$ is called $(\lambda,\mu)$-effective (for victim distribution $\D^{\victim}$ and poisoning objective risk $\mathcal{R}^{\poi}$), if $\mathcal{R}^{\poi}(\D^{\victim}, \F) \leq \lambda$ holds for network $\F$ where $\mathcal{R}(\D', \F)\leq\mu$.
\end{assumption}


\vspace{-1pt}
To quantify the performance of the poisoning algorithm,
we measure how well the network $\F$ has learned poison features and achieves the poisoning objective by $\mathcal{R}(\D', \F)\leq\mu$ and $\mathcal{R}^{\poi}(\D^{\victim}, \F) \leq \lambda$ respectively.
In practice, an effective poisoning method should generate $(\lambda,\mu)$-effective poisoning distribution with small $\mu$ and $\lambda$.
It is reasonable to assume $(\lambda,\mu)$-effective poisoning distribution $\D'$ can be generated by some existing heuristic algorithm.
For example, previous works~\citep{sandoval2022poisons,zhu2024detection} have demonstrated that victim models with low test accuracy (small $\lambda$) learn poisoning features well (small $\mu$) under availability attacks.
By Theorem \ref{generalization-watermarking}, if  $N$ and $d$ are large,  
$\mathcal{R}(\D',\F_{S'+\delta^w}^*)$ is small enough, 
thus a well-trained network on watermarked dataset $S'+\delta^w$ will result in lower $\mathcal{R}(\D',\F)$, ensuring the soundness of post-poisoning watermarking through the following corollary:

\begin{corollary}[Post-poisoning watermarking]
If $\D'$ is a $(\lambda,\mu)$-effective poisoning distribution for some $\mu>0$, when $N$ and $d$ are sufficiently large, with high probability, network $\F$ trained on post-poisoning watermarking dataset $S'+\delta_w=\{(x_i'+\delta^w,y_i)\}_{i=1}^N$ holds that 
$\mathcal{R}^{\poi}(\D^{\victim},\F)\leq \lambda$.
\end{corollary}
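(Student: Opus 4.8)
The plan is to chain together Theorem~\ref{generalization-watermarking}, the remark following it, and Assumption~\ref{ass-eff-poi}. The corollary to prove states that if $\D'$ is $(\lambda,\mu)$-effective, then a network $\F$ trained on the watermarked dataset $S'+\delta^w$ satisfies $\mathcal{R}^{\poi}(\D^{\victim},\F)\le\lambda$ with high probability when $N$ and $d$ are large. By Assumption~\ref{ass-eff-poi}, it suffices to show that $\mathcal{R}(\D',\F)\le\mu$ for the trained network; the poisoning-objective bound then follows immediately. So the entire task reduces to controlling $\mathcal{R}(\D',\F)$ where $\F$ is (an approximation of) $\F_{S'+\delta^w}^*$, the optimal classifier on the watermarked data.

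First I would invoke Theorem~\ref{generalization-watermarking}, which under the stated high-probability event (at least $1-2\omega$ over the draw of $S'\sim\D'$ and the key $\zeta$) gives a watermark $\delta^w$ with
\[
\mathcal{R}(\D',\F_{S'+\delta^w}^*) \le \E_{\eta}\frac{1}{N}\sum_{i=1}^N\mathcal{L}(\F_{S'}^*(x_i'+\eta),y_i) + O\!\left(\sqrt{\tfrac{L}{N}}\right) + O\!\left(\sqrt{\tfrac{\log d}{N}}\right) + O\!\left(\sqrt{\tfrac{\log(1/\omega)}{N}}\right) + O\!\left(\epsilon_w\sqrt{\tfrac{q\log(1/\omega)}{d}}\right).
\]
Next I would bound each term on the right. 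The three middle terms are $O(\sqrt{L/N})$, $O(\sqrt{\log d/N})$, $O(\sqrt{\log(1/\omega)/N})$ and vanish as $N\to\infty$. For the last term, I substitute the post-poisoning regime $q=\Theta(\sqrt{d}/\epsilon_w)$ from Section~\ref{sec-wm-universal-wise}, which makes $\epsilon_w\sqrt{q\log(1/\omega)/d} = \Theta(\sqrt{\epsilon_w}\, d^{-1/4}\sqrt{\log(1/\omega)})\to 0$ as $d\to\infty$. For the leading term, I appeal to the remark after Theorem~\ref{generalization-watermarking}: since $\eta$ is mean-bounded noise within the watermark budget $\epsilon_w$ and $\F_{S'}^*$ minimizes empirical loss on $S'$, the robust quantity $\E_\eta\frac{1}{N}\sum_i\mathcal{L}(\F_{S'}^*(x_i'+\eta),y_i)$ is itself small — I would make this precise by noting $\F_{S'}^*$ drives the clean empirical loss to (near) zero and that a small $\ell_\infty$ perturbation of magnitude $\epsilon_w$ changes each logit by a controlled amount (via the Lipschitz/Xavier-normalization structure of $\F$ used throughout Section~\ref{sec-poi}), so the expected perturbed loss is bounded by a quantity that can be absorbed into $\mu$. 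Collecting everything, for $N$ and $d$ sufficiently large the entire right-hand side is $\le\mu$, so $\mathcal{R}(\D',\F_{S'+\delta^w}^*)\le\mu$.

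Finally, I would identify $\F$ in the corollary statement with $\F_{S'+\delta^w}^*$ (a network "trained on" the watermarked dataset is exactly the optimal classifier of Definition, or is within a negligible optimization gap of it, which can again be folded into $\mu$). Applying Assumption~\ref{ass-eff-poi} with this $\F$ yields $\mathcal{R}^{\poi}(\D^{\victim},\F)\le\lambda$ on the $(1-2\omega)$-probability event, completing the proof. The main obstacle I anticipate is making the ``leading term is small'' step rigorous: Theorem~\ref{generalization-watermarking} bounds $\mathcal{R}(\D',\F_{S'+\delta^w}^*)$ in terms of the \emph{robust} empirical loss of $\F_{S'}^*$, not its clean loss, and one must argue carefully — using the network's smoothness and the smallness of $\epsilon_w$ — that robustifying against budget-$\epsilon_w$ noise does not blow this quantity up beyond $\mu$. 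One clean way around this is to strengthen the hypothesis slightly (assume $\D'$ remains $(\lambda,\mu')$-effective for the $\epsilon_w$-smoothed version of $\D'$, with $\mu'$ only marginally larger than $\mu$), which is natural since $\epsilon_w$ is imperceptible; I would state this as the operative reading of ``sufficiently large'' and ``high probability'' in the corollary.
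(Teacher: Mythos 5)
Your proposal matches the paper's intended argument exactly: the paper gives no separate proof of this corollary, deriving it from the chain Theorem~\ref{generalization-watermarking} $\Rightarrow$ "$\mathcal{R}(\D',\F_{S'+\delta^w}^*)$ small" $\Rightarrow$ Assumption~\ref{ass-eff-poi} $\Rightarrow$ "$\mathcal{R}^{\poi}(\D^{\victim},\F)\le\lambda$", with the error terms dispatched exactly as you do (three vanish in $N$, the fourth vanishes under $q=\Theta(\sqrt d/\epsilon_w)$) and the leading robust-empirical-loss term handled only by the paper's informal remark following Theorem~\ref{generalization-watermarking}. The soft spot you flag in your final paragraph — that the leading term $\E_\eta\frac{1}{N}\sum_i\mathcal{L}(\F_{S'}^*(x_i'+\eta),y_i)$ is a robust loss that must be argued $\le\mu$, which the paper asserts rather than proves — is a gap in the paper's own argument, not a deficiency in your reconstruction; your suggested fix (a smoothed-distribution strengthening of the $(\lambda,\mu)$-effectiveness assumption) is a sensible way to make the corollary fully rigorous.
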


\vspace{-1pt}
However, when we consider poisoning-concurrent watermarking, the dimension of the poisons $\delta^p$ is restricted under $\P\subset[d]$. In this case, we need to further bound the risk of $\D'$ under the restricted poisoned dataset $S'|_\P=\{(x_i+\delta_i^p|_{\P},y_i)\}_{i=1}^N$, which induces the following theorem:

\begin{theorem}[Impact of Poisoning dimension]
\label{poi-dimension-eff}
With probability at least $1-\omega$ of the (unrestricted) poisoned dataset $\{(x_i+\delta_i^p,y_i)\}_{i=1}^N=S'\sim\D'$,
it holds that
\vspace{-1pt}
{\small\begin{align*}
&\!\mathcal{R}(\D',\F_{S'|_\P}^*) \!\leq\!\frac{1}{N}\!\!\sum_{i=1}^N \!\mathcal{L}(\F_{S'|_\P}^*(x_i+\delta_i^p|_{\P}),y_i)\!+\! O\!\left(\!\frac{q\epsilon_p}{\sqrt{d}}\!\right) \!\!+\!O\!\!\left(\!\!\sqrt{\frac{\log{d}\!
}{\!N\!}}\right)\!\! + \!O\!\!\left(\!\!\sqrt{\frac{\!L\!
}{\!N\!}}\right) \!\!+\! O\!\!\left(\!\!\sqrt{\frac{\log{1/\omega\!}
}{N}}\right).
\end{align*}}
\vspace{-3pt}
\end{theorem}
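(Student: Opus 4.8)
The plan is to bound $\mathcal{R}(\D',\F_{S'|_{\P}}^*)$ by comparing it with the risk of the same classifier under the \emph{restricted} poisoning distribution $\D'|_{\P}$ (the law of $(x+\delta_x^p|_{\P},y)$), for which $S'|_{\P}$ is an i.i.d.\ sample. Writing $\widehat{\mathcal R}_{S'|_{\P}}(\F)=\frac1N\sum_{i=1}^N\mathcal{L}(\F(x_i+\delta_i^p|_{\P}),y_i)$, the argument splits as
\[
\mathcal{R}(\D',\F_{S'|_{\P}}^*)\;\le\;\underbrace{\mathcal{R}(\D'|_{\P},\F_{S'|_{\P}}^*)}_{\text{(I)}}\;+\;\underbrace{\bigl|\mathcal{R}(\D',\F_{S'|_{\P}}^*)-\mathcal{R}(\D'|_{\P},\F_{S'|_{\P}}^*)\bigr|}_{\text{(II)}},
\]
where (I) will produce the empirical term $\widehat{\mathcal R}_{S'|_{\P}}(\F_{S'|_{\P}}^*)$ together with the three terms $O(\sqrt{L/N})$, $O(\sqrt{\log d/N})$, $O(\sqrt{\log(1/\omega)/N})$, and (II) will produce the remaining term $O(q\epsilon_p/\sqrt d)$.

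For (I), I would apply a uniform-convergence bound over the hypothesis class of $L$-layer Xavier-normalized networks --- precisely the bound used in the proof of Theorem~\ref{generalization-watermarking} --- and instantiate it at $\F=\F_{S'|_{\P}}^*$. Since $S'|_{\P}$ is i.i.d.\ from $\D'|_{\P}$, with probability at least $1-\omega$ over the draw of $S'$ (hence of $S'|_{\P}$) one gets $\mathcal{R}(\D'|_{\P},\F_{S'|_{\P}}^*)\le\widehat{\mathcal R}_{S'|_{\P}}(\F_{S'|_{\P}}^*)+2\,\bigl(\text{Rademacher complexity of }\mathcal{L}\circ\F\bigr)+O\bigl(\sqrt{\log(1/\omega)/N}\bigr)$, the last term coming from a bounded-differences (McDiarmid) argument applied to the gap functional. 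Controlling the Rademacher complexity uses that $\mathcal{L}(\cdot,y)$ and $\ReLU$ are $1$-Lipschitz and that each normalized layer $\tfrac{1}{\sqrt{d_{l-1}}}W^l$ has $O(1)$ operator norm under Xavier scaling; this yields the $O(\sqrt{L/N})$ contribution (the depth) and the $O(\sqrt{\log d/N})$ contribution (the ambient input dimension, entering through the first layer). The term $\widehat{\mathcal R}_{S'|_{\P}}(\F_{S'|_{\P}}^*)$ is exactly the leading term of the claimed inequality.

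For (II), use that $t\mapsto\log(1+e^{-yt})$ is $1$-Lipschitz in $t$, so for every $(x,y)$,
\[
\bigl|\mathcal{L}(\F(x+\delta_x^p),y)-\mathcal{L}(\F(x+\delta_x^p|_{\P}),y)\bigr|\;\le\;\bigl|\F(x+\delta_x^p)-\F(x+\delta_x^p|_{\P})\bigr|,
\]
and the two inputs differ only on the $q$ coordinates indexed by $\W$, by a vector $z$ with $\|z\|_\infty\le\epsilon_p$, hence $\|z\|_1\le q\epsilon_p$. It remains to bound $|\F(u)-\F(u-z)|$: the first (Xavier-scaled) layer maps $z$ to a hidden-layer perturbation of $\ell_2$-norm at most $\|z\|_1\max_{j}\|W^1_{:,j}\|_2=O(q\epsilon_p\sqrt{d_1/d})$ (each column of $W^1$ has $\ell_2$-norm $O(\sqrt{d_1/d})$ under Xavier scaling with fan-in $d$), $\ReLU$ is $1$-Lipschitz coordinatewise, the intermediate normalized layers are $O(1)$-Lipschitz, and the final factor $\tfrac{1}{\sqrt{d_{L-1}}}$ cancels the width factor $\sqrt{d_1}$; combining, $|\F(u)-\F(u-z)|=O(q\epsilon_p/\sqrt d)$ uniformly in $x$. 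Taking expectations over $(x,y)$ gives $(\mathrm{II})=O(q\epsilon_p/\sqrt d)$, and summing the bounds for (I) and (II) proves the theorem. (This term carries no $\log(1/\omega)$ factor, unlike its watermark analogue in Theorem~\ref{generalization-watermarking}, because the poison here is an arbitrary fixed $q$-sparse perturbation rather than a randomly keyed one.)

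The main obstacle is the propagation estimate in (II): the operator-norm and column-norm bounds it relies on are automatic at Xavier \emph{initialization}, but $\F_{S'|_{\P}}^*$ is a \emph{trained} minimizer, so one must argue --- as is implicit in the NTK-style regime assumed throughout this section and as already handled for Theorem~\ref{generalization-watermarking} --- that these quantities stay controlled at the optimum (equivalently, restricting to the norm-bounded sub-class provably containing the minimizer), and that the width factor $\sqrt{d_1}$ accumulated through the hidden layers is exactly absorbed by the output normalization so the bound does not blow up with width. A secondary technical point is that the logistic loss is unbounded, so before invoking McDiarmid and the Rademacher bound in (I) one needs an a priori bound on $|\F_{S'|_{\P}}^*(x)|$ for $x\in[0,1]^d$ (again supplied by the normalization), after which the loss is effectively bounded and the stated $O(\sqrt{\log(1/\omega)/N})$ slack is legitimate.
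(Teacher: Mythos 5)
Your proof is correct and is essentially the paper's argument with the two steps applied in the opposite order: the paper first invokes the Rademacher/McDiarmid uniform-convergence bound to pass from $\mathcal{R}(\D',\F_{S'|_\P}^*)$ to the empirical risk $\frac1N\sum_i\mathcal{L}(\F_{S'|_\P}^*(x_i+\delta_i^p),y_i)$ on the \emph{unrestricted} sample, and then uses the pointwise Lipschitz/propagation estimate to replace $\delta_i^p$ by $\delta_i^p|_{\P}$ at cost $O(q\epsilon_p/\sqrt d)$; you apply the same Lipschitz estimate at the population level ($\D'$ vs.\ $\D'|_{\P}$) first, then run uniform convergence for the pair $(\D'|_{\P},S'|_{\P})$. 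Since the Lipschitz bound is uniform over data points, the two decompositions are interchangeable and yield identical error terms, so this is a reordering rather than a different proof.
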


In the case of poisoning-concurrent watermarking, if $N$ is large, and $q=O\left(\sqrt{d}/\epsilon_p\right)$, 
then $\mathcal{R}(\D',\F_{S'|_\P}^*)$  becomes sufficiently small. Thus,
a well-trained network $\F$ on a restricted poisoned dataset $S'|_\P$ tends to have a small risk under the (unrestricted) poisoning distribution $\D'$. 
Therefore, combined with Theorem \ref{generalization-watermarking}, we can directly obtain the following corollary:
\begin{corollary}
\label{cor-5.8}
With probability at least $1-3\omega$ for the restricted poisoned dataset $S'|_{\P}\sim\D'|_\P$ and the key $\zeta\in\R^d$ selected from a certain distribution, 
we can craft the watermark $\delta^w$ satisfying:
\vspace{-3pt}
\begin{align*}
&\mathcal{R}(\D',\F_{\tilde{S}}^*)\leq \E_\eta\frac{1}{N}\sum_{i=1}^N\mathcal{L}\left(\F_{S|_{\P}}^*(x_i+\delta_i^p|_{\P}+\eta),y_i\right) \\ &+  O\left(\sqrt{\frac{L
}{N}}\right) + O\left(\sqrt{\frac{\log{d}
}{N}}\right) + O\left(\sqrt{\frac{\log{1/\omega}
}{N}}\right) + O\left(\frac{q\epsilon_p}{\sqrt{d}}\right) + O\left( \epsilon_w\sqrt{\frac{q\log{1/\omega}}{d}} \right),
\end{align*}
where $\tilde{S}=S|_\P+\delta^w$ is the watermarked dataset, $\eta\!\sim\!\mathcal{U}\{-\epsilon_w, \epsilon_w\!\}^q$ is a random vector.
\end{corollary}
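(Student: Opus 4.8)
\textbf{Proof proposal for Corollary~\ref{cor-5.8}.}

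The plan is to chain together Theorem~\ref{poi-dimension-eff} and Theorem~\ref{generalization-watermarking}, applied not to the original poisoned dataset but to the dimension-restricted poisoned dataset $S'|_\P$, and then control the total failure probability by a union bound. First I would invoke Theorem~\ref{poi-dimension-eff}: with probability at least $1-\omega$ over $S'\sim\D'$, the restricted-dimension optimal classifier $\F_{S'|_\P}^*$ satisfies
\begin{align*}
\mathcal{R}(\D',\F_{S'|_\P}^*)\leq \frac{1}{N}\sum_{i=1}^N\mathcal{L}(\F_{S'|_\P}^*(x_i+\delta_i^p|_\P),y_i)+O\!\left(\frac{q\epsilon_p}{\sqrt{d}}\right)+O\!\left(\sqrt{\frac{\log d}{N}}\right)+O\!\left(\sqrt{\frac{L}{N}}\right)+O\!\left(\sqrt{\frac{\log 1/\omega}{N}}\right).
\end{align*}
This already accounts for the $O(q\epsilon_p/\sqrt d)$ term appearing in the corollary. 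The remaining task is to further bound $\mathcal{R}(\D',\F_{\tilde S}^*)$, where $\tilde S=S|_\P+\delta^w$ is the watermarked version of the restricted dataset, in terms of $\mathcal{R}(\D',\F_{S'|_\P}^*)$ — or more precisely, to bound it directly by the empirical smoothed loss on $\tilde S$ plus error terms.

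Next I would apply the argument of Theorem~\ref{generalization-watermarking} with the role of the ``poisoned dataset $S'$'' played by the restricted dataset $S'|_\P=\{(x_i+\delta_i^p|_\P,y_i)\}_{i=1}^N$. Theorem~\ref{generalization-watermarking} says that after injecting a universal watermark $\delta^w$ (craftable from a key $\zeta$ sampled from the appropriate distribution, with its own failure probability $2\omega$), the generalization risk of the optimal classifier on the watermarked dataset is bounded by the $\eta$-smoothed empirical loss of the pre-watermark optimal classifier plus the four error terms $O(\sqrt{L/N})$, $O(\sqrt{\log d/N})$, $O(\sqrt{\log(1/\omega)/N})$, and $O(\epsilon_w\sqrt{q\log(1/\omega)/d})$. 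Substituting $S'|_\P$ for $S'$ throughout, the pre-watermark optimal classifier is $\F_{S|_\P}^*$ and the smoothed empirical loss becomes $\E_\eta\frac1N\sum_i\mathcal{L}(\F_{S|_\P}^*(x_i+\delta_i^p|_\P+\eta),y_i)$, which is precisely the leading term in the corollary. One should check that the proof of Theorem~\ref{generalization-watermarking} only uses the facts that the data lie in $[0,1]^d$ (still true after restricting perturbation coordinates, since $\|\delta_i^p|_\P\|_\infty\le\epsilon_p$ and the construction keeps points in $[0,1]^d$) and that $\delta^w$ is supported on the $q$ watermark coordinates $\W$ with $\|\delta^w\|_\infty\le\epsilon_w$ — both hold here because in the poisoning-concurrent setting $\W$ and $\P$ are disjoint by construction, so the watermark and the restricted poison never interfere.

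The final step is bookkeeping: combine the two bounds and take a union bound over the three independent failure events — the $1-\omega$ event from Theorem~\ref{poi-dimension-eff} (over the draw of $S'|_\P\sim\D'|_\P$), the $1-2\omega$ event from Theorem~\ref{generalization-watermarking} (over the draw of the dataset and the key $\zeta$), giving total success probability at least $1-3\omega$ as claimed. Strictly, to avoid double-counting one notes that the dataset draw is shared, so the only genuinely new randomness in the second application is the key; but taking the crude union bound $1-\omega-2\omega=1-3\omega$ is safe. I expect the main obstacle to be a bit delicate rather than hard: verifying that Theorem~\ref{generalization-watermarking} can be applied verbatim with $S'|_\P$ in place of $S'$, i.e., that restricting the poison to coordinates $\P$ does not break any structural assumption used in its proof (in particular the Rademacher-complexity / covering-number estimates for the $L$-layer network and the smoothing argument for the watermark coordinates). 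Once that is granted, the corollary follows by simply adding the two displayed inequalities and absorbing constants into the $O(\cdot)$ terms.
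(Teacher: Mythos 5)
There is a genuine gap in the chaining step. You apply Theorem~\ref{generalization-watermarking} with $S'|_\P$ playing the role of $S'$, and read off a bound whose right-hand side is $\E_\eta\frac1N\sum_i\mathcal{L}(\F_{S|_\P}^*(x_i+\delta_i^p|_\P+\eta),y_i)+\cdots$, calling this "precisely the leading term in the corollary." But in Theorem~\ref{generalization-watermarking} the distribution on the left-hand side is the one from which the dataset is drawn: substituting $S'|_\P\sim\D'|_\P$ yields a bound on $\mathcal{R}(\D'|_\P,\F_{\tilde S}^*)$, not on $\mathcal{R}(\D',\F_{\tilde S}^*)$. The corollary is a claim about the risk under the \emph{unrestricted} poisoning distribution $\D'$ (test points $x+\delta^p$ with the full $[d]$-dimensional poison), and nothing in your argument bridges $\mathcal{R}(\D'|_\P,\cdot)$ to $\mathcal{R}(\D',\cdot)$. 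Your first step, Theorem~\ref{poi-dimension-eff}, does not supply that bridge either: it bounds $\mathcal{R}(\D',\F_{S'|_\P}^*)$, which concerns a different classifier than $\F_{\tilde S}^*$, and its right-hand side does not match up with either side of your second bound. "Simply adding the two displayed inequalities" is therefore not a valid step — the two inequalities do not form a chain $A\le B\le C$; they are two unrelated statements about two different risks.

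The paper's proof closes the circle by applying the two ingredients in the opposite order, and by using the key estimate \emph{inside} Theorem~\ref{poi-dimension-eff}'s proof rather than its conclusion. Concretely: it first runs the Rademacher/McDiarmid machinery of Theorem~\ref{generalization-watermarking} with the classifier $\F_{\tilde S}^*$ evaluated on the \emph{fully} poisoned points $x_i+\delta_i^p$, together with the shift identity $\F_{\tilde S}^*=\F_{S|_\P}^*(\cdot-\delta^w)$, to obtain $\mathcal{R}(\D',\F_{\tilde S}^*)\le\E_\eta\frac1N\sum_i\mathcal{L}\bigl(\F_{S|_\P}^*(x_i+\delta_i^p+\eta),y_i\bigr)+O(\sqrt{L/N})+O(\sqrt{\log d/N})+O(\sqrt{\log(1/\omega)/N})+O(\epsilon_w\sqrt{q\log(1/\omega)/d})$, with the full $\delta_i^p$ (not $\delta_i^p|_\P$) still in the argument and $\D'$ unrestricted on the left. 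Only then does it invoke the first-layer Lipschitz estimate from the proof of Theorem~\ref{poi-dimension-eff}, $\mathcal{L}(\F_{S|_\P}^*(x_i+\delta_i^p+\eta),y_i)\le\mathcal{L}(\F_{S|_\P}^*(x_i+\delta_i^p|_\P+\eta),y_i)+O(q\epsilon_p/\sqrt d)$, to trade the full poison for the restricted one, which is where the extra $O(q\epsilon_p/\sqrt d)$ term enters. The union bound over the $1-2\omega$ and $1-\omega$ events then gives $1-3\omega$. If you want to keep your reversed ordering, you would instead need an explicit Lipschitz estimate of the form $\mathcal{R}(\D',\F_{\tilde S}^*)\le\mathcal{R}(\D'|_\P,\F_{\tilde S}^*)+O(q\epsilon_p/\sqrt d)$ to replace your use of Theorem~\ref{poi-dimension-eff}; the current step does not accomplish that.
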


\begin{table*}[t]
\vspace{-15pt}
\centering
\small
\caption{The clean accuracy (Acc,\%), attack success rate (ASR,\%), and AUROC of Narcissus and AdvSc backdoor attacks on both post-poisoning watermarking and poisoning-concurrent watermarking with different watermarking length $q$ under ResNet-18 and CIFAR-10.
}
\label{tab:backdoor-acc}
\setlength{\tabcolsep}{3pt}
\begin{tabular}{lllllll}
\toprule
         Length/Method   & \multicolumn{2}{c}{Narcissus~\citep{zeng2023narcissus}} & \multicolumn{2}{c}{AdvSc~\citep{yu2024generalization}} \\
         Acc/ASR/AUROC($\uparrow$)   & Post-Poisoning & Poisoning-Concurrent &  Post-Poisoning & Poisoning-Concurrent\\
\midrule
0(Baseline)  & 94.69/95.04/- & 94.69/95.04/- & 92.80/95.53/- & 92.80/95.53/- \\
100  & 94.55/93.01/0.5522 & 95.12/91.30/0.9294 & 93.34/98.23/0.8036 & 92.91/96.81/0.9679 \\
300 & 94.38/91.34/0.8226 & 94.61/96.47/0.9778 & 92.82/96.48/0.8779 & 93.05/95.23/0.9955 \\
500 & 94.95/93.11/0.9509 & 94.70/95.03/0.9968 & 93.18/97.43/0.9218 & 92.89/95.79/0.9986 \\
1000  & 94.40/92.43/0.9974 & 94.32/92.03/0.9992 & 93.05/94.41/0.9809 & 93.38/84.39/0.9995 \\
1500  & 93.90/91.05/0.9997 & 94.67/80.60/1.0000 & 93.46/90.85/0.9959 & 93.11/56.11/1.0000 \\
2000  & 94.55/90.37/1.0000 & 94.89/22.46/1.0000 & 93.40/79.97/0.9994 & 92.38/30.05/1.0000 \\
2500  & 94.81/93.30/1.0000
 & 94.67/11.86/1.0000 & 92.78/82.89/1.0000 & 92.65/12.14/1.0000 \\
3000  & 94.93/90.02/1.0000
 & 94.72/\phantom{0}9.75/1.0000 & 93.10/74.82/1.0000 & 93.04/\phantom{0}9.97/1.0000 \\
\bottomrule
\end{tabular}
\vspace{-10pt}
\end{table*}

After obtaining Corollary \ref{cor-5.8}, similar to post-poisoning watermarking, we can ensure the soundness of poisoning-concurrent watermarking by the following corollary:
\begin{corollary}[Poisoning-concurrent watermarking]
\label{cor-5.9}
If $\D'$ is $(\lambda,\mu)$-effective for some $\mu>0$, when $N$ and $d$ are sufficiently large, $q=O\left(\sqrt{d}/\epsilon_p\right)$,  with high probability, the network $\F$ is trained on a poisoning-concurrent watermarking dataset $\{x_i+\delta_i^p\oplus\delta^w,y_i\}_{i=1}^N$ that satisfies
$\mathcal{R}^{\poi}\left(\D^{\victim},\F\right)\leq \lambda$.
\end{corollary}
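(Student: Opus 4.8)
\textbf{Proof proposal for Corollary \ref{cor-5.9}.}

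The plan is to chain together three ingredients already established in this section: Corollary \ref{cor-5.8} (which bounds $\mathcal{R}(\D',\F_{\tilde S}^*)$ for the watermarked, dimension-restricted poisoned dataset), the $(\lambda,\mu)$-effectiveness assumption on $\D'$ (Assumption \ref{ass-eff-poi}), and a generalization argument that transfers the empirical-risk guarantee on the trained network $\F$ to the population risk $\mathcal{R}(\D',\F)$. The target is to show $\mathcal{R}(\D',\F)\le\mu$ with high probability, after which Assumption \ref{ass-eff-poi} immediately yields $\mathcal{R}^{\poi}(\D^{\victim},\F)\le\lambda$, which is the desired conclusion.

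First I would invoke Corollary \ref{cor-5.8}: with probability at least $1-3\omega$ over the restricted poisoned sample $S'|_\P$ and the key $\zeta$, there is a watermark $\delta^w$ such that $\mathcal{R}(\D',\F_{\tilde S}^*)$ is upper-bounded by the empirical perturbed loss $\E_\eta\frac1N\sum_i\mathcal{L}(\F_{S|_\P}^*(x_i+\delta_i^p|_\P+\eta),y_i)$ plus five error terms. Under the hypothesis that $N$ and $d$ are sufficiently large and $q=O(\sqrt d/\epsilon_p)$, the four "$O(\sqrt{\cdot/N})$" terms and the term $O(\epsilon_w\sqrt{q\log(1/\omega)/d})$ all vanish in the limit; the poisoning-dimension term $O(q\epsilon_p/\sqrt d)$ is $O(1)$ by the length condition but can in fact be driven to a negligible constant if we take $q=o(\sqrt d/\epsilon_p)$ (the statement says $q=O(\sqrt d/\epsilon_p)$, so I would remark that the leading constant can be absorbed — this is the point to be careful about). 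Next, since $S|_\P + \delta^w$ shares labels with $\D'$-drawn data and $\F_{S|_\P}^*$ is the empirical risk minimizer, the empirical perturbed loss is itself controlled: a random perturbation $\eta$ of magnitude $\epsilon_w$ leaves a well-trained classifier's loss essentially unchanged (the remark after Theorem \ref{generalization-watermarking} makes exactly this point), so this term is small whenever the un-watermarked empirical poisoning loss is small, which holds for any reasonable effective poisoning method. Combining, $\mathcal{R}(\D',\F_{\tilde S}^*)\le\mu$ with high probability.

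The final step is the transfer from the optimal classifier $\F_{\tilde S}^*$ to the actually-trained network $\F$. A network "trained on" $\tilde S=\{x_i+\delta_i^p\oplus\delta^w,y_i\}$ by a reasonable algorithm achieves empirical loss on $\tilde S$ no worse than (up to optimization slack) that of $\F_{\tilde S}^*$; then a uniform-convergence / Rademacher bound of the same flavor as those underlying Theorem \ref{generalization-watermarking} (the $O(\sqrt{L/N})$, $O(\sqrt{\log d/N})$, $O(\sqrt{\log(1/\omega)/N})$ terms) converts this empirical guarantee into the population bound $\mathcal{R}(\D',\F)\le\mu$. Invoking Assumption \ref{ass-eff-poi} with this $\mu$ gives $\mathcal{R}^{\poi}(\D^{\victim},\F)\le\lambda$, and a union bound over the three failure events ($3\omega$ from Corollary \ref{cor-5.8}, plus the generalization failure probability) keeps the total in the "with high probability" regime.

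\textbf{Main obstacle.} The delicate point is the poisoning-dimension error term $O(q\epsilon_p/\sqrt d)$: under the stated condition $q=O(\sqrt d/\epsilon_p)$ this term is only $O(1)$, not $o(1)$, so strictly speaking one cannot conclude $\mathcal{R}(\D',\F_{\tilde S}^*)\le\mu$ for an arbitrary prescribed $\mu$ without either (i) strengthening the length condition to $q=o(\sqrt d/\epsilon_p)$, or (ii) arguing that $\mu$ in the $(\lambda,\mu)$-effectiveness assumption is allowed to be a moderate constant — which is consistent with the assumption's phrasing ("for some $\mu>0$"). I would make this explicit: the corollary should be read as saying that the watermarked-and-restricted population risk stays within the effectiveness radius $\mu$ of the original poisoning, so the downstream objective $\lambda$ is preserved; pinning down the exact constant is where the real work (and the choice of how aggressively to bound $q$) lies.
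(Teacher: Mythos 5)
Your proposal reconstructs exactly the route the paper intends for Corollary~\ref{cor-5.9}: the paper gives no separate proof and presents the corollary as an immediate consequence of Corollary~\ref{cor-5.8} combined with Assumption~\ref{ass-eff-poi}, which is precisely your chain (bound $\mathcal{R}(\D',\F_{\tilde S}^*)$ by the empirical perturbed loss plus five error terms, drive them small under the stated asymptotics, conclude $\mathcal{R}(\D',\F)\le\mu$, then apply $(\lambda,\mu)$-effectiveness). The ``main obstacle'' you flag is a real imprecision in the paper's statement rather than in your argument: under $q=O(\sqrt d/\epsilon_p)$ the term $O(q\epsilon_p/\sqrt d)$ is only $O(1)$, so the paper's phrase that $\mathcal{R}(\D',\F_{S'|_\P}^*)$ ``becomes sufficiently small'' implicitly requires reading the $O(\cdot)$ condition as permitting an arbitrarily small leading constant (or equivalently relying on $\mu$ being a fixed positive constant that dominates that term), which you have correctly identified and made explicit; your added step transferring the guarantee from the ERM $\F_{\tilde S}^*$ to the actually trained $\F$ is also a justified gap-filling that the paper leaves implicit.
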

%
\textbf{Comparison of two types of watermarking.}
For post-poisoning watermarking, the total perturbation budget will become $\epsilon_w+\epsilon_p$. To ensure the detectability, the watermarking length is expected to be $\Theta(\sqrt{d}/\epsilon_w)$, and when ensuring the utility of poisoning, no additional requirement is needed.
In comparison, for poisoning-concurrent watermarking, the total perturbation budget is $\max\{\epsilon_w,\epsilon_p\}$, which is smaller than the post-poisoning case $\epsilon_w+\epsilon_p$. The watermarking length needed to guarantee the detectability becomes looser, $\Theta(1/\epsilon_w^2)$, but the poisoning utility requires a larger $O\left(\sqrt{d}/\epsilon_p\right)$. We will verify these results in Section \ref{exp}.

\section{Experiments}
\label{exp}

\subsection{Experimental setup}
\label{exp-setup}
\textbf{Baseline methods.}
We evaluate our approach using two imperceptible clean-label backdoor attacks, Narcissus~\citep{zeng2023narcissus} and AdvSc~\citep{yu2024generalization}, as well as two imperceptible clean-label availability attacks, UE~\citep{huang2020unlearnable} and AP~\citep{fowl2021adversarial}. We evaluate on CIFAR-10, CIFAR-100~\citep{krizhevsky2009learning}, and Tiny-ImageNet dataset~\citep{le2015tiny}. The accuracy and attack success rate are measured on various victim models including ResNet-18, ResNet-50~\citep{he2016deep}, VGG-19~\citep{simonyan2014very}, DenseNet121~\citep{huang2017densely}, WRN34-10~\citep{zagoruyko2016wide}, MobileNet v2~\citep{sandler2018mobilenetv2}.

\textbf{Implementation details.} 
We apply both post-poisoning watermarking and poisoning-concurrent watermarking to craft watermarks for each method. The watermarking algorithms are shown in Appendix \ref{app:algorithm}. We evaluate watermarking lengths ranging from 0 to 3000, randomly select the watermarking dimensions while fixing the random seed to ensure reproducibility. The watermarking and poisoning budgets are set to $16/255$ for backdoor attacks, and $8/255$ for availability attacks. For victim model training, the total epochs are 200, initial learning rate is 0.5 with a cosine scheduler, the momentum and weight decay are 0.9 and $10^{-4}$ respectively.

\subsection{Main Results}
Tables \ref{tab:backdoor-acc} and \ref{tab:availability-acc} present the evaluation results of watermarking on backdoor and availability attacks respectively. The results show that as the watermarking length $q$ increases, the detection performance (quantified by the AUROC score) improves consistently, achieving perfect detection (i.e., AUROC score be 1) when $q$ is sufficiently large. This confirms the theoretical findings in Section \ref{sec-wm}, which state that when $q$ exceeds a certain threshold ($\Theta(\sqrt{d}/\epsilon_w)$ for post-poisoning and $\Theta(1/\epsilon_w^2)$ for poisoning-concurrent), the watermarking provides provable and reliable detectability.
Furthermore, poisoning-concurrent watermarking consistently outperforms post-poisoning watermarking for the same $q$, corroborating Remark \ref{rm:wm-comparison}, which indicates that $\Omega(1/\epsilon_w^2)$ is smaller than $\Omega(\sqrt{d}/\epsilon_w)$.

We also evaluate the poisoning performance under watermarking, measured by test accuracy and attack success rate (ASR) for backdoor attacks, and test accuracy for availability attacks. The results indicate that, for post-poisoning watermarking, all four attacks demonstrate strong performance compared to baseline methods without watermarking, supporting Theorem \ref{generalization-watermarking}, which asserts that post-poisoning watermarking preserves poisoning when $d$ and $N$ are sufficiently large, regardless of $q$. For AdvSc, the ASR slightly decreases when $q$ is large. This may be attributed to the reliance of AdvSc on shortcuts in the left-top $1/4$ dimension~\citep{yu2024generalization}, implicitly reducing the effective poison dimension to $\frac{1}{4}d$ and weakening its poisoning effect. Despite this limitation, AdvSc still achieves a respectable ASR of approximately 80\%.
For poisoning-concurrent watermarking, the ASR for backdoor attacks and the test accuracy drop for availability attacks are more sensitive to watermarking length $q$. Specifically, for Narcissus and AdvSc,the ASR drops below 30\% when $q$ reaches 2000. For UE and AP, test accuracy recovers to about 90\% when $q$ reaches 2500 and 3000 respectively, rendering the poisoning ineffective. These observations align with Theorem \ref{poi-dimension-eff} and Corollaries \ref{cor-5.8} and \ref{cor-5.9}, which emphasize that maintaining poisoning effectiveness in poisoning-concurrent watermarking requires $q$ to remain below $O(\sqrt{d}/\epsilon_p)$.
When $q$ exceeds this bound, the watermarking begins to dominate, significantly reducing poisoning efficacy.
\begin{table*}[t]
\vspace{-20pt}
\centering
\small
\caption{The clean accuracy (Acc,\%) and AUROC of UE and AP availability attacks both on post-poisoning watermarking and poisoning-concurrent watermarking with different watermarking length $q$ under ResNet-18 and CIFAR-10.
}
\label{tab:availability-acc}
\setlength{\tabcolsep}{5pt}
\begin{tabular}{lllllll}
\toprule
         Length/Method   & \multicolumn{2}{c}{UE~\citep{huang2020unlearnable}} & \multicolumn{2}{c}{AP~\citep{fowl2021adversarial}} \\
         Acc($\downarrow$)/AUROC($\uparrow$)   & Post-Poisoning & Poisoning-Concurrent &  Post-Poisoning & Poisoning-Concurrent\\
\midrule
0(Baseline) & 10.79/- & 10.79/- & \phantom{0}8.53/- & \phantom{0}8.53/-\\
100  & 10.03/0.5844 & 10.35/0.8197 & 10.14/0.5688 & 10.30/0.6950 \\
300  & 11.45/0.7067 & \phantom{0}9.70/0.9684 & 10.08/0.7573 & 11.77/0.7732 \\
500  & 11.71/0.7810 & 10.02/0.9930 & \phantom{0}8.71/0.8623 & 15.84/0.8931 \\
1000  & 11.37/0.9499 & \phantom{0}9.42/0.9991 & 10.58/0.9742 & 21.87/0.9949 \\
1500  & \phantom{0}9.94/0.9786 & 10.10/0.9997 & 11.02/0.9916 & 32.46/0.9995 \\
2000  & \phantom{0}9.06/0.9992 & 10.03/1.0000 & 10.48/0.9987 & 38.62/1.0000 \\
2500  & 10.44/0.9996 & 88.78/1.0000 & 12.68/1.0000 & 36.79/1.0000 \\
3000  & \phantom{0}9.99/1.0000 & 91.79/1.0000 & 13.52/1.0000 & 93.40/1.0000 \\
\bottomrule
\end{tabular}
\end{table*}

\vspace{-5pt}

For experimental results under more datasets and network structures, please refer to Appendix \ref{app:experiments}.

\begin{wrapfigure}{r}{0.38\textwidth}
\vspace{-5pt}
\small
    \centering
    \includegraphics[width=1.00\linewidth]{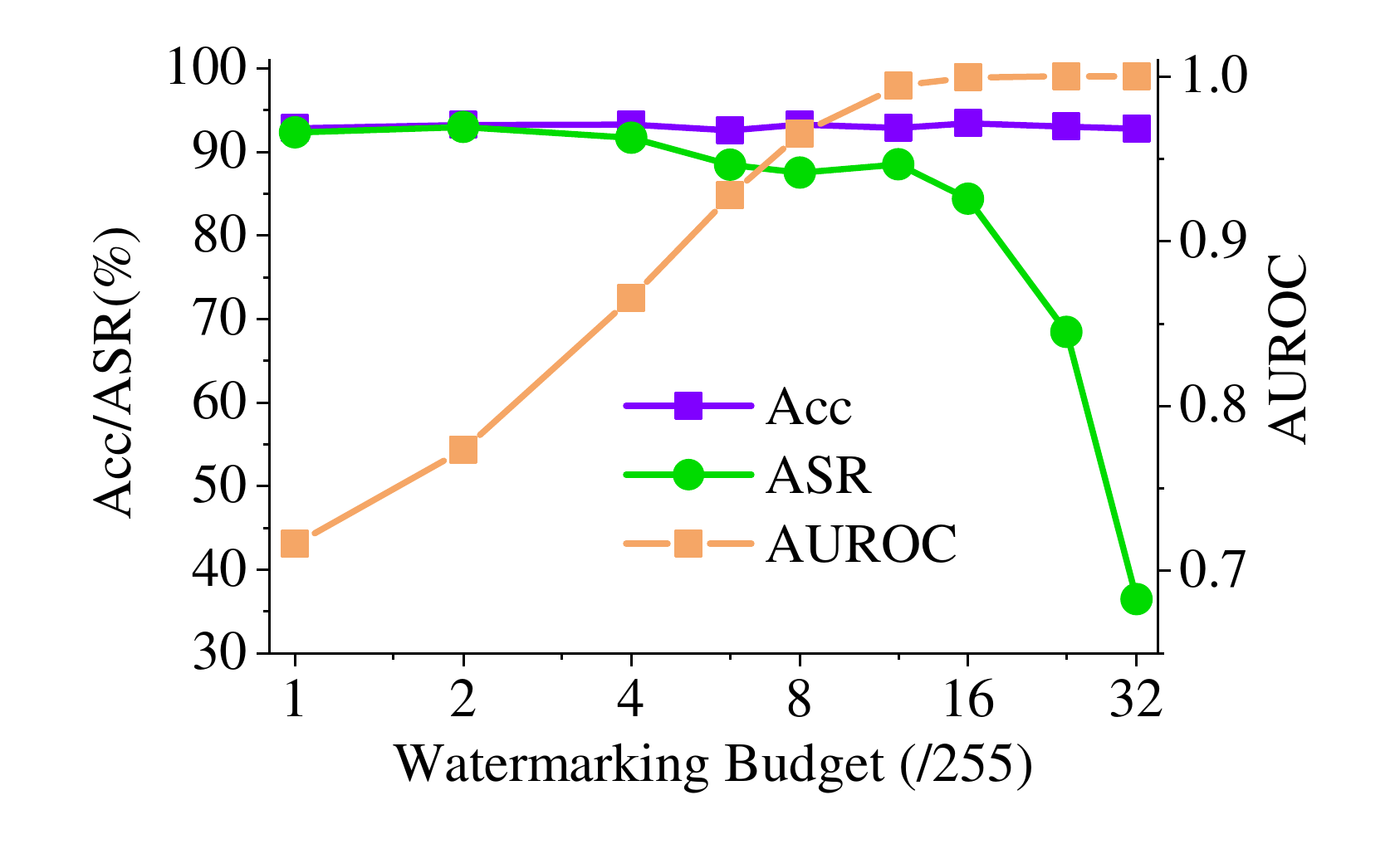}
    \vspace{-22pt}
    \caption{The Acc, ASR and AUROC of AdvSc backdoor attack on different budget $\epsilon_w$ for poisoning-concurrent watermarking with $q=1000$.}
    \vspace{-11pt}
    \label{fig:wm-budget}
\end{wrapfigure}

\vspace{-2pt}
\subsection{Ablation Studies}
\label{sec:ablation}
\vspace{-1pt}
\textbf{Watermarking budget.} We analyze the impact of watermarking budget $\epsilon_w$ on poisoning-concurrent watermarking for AdvSc attack. The results presented in Figure \ref{fig:wm-budget} show that as the budget increases, the detection performance (AUROC) improves.  This observation verifies Theorem \ref{universal-pre-key-most}, which states that a larger $\epsilon_w$ allows for a smaller $q$ to achieve effective detection. However, the poisoning performance (ASR) decreases as $\epsilon_w$  grows, confirming Corollary \ref{cor-5.8}, which suggests that larger $\epsilon_w$ results in a higher risk $\mathcal{R}(\D',\F)$, thereby degrading the poisoning power. More results are provided in Appendix \ref{app:budget}.
\begin{wrapfigure}{r}{0.4\textwidth}
    \centering
    \vspace{-5pt}
    \includegraphics[width=0.97\linewidth]{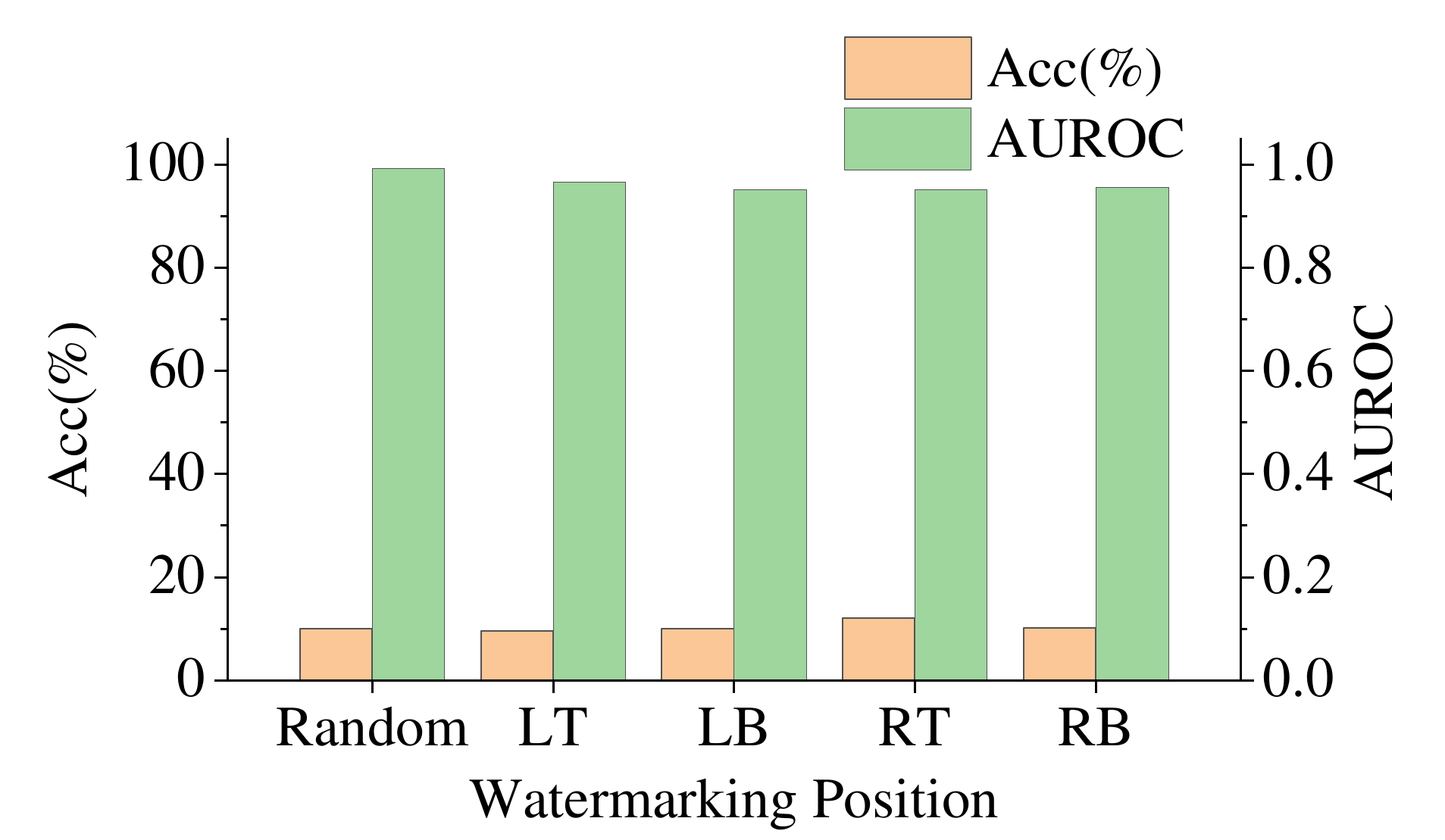}
    \caption{The Acc and AUROC of UE availability attack on different watermarking position for poisoning-concurrent watermarking with $q=500$.}
    \vspace{-40pt}
    \label{fig:wm-position}
\end{wrapfigure}

\textbf{Position of watermarking dimension.}
Our theoretical guarantees indicate that the position of the watermarking dimensions $\W$ has no significant impact. By default, we set $\W$ to be randomly selected from $[d]$. To validate this,we test fixed watermarking positions on the left-top (LT), left-bottom (LB), right-top (RT) and right-bottom (RB) regions of the image. We conduct experiments on post-poisoning UE watermarking with a length of 500. Results shown in Figure \ref{fig:wm-position} demonstrate that the position of watermarking dimensions has minimal impact for both detection and poisoning performance.

In Appendix \ref{app:robust}, we have further discussed potential defense and watermark removal methods, including data augmentations, image regeneration attacks, differential privacy noises, and diffusion purification.

\section{Conclusion}
\label{conclusion}
In this paper, we propose two provable and practical watermarking methods for data poisoning attacks: post-poisoning watermarking and poisoning-concurrent watermarking. We provide theoretical guarantees for the soundness of these watermarking methods, certifying their effectiveness when the watermarking length is $\Theta(\sqrt{d}/\epsilon_w)$ and $\Theta(1/\epsilon_w^2)$ for post-poisoning and poisoning-concurrent watermarking. Furthermore, we prove the soundness of the poisoning of post-poisoning and poisoning-concurrent watermarking when the length is $O(\sqrt{d}/\epsilon_p)$. We validate our theoretical findings through evaluation on several data poisoning attacks, including backdoor and availability attacks.

\textbf{Limitation and future works.} 
While our watermarking methods offer sufficient conditions for both detection and poisoning utility, the necessary conditions for these properties remain an open area for future research. Moreover, exploring more sophisticated watermarking designs that could achieve better performance and robustness in both detection and poisoning utility is a promising direction for further development.

\newpage
\section*{Acknowledgment}
This paper is supported by the Strategic Priority Research Program of CAS Grant XDA0480502, Robotic AI-Scientist Platform of Chinese Academy of Sciences, NSFC Grant 12288201, and CAS Project for Young Scientists in Basic Research Grant YSBR-040.



\bibliography{ref}
\bibliographystyle{plain}

\newpage
\appendix
\onecolumn

\section{Symbol Table}
\label{app:symbol}

\begin{center}
\begin{tabular}{cc}
\hline
\textbf{Notation} & \textbf{Description} \\
\hline
$d$ & The dimension of data \\
$q$ & The dimension of watermarking \\
$N$ & The number of samples in a dataset \\
$\P$ & The indices of poisoning dimension \\
$\W$ & The indices of watermarking dimension \\
$\epsilon_p$ & The perturbation budget of a poisoning attack \\
$\epsilon_w$ & The perturbation budget of a watermark \\
$\delta^p$ & A data poisoning attack \\
$\delta^w$ & A watermark \\
$\delta_x$ & A sample-wise perturbation on data $x$ \\
$\zeta$ & A key \\
$\Xi$ & A key distribution \\
$S$ & A clean dataset \\
$S'$ & A perturbed dataset \\
$\D$ & A clean data distribution \\
$\D'$ & A data distribution under some perturbations \\
$L$ & The layer of a neural network \\
$\mathcal{L}$ & A loss function \\
$\F$ & A model (neural network) \\
$\mathcal{R}$ & A generalization risk \\
$\mathcal{R}^{\poi}$ & A poisoning objective risk \\
$\omega$ & A probability \\
\hline
\end{tabular}
\end{center}

\section{Proofs}
\label{app:proof}
\subsection{Proofs of Theorems in Section \ref{sec-wm-sample-wise}}
\label{sec-wm-proof-1}

\begin{lemma}[McDiarmid's Inequality \cite{mcdiarmid1989method}]
Let $X_1,X_2,\cdots,X_n$ be independent random variables on $\mathcal{X}_1,\mathcal{X}_2,\cdots,\mathcal{X}_n$
and $f:\mathcal{X}_1\times\mathcal{X}_2\times\cdots\times\mathcal{X}_n\to\mathbb{R}$ be a multivariate function. If there exist positive constants $c_1,c_2,\cdots,c_n$, such that for all $(x_1,x_2,\cdots,x_n)\in\mathcal{X}_1\times\mathcal{X}_2\times\cdots\times\mathcal{X}_n$ and $i\in[n]$, it has
$$\sup\limits_{x_i^{\prime}\in\mathcal{X}_i}|f\left(x_1,\cdots,x_{i-1},x_i^{\prime},x_{i+1},\cdots,x_n\right)-f\left(x_1,\cdots,x_{i-1},x_i,x_{i+1},\cdots,x_n\right)|\leq c_i,$$
then for any $\epsilon>0$, the following inequalities hold
\begin{equation*}
\begin{array}{l}
\Prob\left(f(X_1,X_2,\cdots,X_n)-\mathbb{E}\left[f(X_1,X_2,\cdots,X_n)\right]\geq\epsilon\right)\leq e^{-\frac{2\epsilon^2}{\sum_{i=1}^n c_i^2}},\\
\Prob\left(f(X_1,X_2,\cdots,X_n)-\mathbb{E}\left[f(X_1,X_2,\cdots,X_n)\right]\leq-\epsilon\right)\leq e^{-\frac{2\epsilon^2}{\sum_{i=1}^n c_i^2}}.\\
\end{array}
\end{equation*}
\end{lemma}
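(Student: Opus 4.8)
The plan is to use the classical ``method of bounded differences'': construct a Doob martingale interpolating between $\E[f(X_1,\dots,X_n)]$ and $f(X_1,\dots,X_n)$, show that each martingale increment has bounded conditional range, and then combine Hoeffding's lemma with a Chernoff bound. Concretely, set $\mathcal F_k=\sigma(X_1,\dots,X_k)$ and $Z_k=\E[f(X_1,\dots,X_n)\mid\mathcal F_k]$, so that $Z_0=\E[f]$ and $Z_n=f(X_1,\dots,X_n)$, and let $D_k=Z_k-Z_{k-1}$ be the increments. Because the $X_i$ are independent, $Z_k=g_k(X_1,\dots,X_k)$ where $g_k(x_1,\dots,x_k):=\E[f(x_1,\dots,x_k,X_{k+1},\dots,X_n)]$ integrates out only the future coordinates; the bounded-difference hypothesis on $f$ therefore transfers to $g_k$ in its last argument, i.e. $\sup_{x_k,x_k'}|g_k(\cdot,x_k)-g_k(\cdot,x_k')|\le c_k$. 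Hence, conditionally on $\mathcal F_{k-1}$, the increment $D_k=g_k(X_1,\dots,X_k)-\E[g_k(X_1,\dots,X_k)\mid\mathcal F_{k-1}]$ is mean-zero and takes values in an interval of width at most $c_k$.

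Next I would invoke Hoeffding's lemma: a mean-zero random variable supported on an interval of length $\ell$ has moment generating function bounded by $e^{s^2\ell^2/8}$. Applied conditionally, this gives $\E[e^{sD_k}\mid\mathcal F_{k-1}]\le e^{s^2c_k^2/8}$ for every $s>0$. Peeling off increments one at a time via the tower property, $\E[e^{s(Z_n-Z_0)}]=\E\bigl[e^{s\sum_{k<n}D_k}\,\E[e^{sD_n}\mid\mathcal F_{n-1}]\bigr]\le e^{s^2c_n^2/8}\,\E[e^{s\sum_{k<n}D_k}]\le\cdots\le e^{(s^2/8)\sum_{k=1}^n c_k^2}$. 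A Chernoff bound then yields $\Prob(f-\E[f]\ge\epsilon)\le e^{-s\epsilon}\,e^{(s^2/8)\sum_k c_k^2}$, and optimizing at $s=4\epsilon/\sum_k c_k^2$ gives the claimed $e^{-2\epsilon^2/\sum_k c_k^2}$. The lower-tail inequality follows by applying the same argument to $-f$, which satisfies the same bounded-difference hypothesis.

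I expect the main obstacle to be the second half of the first step: establishing that the conditional range of $D_k$ is at most $c_k$. This is exactly where independence of the $X_i$ is indispensable — if the coordinates were dependent, conditioning on $\mathcal F_{k-1}$ would also reshape the conditional law of $(X_{k+1},\dots,X_n)$, and the bounded-difference property of $f$ would not pass cleanly to $g_k$. The only other technical ingredient, Hoeffding's lemma, is a short and standard convexity estimate on the log-moment-generating function of a bounded random variable, so I would either cite it or include its one-paragraph proof.
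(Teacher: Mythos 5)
The paper cites McDiarmid's inequality from \cite{mcdiarmid1989method} and does not reproduce a proof, so there is no in-paper argument to compare against. Your proposal is the standard and correct Doob-martingale (Azuma--Hoeffding) derivation: the bounded-difference hypothesis plus independence bounds the conditional range of each increment $D_k$ by $c_k$, Hoeffding's lemma controls the conditional moment generating function, the tower property telescopes the bound, and a Chernoff argument with $s=4\epsilon/\sum_k c_k^2$ yields the stated tail; applying it to $-f$ gives the lower tail. Nothing is missing, and this is precisely the argument one would expect the cited reference to contain.
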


\begin{definition}[Random identical key]
The random identical key means that for each entry, the probability of its value being $1$ or $-1$ is $1/2$, i.e., $\zeta^i=\mathcal{U}\{-1,+1\}$ for all entries $i$ of key $\zeta$.
\end{definition}
\begin{theorem}[Theorem \ref{sample-wise-wm-pre-key}, restated]
\label{sample-wise-wm-pre-key-app}
For any data point $x$ sampled from $\D_{\X}$ and their corresponding poison be $\delta_x^p$, there exists a distribution $\Xi$ defined in $\R^d$ such that we can sample the key $\zeta\sim\Xi$ satisfied that for any $\omega\in(0,1)$, there are:
\vspace{-3pt}

{\rm (1)}: $\Prob_{x\sim \D_{\X}, \zeta\sim\Xi}\left(\zeta^Tx<  \sqrt{\frac{d}{2}\log{\frac{1}{\omega}}} \right)>1-w$;
{\rm (2)}: we can craft the watermark  $\delta_x^w$ based on $\zeta$ such that  $\Prob_{x\sim \D_{\X}, \zeta\sim\Xi}\left(\zeta^T(x+\delta_x)> q\epsilon_w-\sqrt{\frac{d}{2}\log{\frac{1}{\omega}}} \right)>1-w$. Hence, when $q> \frac{1}{\epsilon_w}\sqrt{2d\log{\frac{1}{\omega}}}$, it holds that $\Prob_{x_1,x_2\sim\D_{\X}, \zeta\sim\Xi}\left(\zeta^T(x_1+\delta_1)>\zeta^Tx_2\right)> 1-2\omega$.
\end{theorem}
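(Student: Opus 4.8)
The plan is to choose $\Xi$ to be the \emph{random identical key} distribution, i.e.\ sample each coordinate $\zeta^i$ independently and uniformly from $\{-1,+1\}$, and then bound the two inner products by McDiarmid's inequality applied coordinatewise. First I would prove part (1): fix $x\in[0,1]^d$ and view $\zeta^Tx=\sum_{i=1}^d \zeta^i x_i$ as a function of the independent random signs $\zeta^1,\dots,\zeta^d$. Since $x_i\in[0,1]$, flipping one $\zeta^i$ changes the sum by at most $2x_i\le 2$, so McDiarmid with $c_i = 2x_i\le 2$ (or simply $c_i=2$) gives $\Prob(\zeta^Tx - \E\zeta^Tx \ge t)\le e^{-2t^2/\sum c_i^2}\le e^{-t^2/(2d)}$. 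Because $\E_\zeta \zeta^Tx = 0$ by symmetry of the sign variables, setting $t=\sqrt{\tfrac{d}{2}\log\tfrac1\omega}$ yields $\Prob_\zeta(\zeta^Tx \ge t)\le \omega$; integrating over $x\sim\D_\X$ preserves the bound, giving (1). (The one-sided statement is all that is needed, though a two-sided version holds identically.)

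For part (2), I would exploit that the watermark is \emph{sample-wise} and crafted \emph{after} seeing $\zeta$: on the watermarking dimension index set $\W=\{d_1,\dots,d_q\}$, set $\delta^w_x$ coordinatewise to $(\delta^w_x)_{d_k} = \epsilon_w\cdot\mathrm{sign}(\zeta^{d_k})$, so that $\zeta^{d_k}(\delta^w_x)_{d_k} = \epsilon_w|\zeta^{d_k}| = \epsilon_w$ for each of the $q$ watermark coordinates; this contributes exactly $q\epsilon_w$ to the inner product and respects $\|\delta^w_x\|_\infty\le\epsilon_w$. Writing $\delta_x=\delta^p_x+\delta^w_x$, we get $\zeta^T(x+\delta_x) = \zeta^Tx + \zeta^T\delta^p_x + q\epsilon_w$. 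The first two terms together, $\zeta^T(x+\delta^p_x)$, is again a signed sum of $d$ independent contributions; since $x+\delta^p_x\in[-\epsilon_p,1+\epsilon_p]$ coordinatewise but more simply lies in $[0,1]^d$ after the perturbation in the imperceptible regime (or one can bound $|x_i+(\delta^p_x)_i|\le 1+\epsilon_p\le 2$), McDiarmid with bounded differences $c_i\le 2$ again gives a deviation of at most $\sqrt{\tfrac{d}{2}\log\tfrac1\omega}$ below its mean $0$ with probability $\ge 1-\omega$. Hence $\zeta^T(x+\delta_x) > q\epsilon_w - \sqrt{\tfrac{d}{2}\log\tfrac1\omega}$ with probability $>1-\omega$ over $x,\zeta$, which is (2).

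Finally, to combine: take $x_1,x_2$ i.i.d.\ from $\D_\X$ and $\zeta\sim\Xi$. By (2) applied to $x_1$ (with its watermark) and (1) applied to $x_2$, a union bound over the two failure events of probability $\omega$ each gives that, with probability $> 1-2\omega$, simultaneously $\zeta^T(x_1+\delta_1) > q\epsilon_w - \sqrt{\tfrac{d}{2}\log\tfrac1\omega}$ and $\zeta^Tx_2 < \sqrt{\tfrac{d}{2}\log\tfrac1\omega}$. Whenever $q\epsilon_w - \sqrt{\tfrac{d}{2}\log\tfrac1\omega} \ge \sqrt{\tfrac{d}{2}\log\tfrac1\omega}$, i.e.\ $q > \tfrac{1}{\epsilon_w}\sqrt{2d\log\tfrac1\omega}$, this forces $\zeta^T(x_1+\delta_1) > \zeta^Tx_2$, proving the last claim. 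The only subtle point — the step I expect to need the most care — is the bounded-differences constant in part (2): one must check that adding the poison $\delta^p_x$ does not enlarge the per-coordinate range beyond an $O(1)$ constant (it does not, since data lie in $[0,1]^d$ and $\epsilon_p\le 1$), so that the same $\sqrt{\tfrac{d}{2}\log\tfrac1\omega}$ concentration radius applies; everything else is a routine application of McDiarmid plus a union bound.
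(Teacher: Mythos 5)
Your proof takes the same route as the paper: sample $\zeta$ as a random identical key $\zeta\in\mathcal{U}\{-1,+1\}^d$, use $\E_\zeta[\zeta^Tx]=0$ plus a McDiarmid (equivalently Hoeffding) concentration over $\zeta$ for parts (1) and (2), choose the watermark $\delta^w_x=\epsilon_w\cdot\zeta|_\W$ so that $\zeta^T\delta^w_x=q\epsilon_w$ exactly, and close with a union bound over the two tails. That is precisely the paper's argument.

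One remark worth your attention, though: your own bounded-difference computation is more careful than the paper's and quietly exposes a factor-of-two slip that both share. Flipping one coordinate $\zeta^i\in\{-1,+1\}$ changes $\zeta^Tx$ by $2x^i\le 2$, so $\sum_i c_i^2\le 4d$ and the one-sided tail is $e^{-t^2/(2d)}$; setting this to $\omega$ gives $t=\sqrt{2d\log(1/\omega)}$, not $\sqrt{\tfrac{d}{2}\log\tfrac1\omega}$ as you (and the theorem statement) assert in the next sentence. The paper's own write-up implicitly takes $c_i\le 1$, which would follow only if each $|x^i|\le\tfrac12$, not from $x\in[0,1]^d$. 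None of this changes the asymptotic $q=\Omega(\sqrt{d}/\epsilon_w)$ conclusion, only the numeric constant in the thresholds, and the same correction applies to part (2) and to the final requirement on $q$. If you carry the honest $c_i=2$ through, the clean statement is $\Prob(\zeta^Tx<\sqrt{2d\log(1/\omega)})>1-\omega$ and $q>\tfrac{2}{\epsilon_w}\sqrt{2d\log(1/\omega)}$; otherwise you must explicitly justify the tighter constant.
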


\begin{proof}[Proof of Theorem \ref{sample-wise-wm-pre-key}]
(1): Denote the distribution $\Xi$ be the distribution of a random identical key, i.e., $\Xi=\mathcal{U}\{-1,+1\}^d$, it has $$\E_{\zeta}[\zeta^Tx]=0$$ for all $x\in\D$.
Furthermore, as $x$ lies in $[0,1]$, it always holds that $$|\zeta^i x^i-\zeta^i \tilde{x}^i|\leq |\zeta^i|=1$$ for all $i$, $\zeta$ and $x, \tilde{x}\in\D$.

Therefore, by McDiarmid’s inequality, for any $\alpha>0$, it has $$\Prob_x\left[\Prob_\zeta[\zeta^Tx\geq\alpha]\leq e^{-\frac{2\alpha^2}{d}}\right]=1,$$ 
which concludes that
$$\Prob_{x,\zeta}[\zeta^Tx\geq\alpha]\leq e^{-\frac{2\alpha^2}{d}}.$$

Therefore, let $\omega=e^{-\frac{2\alpha^2}{d}}$, it has $\alpha=\sqrt{\frac{d}{2}\log{\frac{1}{\omega}}}$, which validates (1).

(2): For any key $\zeta$, we craft the watermark $\delta_x^w$ as $(\epsilon_w\cdot\zeta^{d_i})_{i=1}^q$. We can conclude that $$\E_{\zeta}[\zeta^T(x+\delta_x)]=\E_{\zeta}[\zeta^Tx]+\E_{\zeta}[\zeta^T\delta_x^p]+\E_{\zeta}[\zeta^T\delta_x^w].$$
Because $x$ and $\delta_x^p$ are independent from $\zeta$, it holds that $$\E_{\zeta}[\zeta^Tx]=\E_{\zeta}[\zeta^T\delta_x^p]=0.$$ 
Therefore, we have $$\E_{\zeta}[\zeta^T(x+\delta_x)]=\E_{\zeta}[\zeta^T\delta_x^w]=q\epsilon_w.$$ 
Similar to (1), by McDiarmid's inequality, for any $\beta>0$, it has $$\Prob_{x,\zeta}[\zeta^T(x+\delta_x)-q\epsilon_w\leq-\beta]\leq e^{-\frac{2\beta^2}{d}}.$$
Therefore, let $\omega=e^{-\frac{2\beta^2}{d}}$, it has $\beta=\sqrt{\frac{d}{2}\log{\frac{1}{\omega}}}$, which induces that $$\Prob_{x\sim \D_{\X}, \zeta}\left(\zeta^T(x+\delta_x)>  q\epsilon_w-\sqrt{\frac{d}{2}\log{\frac{1}{\omega}}} \right)>1-w.$$
When  $q> \frac{1}{\epsilon_w}\sqrt{2d\log{\frac{1}{\omega}}}$, it holds that $$\sqrt{\frac{d}{2}\log{\frac{1}{\omega}}}<q\epsilon_w-\sqrt{\frac{d}{2}\log{\frac{1}{\omega}}}.$$ Hence by the union bound, it has 
\begin{align}
&\Prob_{x_1,x_2\sim\D_{\X}, \zeta}\left[\zeta^T(x_1+\delta_1)>\zeta^Tx_2\right]=1- \Prob_{x_1,x_2\sim\D_{\X}, \zeta}\left[\zeta^T(x_1+\delta_1)\leq\zeta^Tx_2\right] \nonumber\\
&\geq 1-\Prob_{x_1,x_2\sim\D_{\X}, \zeta}\left[\zeta^T(x_1+\delta_1)\leq q\epsilon_w-\sqrt{\frac{d}{2}\log{\frac{1}{\omega}}}\right] - \Prob_{x_1,x_2\sim\D_{\X}, \zeta}\left[\zeta^Tx_2\geq\sqrt{\frac{d}{2}\log{\frac{1}{\omega}}}\right] \nonumber\\
&\geq 1-2\omega.
\end{align}
\end{proof}

\begin{theorem}[Theorem \ref{sample-wise-wm-post-key}, restated]
\label{sample-wise-wm-post-key-app}
For any $x\sim\D_{\X}$, there exists a distribution $\Xi\in\R^d$ such that we can sample the key $\zeta\sim\Xi$ satisfied that for any $\omega\in(0,1)$:
\vspace{-3pt}

{\rm (1)}: $\Prob_{x\sim \D_{\X}, \zeta\sim\Xi}\left(\zeta^Tx<  \sqrt{\frac{q}{2}\log{\frac{1}{\omega}}} \right)>1-w$;
{\rm (2)}: we can craft the watermark $\delta_x^w$ and poison $\delta_x^p$ such that  $\Prob_{x\sim \D_{\X}, \zeta\sim\Xi}\left(\zeta^T(x+\delta_x)> q\epsilon_w-\sqrt{\frac{q}{2}\log{\frac{1}{\omega}}} \right)>1-w$. Hence, when $q>\frac{2}{\epsilon_w^2}\log\frac{1}{\omega}$, it holds that $\Prob_{x_1,x_2\sim\D_{\X}, \zeta\sim\Xi}\left(\zeta^T(x_1+\delta_1)>\zeta^Tx_2\right)> 1-2\omega$.
\end{theorem}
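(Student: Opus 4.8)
The plan is to mirror the argument of Theorem~\ref{sample-wise-wm-pre-key}, but to exploit the fact that in the poisoning-concurrent setting the generator controls which coordinates carry the poison. Concretely, I would take $\Xi$ to be the distribution of a random sign vector \emph{supported only on the watermarking coordinates}: set $\zeta^i = \mathcal{U}\{-1,+1\}$ for $i\in\W$ and $\zeta^i = 0$ for $i\in\P=[d]\backslash\W$. With this choice $\E_\zeta[\zeta^Tx]=0$ for every $x\in[0,1]^d$, and the bounded-differences constants for McDiarmid are $c_i=1$ for $i\in\W$ and $c_i=0$ for $i\in\P$, so $\sum_i c_i^2 = q$ rather than $d$. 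Applying McDiarmid as in part (1) of Theorem~\ref{sample-wise-wm-pre-key} then yields $\Prob_{x,\zeta}[\zeta^Tx\ge\alpha]\le e^{-2\alpha^2/q}$, and solving $\omega=e^{-2\alpha^2/q}$ gives $\alpha=\sqrt{\tfrac{q}{2}\log\tfrac1\omega}$, establishing claim (1).

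For claim (2) I would craft the watermark exactly as before, $\delta_x^w=(\epsilon_w\,\zeta^{d_i})_{i=1}^q$ on the $\W$-coordinates, and let $\delta_x^p$ be any poison supported on $\P$. The key observation is that $\zeta^T\delta_x^p=0$ because $\zeta$ vanishes on $\P$; hence $\E_\zeta[\zeta^T(x+\delta_x)] = \E_\zeta[\zeta^Tx]+\E_\zeta[\zeta^T\delta_x^p]+\E_\zeta[\zeta^T\delta_x^w] = 0+0+q\epsilon_w = q\epsilon_w$. Applying the lower-tail form of McDiarmid (again with $\sum_ic_i^2=q$) gives $\Prob_{x,\zeta}[\zeta^T(x+\delta_x)-q\epsilon_w\le-\beta]\le e^{-2\beta^2/q}$, so with $\beta=\sqrt{\tfrac{q}{2}\log\tfrac1\omega}$ we obtain the stated lower bound with probability $>1-\omega$.

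Finally, to get the separation I would check the elementary inequality: when $q>\tfrac{2}{\epsilon_w^2}\log\tfrac1\omega$ we have $q^2\epsilon_w^2>2q\log\tfrac1\omega$, i.e. $q\epsilon_w>2\sqrt{\tfrac{q}{2}\log\tfrac1\omega}$, which rearranges to $\sqrt{\tfrac{q}{2}\log\tfrac1\omega}<q\epsilon_w-\sqrt{\tfrac{q}{2}\log\tfrac1\omega}$. A union bound over the two events from (1) and (2), applied to independent samples $x_1,x_2$, then gives $\Prob_{x_1,x_2,\zeta}[\zeta^T(x_1+\delta_1)>\zeta^Tx_2]>1-2\omega$, exactly as in the post-poisoning proof.

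There is no serious obstacle here; the only substantive point—and the reason the length requirement improves from $\Omega(\sqrt{d}/\epsilon_w)$ to $\Omega(1/\epsilon_w^2)$—is the structural observation that placing the key's support on $\W$ simultaneously (i) restricts the McDiarmid variance proxy from $d$ to $q$ and (ii) makes the poison $\delta_x^p$ orthogonal to $\zeta$, so the poison never enters the concentration analysis. Everything else is a verbatim adaptation of the McDiarmid computation already carried out for Theorem~\ref{sample-wise-wm-pre-key}.
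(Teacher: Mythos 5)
Your proposal is correct and follows essentially the same route as the paper's own proof: sample a random $\pm 1$ key supported only on $\W$ (zero on $\P$), apply McDiarmid with bounded-differences constants summing to $q$ for both tails, use that $\zeta^T\delta_x^p=0$ and $\zeta^T\delta_x^w=q\epsilon_w$ deterministically, and finish with a union bound over the two tail events. The two structural observations you flag at the end are precisely the ones the paper exploits implicitly by choosing the key supported on $\W$, so there is no substantive difference.
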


\begin{proof}[Proof of Theorem \ref{sample-wise-wm-post-key}]
For poisoning-concurrent watermarking, denote the poisoning dimension be $\P$ and the watermarking dimension be $\W$, where $[d]=\P\cup\W$ and $|\W|=q$. 

We sample the key $\zeta\in \R^d$ from a certain distribution $\Xi$, such that $\zeta^i=\mathcal{U}\{-1,+1\}, i\in\W$ and $\zeta^i=0, i\in\P$.

Therefore, by McDiarmid's inequality, for any $\alpha>0$, it has $$\Prob_{x,\zeta}[\zeta^Tx\geq\alpha]\leq e^{-\frac{2\alpha^2}{q}}.$$ 

Let $\omega=e^{-\frac{2\alpha^2}{q}}$, it has $\alpha=\sqrt{\frac{q}{2}\log{\frac{1}{\omega}}}$, which validates condition (1).

We craft the watermark $\delta_x^w$ as $(\epsilon_w\cdot\zeta^i)_{i=1}^d$. 
Similar to the proof of Theorem \ref{sample-wise-wm-pre-key}, we can conclude that $$\E_{\zeta}[\zeta^T(x+\delta_x)]=\E_{\zeta}[\zeta^T\delta_x^w]=q\epsilon_w.$$ 

Let $\omega=e^{-\frac{2\beta^2}{q}}$, it has $\beta=\sqrt{\frac{q}{2}\log{\frac{1}{\omega}}}$, which induces that $$\Prob_{x\sim \D_{\X}, \zeta}\left(\zeta^T(x+\delta_x)>  q\epsilon_w-\sqrt{\frac{q}{2}\log{\frac{1}{\omega}}} \right)>1-w.$$
When  $q>\frac{2}{\epsilon_w^2}\log\frac{1}{\omega}$, it holds that $$\sqrt{\frac{q}{2}\log{\frac{1}{\omega}}}<q\epsilon_w-\sqrt{\frac{q}{2}\log{\frac{1}{\omega}}}.$$

Hence by union bound, it has 
$$\Prob_{x_1,x_2\sim\D_{\X}, \zeta}\left[\zeta^T(x_1+\delta_1)>\zeta^Tx_2\right] 
\geq 1-2\omega.$$
\end{proof}

\subsection{Proofs of Theorems in Section \ref{sec-wm-universal-wise}}
\label{sec-wm-proof-2}

\begin{theorem}[Proposition \ref{universal-pre-key}, restated]
\label{universal-pre-key-app}
For the dataset $S_\X$, when $q> \frac{2+\epsilon_p}{\epsilon_w}\sqrt{\frac{d}{2}\log\frac{2N}{\omega}}$ 
, we can sample the key $\zeta\in\R^d$ from a certain distribution such that, with a probability of at least $1-\omega$,  there exists the watermark $\delta^w$ such that $\zeta^T(x_j+\delta_j)>\zeta^Tx_i, \forall i,j\in[N]$.
\end{theorem}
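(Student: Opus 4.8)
\textbf{Proof proposal for Proposition \ref{universal-pre-key}.}
The plan is to reuse the random-identical-key construction from Theorem \ref{sample-wise-wm-pre-key}, but now fix a single key $\zeta\sim\mathcal{U}\{-1,+1\}^d$ for the entire dataset and control all $N$ inner products simultaneously by a union bound. First I would set the universal watermark to be $\delta^w=(\epsilon_w\cdot\zeta^{d_1},\dots,\epsilon_w\cdot\zeta^{d_q})$ on the watermarking coordinates $\W$, exactly as in the sample-wise proof, so that $\zeta^T\delta^w=q\epsilon_w$ deterministically once $\zeta$ is drawn. The randomness is now entirely in $\zeta$, and the data points $x_i$ together with their poisons $\delta_i^p$ are independent of $\zeta$.

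Next I would estimate, for a single index $i$, the two tail events $\{\zeta^Tx_i\geq\alpha\}$ and $\{\zeta^T(x_i+\delta_i^p+\delta^w)\leq q\epsilon_w-\beta\}$. For the first, $\zeta^Tx_i=\sum_j \zeta^j x_i^j$ is a sum of independent bounded terms with $|\zeta^j x_i^j|\leq 1$ and mean zero, so McDiarmid (equivalently Hoeffding) gives $\Prob_\zeta(\zeta^Tx_i\geq\alpha)\leq e^{-2\alpha^2/d}$. For the second, write $\zeta^T(x_i+\delta_i^p+\delta^w)=q\epsilon_w+\zeta^Tx_i+\zeta^T\delta_i^p$; here $\zeta^Tx_i+\zeta^T\delta_i^p=\sum_j\zeta^j(x_i^j+(\delta_i^p)^j)$ is again a zero-mean sum of independent terms, but now each term is bounded by $1+\epsilon_p$ since $\|\delta_i^p\|_\infty\leq\epsilon_p$, so $\Prob_\zeta\big(\zeta^T(x_i+\delta_i^p+\delta^w)\leq q\epsilon_w-\beta\big)\leq e^{-2\beta^2/(d(1+\epsilon_p)^2)}$. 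Choosing $\alpha=\sqrt{\tfrac{d}{2}\log\tfrac{2N}{\omega}}$ and $\beta=(1+\epsilon_p)\sqrt{\tfrac{d}{2}\log\tfrac{2N}{\omega}}$ makes each of the $2N$ bad events have probability at most $\omega/2N$, so by the union bound over all $i\in[N]$ and both event types, with probability at least $1-\omega$ we have $\zeta^Tx_i<\alpha$ and $\zeta^T(x_i+\delta_i)>q\epsilon_w-\beta$ for every $i$.

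Finally I would check that the assumed lower bound on $q$ forces $q\epsilon_w-\beta>\alpha$, i.e. $q\epsilon_w>\alpha+\beta=(2+\epsilon_p)\sqrt{\tfrac{d}{2}\log\tfrac{2N}{\omega}}$, which is precisely the hypothesis $q>\tfrac{2+\epsilon_p}{\epsilon_w}\sqrt{\tfrac{d}{2}\log\tfrac{2N}{\omega}}$. Then for all $i,j\in[N]$, $\zeta^T(x_j+\delta_j)>q\epsilon_w-\beta>\alpha>\zeta^Tx_i$, giving the claim on the same $1-\omega$ event. The only mild subtlety — and the step I expect to be the real content — is bookkeeping the two different concentration scales ($d$ for clean data versus $d(1+\epsilon_p)^2$ for poisoned data) so that the split of the constant $2+\epsilon_p$ into $\alpha$ and $\beta$ lines up exactly with the stated threshold; everything else is a routine Hoeffding-plus-union-bound argument inherited from the sample-wise case.
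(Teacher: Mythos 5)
Your proof is correct and follows essentially the same route as the paper's: a random $\pm1$ sign key $\zeta$, the watermark $\delta^w=\epsilon_w\cdot\zeta|_{\W}$ so that $\zeta^T\delta^w=q\epsilon_w$, McDiarmid/Hoeffding tails for each sample, and a union bound over $2N$ bad events, arriving at the same threshold $q\epsilon_w>(2+\epsilon_p)\sqrt{\tfrac{d}{2}\log\tfrac{2N}{\omega}}$. The only cosmetic difference is that you absorb $\zeta^Tx_i+\zeta^T\delta_i^p$ into a single zero-mean sum with per-coordinate range $1+\epsilon_p$, whereas the paper bounds $\zeta^Tx_i$ and $\zeta^T\delta_i^p$ via two separate tails (ranges $1$ and $\epsilon_p$, with the split $\beta=\epsilon_p\alpha$); both bookkeeping schemes split the gap budget into $\alpha+\beta=(2+\epsilon_p)\alpha$ and land on the identical condition.
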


\begin{proof}[Proof of Proposition \ref{universal-pre-key}]
For the random identical key $\zeta\in\R^d$, and $x_i\in[0,1]^d$, it holds that 
$$|\zeta^j x_i^j-\zeta^j \tilde{x_i}^j|\leq |\zeta^j|=1$$
for every $\tilde{x_i}\neq x_i$.

By McDiarmid's inequality, for any $\alpha>0$. it has
$$\Prob_{\zeta}\left[\zeta^T x_i\geq\alpha\right]\leq e^{-\frac{2\alpha^2}{d}}.$$

Furthermore, as $\|\delta^p_i\|\leq\epsilon_p$, it has
$$|\zeta^j (\delta^p_i)^j-\zeta^j (\tilde{\delta}^p_i)^j|\leq |\zeta^j|\cdot \epsilon_p=\epsilon_p$$
for every $\tilde{\delta}^p_i\neq \delta^p_i$.

By McDiarmid's inequality, for any $\beta>0$. it has
$$\Prob_{\zeta}\left[\zeta^T \delta^p_i\geq\beta\right]\leq e^{-\frac{2\beta^2}{d\epsilon_p^2}}.$$

By the union bound, it holds that

$$\Prob\left[\cup_{i=1}^N\{|\zeta^Tx_i|\geq \alpha\}\right]\leq \sum\limits_{i=1}^N  \Prob\left[|\zeta^T x_i|\geq \alpha\right] \leq Ne^{-\frac{2\alpha^2}{d}},$$ 
$$\Prob\left[\cup_{i=1}^N\{|\zeta^T\delta_i^p|\geq \beta\}\right]\leq \sum\limits_{i=1}^N  \Prob\left[|\zeta^T x_i|\geq \beta\right]\leq Ne^{-\frac{2\beta^2}{d\epsilon_p^2}}.$$

We now craft the watermark $\delta^w$ such that $$\delta^w=\epsilon_w\cdot\zeta|_{\W}.$$
It has $$\zeta^T\delta^w=q\epsilon_w.$$

Therefore, let $\beta=\epsilon_p\alpha$, it holds that 
\begin{align*}
\Prob\left[\cap_{i=1}^N\{\zeta^T(x_i+\delta^i)>q\epsilon_w-(1+\epsilon_p)\alpha\}\right] &= \Prob\left[\cap_{i=1}^N\{\zeta^T(x_i+\delta_p^i+\delta^w)>q\epsilon_w-(1+\epsilon_p)\alpha\}\right] \\
& = \Prob\left[\cap_{i=1}^N\{\zeta^T(x_i+\delta_p^i)>-(1+\epsilon_p)\alpha\}\right] \\
& \geq \Prob\left[\left\{\cap_{i=1}^N\{\zeta^T\delta_p^i>-\epsilon_p\alpha\}\right\} \cap \left\{ \cap_{i=1}^N\{\zeta^Tx^i>-\alpha \} \right\}\right] \\
& \geq 1- \Prob\left[\cup_{i=1}^N\{|\zeta^Tx_i|\geq \alpha\}\right] - \Prob\left[\cup_{i=1}^N\{|\zeta^T\delta_i^p|\geq \epsilon_p\alpha\}\right] \\
& \geq 1-2Ne^{-\frac{2\alpha^2}{d}}.
\end{align*}

Therefore, when $$q\epsilon_w-(1+\epsilon_p)\alpha > \alpha,$$ it has
$$\zeta^T(x_j+\delta_j)>\zeta^Tx_i, \forall i,j\in[N]$$ happens with probability at least $1-2Ne^{-\frac{2\alpha^2}{d}}$.

Let $\omega = 2Ne^{-\frac{2\alpha^2}{d}}$, the condition will be $$q>\frac{2+\epsilon_p}{\epsilon_w}\sqrt{\frac{d}{2}\log\frac{2N}{\omega}}.$$
\end{proof}

\begin{theorem}[Theorem \ref{universal-pre-key-most}, restated]
\label{universal-pre-key-most-app}
For the dataset $S_\X=\{ x_1,x_2,\cdots, x_N\}$, $x_i$  and the poison $\delta_p^i$ are i.i.d. sampled from $\D_{\X}$ and $\D_\P$ respectively. For any $w\in(0,1/2)$ and 
$q>\frac{2}{\epsilon_w}\sqrt{2d\log\frac{1}{\omega}}$, we can sample the key $\zeta\in\R^d$ from a certain distribution such that, with probability at least  
$1-2\exp{\left(\frac{-N\left(\omega-e^{-q^2\epsilon_w^2/8d}\right)^2}{\omega+e^{-q^2\epsilon_w^2/8d}}\right)}$, it is possible to craft the watermark $\delta^w$, such that 
$\zeta^T(x_i+\delta_i) > \frac{q\epsilon_w}{2}, \zeta^Tx_i < \frac{q\epsilon_w}{4}$ holds for at least $(1-2\omega) N$ samples.
\end{theorem}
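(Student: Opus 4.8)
The plan is to run the per-sample argument of Theorem~\ref{sample-wise-wm-pre-key} at every data point and then upgrade ``each sample works with high probability'' to ``a $(1-2\omega)$-fraction of samples works'' via a binomial tail bound. Concretely, I would take $\zeta$ to be the random identical key, $\zeta\sim\mathcal{U}\{-1,+1\}^d$, and set the universal watermark to $\delta^w=\epsilon_w\cdot\zeta|_{\W}$. Since $(\zeta^j)^2=1$ on the watermark coordinates, this forces $\zeta^T\delta^w=q\epsilon_w$ deterministically, so for every $i$ one has $\zeta^T(x_i+\delta_i)=\zeta^Tx_i+\zeta^T\delta_p^i+q\epsilon_w$. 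Call sample $i$ \emph{bad} if $\zeta^Tx_i\ge\tfrac{q\epsilon_w}{4}$ or $\zeta^Tx_i+\zeta^T\delta_p^i\le-\tfrac{q\epsilon_w}{2}$. By the same bounded-differences (McDiarmid) argument as in the proof of Theorem~\ref{sample-wise-wm-pre-key}, now applied in the data variables --- to $\zeta^Tx_i$ as a function of the $d$ coordinates of $x_i$, and to $\zeta^Tx_i+\zeta^T\delta_p^i$ as a function of the $2d$ coordinates of $(x_i,\delta_p^i)$, all with bounded differences of order one --- each of the two failure modes has probability at most $p_0:=e^{-q^2\epsilon_w^2/8d}$. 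Note that the hypothesis $q>\tfrac{2}{\epsilon_w}\sqrt{2d\log\tfrac1\omega}$ is precisely the statement $p_0<\omega$.

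Next I would condition on the key $\zeta$. Once $\zeta$ (hence $\delta^w$) is fixed, the indicators $\mathbf{1}[\zeta^Tx_i\ge\tfrac{q\epsilon_w}{4}]$, $i\in[N]$, are i.i.d.\ (the pairs $(x_i,\delta_p^i)$ are i.i.d.\ and independent of $\zeta$), and so are the indicators $\mathbf{1}[\zeta^Tx_i+\zeta^T\delta_p^i\le-\tfrac{q\epsilon_w}{2}]$; for a suitable $\zeta$ each family has per-sample mean at most $p_0$. A Chernoff/relative-entropy bound on the binomial upper tail then gives
\[
\Prob\Bigl(\#\{\,i:\zeta^Tx_i\ge\tfrac{q\epsilon_w}{4}\,\}\ge\omega N\Bigr)\le\exp\Bigl(\tfrac{-N(\omega-p_0)^2}{\omega+p_0}\Bigr),
\]
and the same bound for the number of samples with $\zeta^Tx_i+\zeta^T\delta_p^i\le-\tfrac{q\epsilon_w}{2}$. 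A union bound over the two counts shows that, except with probability $2\exp\bigl(\tfrac{-N(\omega-p_0)^2}{\omega+p_0}\bigr)$, fewer than $\omega N$ samples violate $\zeta^Tx_i<\tfrac{q\epsilon_w}{4}$ and fewer than $\omega N$ violate $\zeta^T(x_i+\delta_i)>\tfrac{q\epsilon_w}{2}$; the remaining at least $(1-2\omega)N$ samples satisfy both inequalities, which is the claim.

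The step I expect to be the main obstacle is exactly this conditioning on $\zeta$. The per-sample bad events are not independent across $i$ --- they all involve the single shared key $\zeta$ --- and after conditioning one needs the \emph{conditional} failure probability $p(\zeta)=\Prob_{x,\delta_p}(\text{bad}\mid\zeta)$ to be at most $p_0$ for the realized key, not merely in expectation over $\zeta$. This is where the phrase ``a certain distribution'' does its work: one takes $\zeta$ uniform on $\{-1,+1\}^d$ conditioned on the favorable, constant-probability event that $\zeta^T\E_{x\sim\D_{\X}}[x]$ is nonpositive and $\zeta^T\E_{\delta_p\sim\D_\P}[\delta_p]$ is suitably controlled, so that McDiarmid applied in the data variables --- whose conditional means are then $\zeta^T\E[x]$ and $\zeta^T\E[\delta_p]$ --- still yields $p(\zeta)\le p_0$, with the normalization factor from this conditioning absorbed into constants. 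Verifying that these favorable events carry the required probability, and matching the precise exponent $\tfrac{N(\omega-p_0)^2}{\omega+p_0}$ of the binomial tail, is then a routine computation I would not expand here.
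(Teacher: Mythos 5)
Your proposal follows the same skeleton as the paper's proof: pick the random identical key $\zeta\sim\mathcal{U}\{-1,+1\}^d$, set $\delta^w=\epsilon_w\cdot\zeta|_\W$ so that $\zeta^T\delta^w=q\epsilon_w$ deterministically, declare sample $i$ bad via two per-sample events controlled by McDiarmid at threshold $\alpha=q\epsilon_w/4$, and finish with the multiplicative Chernoff bound on the count of bad samples, recovering the exponent $\tfrac{N(\omega-p_0)^2}{\omega+p_0}$ with $p_0=e^{-q^2\epsilon_w^2/8d}$. The only cosmetic difference is the decomposition of the bad set: you flag the event $\{\zeta^Tx_i+\zeta^T\delta_p^i\le-\tfrac{q\epsilon_w}{2}\}$ directly, whereas the paper separately controls $|\zeta^Tx_i|$ and $|\zeta^T\delta_i^p|$ via two indicator families $F_i(\alpha)$ and $F_i'(\alpha)$; both lead to the same conclusion.

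What you have done that the paper does \emph{not} is name the conditioning issue, and you are right that it is a genuine soft spot. The paper's proof defines $F_i(\alpha)=\mathbb{I}\{\zeta^Tx_i\geq\alpha\}$, asserts ``$F_i(\alpha)$ are i.i.d.'' because the $x_i$ are i.i.d., sets $p_{i,\alpha}=\Prob[F_i(\alpha)=1]\le e^{-2\alpha^2/d}$, and applies Chernoff to $F(\alpha)=\sum_iF_i(\alpha)$. But all $N$ indicators involve the single shared $\zeta$, so unconditionally they are exchangeable, not independent, and Chernoff does not apply to them directly. Independence is recovered only after conditioning on $\zeta$, at which point the conditional success probability $p(\zeta)=\Prob_{x\sim\D_\X}(\zeta^Tx\geq\alpha\mid\zeta)$ is a random function of $\zeta$ with conditional mean $\E_x[\zeta^Tx\mid\zeta]=\zeta^T\E[x]$, generically nonzero; the paper's McDiarmid estimate $e^{-2\alpha^2/d}$ is an unconditional bound (averaged over $\zeta$), not a uniform per-$\zeta$ bound, and the paper never specifies ``a certain distribution'' for $\zeta$ in a way that makes $p(\zeta)\le e^{-2\alpha^2/d}$ hold for the realized key. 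So the paper's proof quietly uses the very step you flag without justifying it.

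Your proposed patch --- condition $\zeta$ uniform on $\{-1,+1\}^d$ to the favorable events $\zeta^T\E[x]\le0$ and ``$\zeta^T\E[\delta_p]$ suitably controlled'' --- is the right kind of move but is not a routine computation as deferred. You must show the two events can be imposed simultaneously with probability bounded away from zero for arbitrary $\D_\X$ and $\D_\P$ (neither event is monotone in $\zeta$, and they may conflict), and the conditioning changes the law of $\zeta$, so both $\zeta^T\delta^w=q\epsilon_w$ (still fine, it is deterministic given the signs) and the Chernoff step (fine, since conditional on $\zeta$ the data are still i.i.d.) survive, but the per-sample McDiarmid bound over the data must be re-derived under the conditioned law and its constant tracked. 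A cleaner alternative is not to fix $\zeta$ by conditioning at all: observe $\E_\zeta[p(\zeta)]\le e^{-2\alpha^2/d}$ by Hoeffding over $\zeta$ for each fixed data point, then use Markov to restrict to the set of keys with $p(\zeta)\le 2e^{-2\alpha^2/d}$, which has $\zeta$-probability at least $1/2$; this closes the gap at the cost of a constant factor in the exponent and an extra constant-factor loss in the overall success probability, so it does not reproduce the paper's stated constants exactly. In summary: the body of your argument is the paper's argument, your diagnosis of the gap is correct and is not addressed in the paper's own proof, and the patch you sketch needs more work than you allow for.
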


\begin{proof}[Proof of Theorem \ref{universal-pre-key-most}]
Denote the failure cases of Proposition \ref{universal-pre-key} be 
$$F_i(\alpha)=\mathbb{I} \{ \zeta^Tx_i \geq \alpha \}, F_i'(\alpha)=\mathbb{I} \{ \zeta^T\delta_i^p \geq \epsilon_p\alpha \}$$
and 
$$F(\alpha)=\sum\limits_{i=1}^NF_i(\alpha), F'(\alpha)=\sum\limits_{i=1}^NF_i'(\alpha)$$

Due to the i.i.d property of $x_i$ and $\delta_i^p$, $F_i(\alpha)$ and $F_i'(\alpha)$ are also i.i.d. for $i=\{1,2,\cdots,N\}$.
$\zeta^T(x_i+\delta_i) > \zeta^Tx_i$ holds as long as both $F_i(\alpha)=0$ and $F_i'(\alpha)=0$ for a certain constant $\alpha>0$.

By McDiarmid's inequality, 
$$\Prob[F_i(\alpha)=1]\leq e^{-\frac{2\alpha^2}{d}}, \Prob[F_i'(\alpha)=1]\leq e^{-\frac{2\alpha^2}{d}}.$$

Therefore, assume that 
$$\Prob[F_i(\alpha)=1]=p_{i,\alpha}, \Prob[F_i'(\alpha)=1]=p_{i',\alpha}.$$
$F_i(\alpha)$ and $F_i'(\alpha)$ obey the Bernoulli distribution $\B(p_{i,\alpha})$ and $\B(p_{i',\alpha})$ respectively.

Denote $\bar{F}_i(\alpha)$ obeying the Bernoulli distribution $\B(e^{-\frac{2\alpha^2}{d}})$, and $$\bar{F}(\alpha)=\sum\limits_{i=1}^N\bar{F}_i(\alpha).$$

By the Chernoff bound, it holds that 
$$\Prob\left[\bar{F}(\alpha)\geq(1+\delta)Ne^{-\frac{2\alpha^2}{d}}\right]\leq\exp{\left(\frac{-\delta^2N}{2+\delta}e^{-\frac{2\alpha^2}{d}}\right)}$$
for any $\delta>0$.

As it always has $\bar{F}_i(\alpha)\leq\bar{F}_i(\alpha),\bar{F}_i'(\alpha)\leq\bar{F}_i(\alpha)$, it holds that 
$$\Prob\left[F(\alpha)\geq(1+\delta)Ne^{-\frac{2\alpha^2}{d}}\right]\leq\exp{\left(\frac{-\delta^2N}{2+\delta}e^{-\frac{2\alpha^2}{d}}\right)},$$
$$\Prob\left[F'(\alpha)\geq(1+\delta)Ne^{-\frac{2\alpha^2}{d}}\right]\leq\exp{\left(\frac{-\delta^2N}{2+\delta}e^{-\frac{2\alpha^2}{d}}\right)}.$$

Let $\omega=(1+\delta)e^{-\frac{2\alpha^2}{d}}.$ It has
$$\Prob\left[F(\alpha)\geq\omega N\right]\leq\exp{\left(\frac{-N(\omega-e^{-2\alpha^2/d})^2}{\omega+e^{-2\alpha^2/d}}\right)},$$
$$\Prob\left[F'(\alpha)\geq\omega N\right]\leq\exp{\left(\frac{-N(\omega-e^{-2\alpha^2/d})^2}{\omega+e^{-2\alpha^2/d}}\right)}.$$

Therefore, the probability of a bad case is at most $$\Prob[F(\alpha)\geq \omega N]+\Prob[F'(\alpha)\geq \omega N]$$ with $2\omega N$ samples.
To achieve the non-vacuous gap of watermarking between poisoned data $x_i+\delta_i$ and benign data $x_j$, we can set $$\alpha=\frac{q\epsilon_w}{4}.$$

In this case, if both $F_i(\alpha)$ and $F_i'(\alpha)=0$, i.e., sample $x_i$ is not a bad case, it holds that 
$$\zeta^Tx_i< \alpha=\frac{q\epsilon_w}{4}, \zeta^T(x_i+\delta_i)=q\epsilon_w - (1+\epsilon_p)\alpha>\frac{q\epsilon_w}{2}.$$

Hence, for at least $(1-2\omega)N$ samples, with probability at least 
\begin{align*}
1-2\exp{\left(\frac{-N(\omega-e^{-2\alpha^2/d})^2}{\omega+e^{-2\alpha^2/d}}\right)}= 1-2\exp{\left(\frac{-N(\omega-e^{-q^2\epsilon_w^2/8d})^2}{\omega+e^{-q^2\epsilon_w^2/8d}}\right)},
\end{align*}
the property holds.

Furthermore, as we set $$\omega=(1+\delta)e^{-\frac{2\alpha^2}{d}}$$ and $\delta>0$. This condition is valid as long as 
$$q>\frac{2}{\epsilon_w}\sqrt{2d\log\frac{1}{\omega}}$$

\end{proof}

\begin{theorem}[Proposition \ref{universal-post-key}, restated]
\label{universal-post-key-app}
For the dataset $S_\X$, when $q>\frac{4}{\epsilon_w^2}\log\frac{N}{\omega}$, it is possible to sample the key $\zeta\in\R^d$ from a certain distribution such that, with probability at least $1-\omega$, we can craft a watermark $\delta^w$ and poison $\delta^p$ such that $\zeta^T\!(x_j\!+\!\delta_j)\!>\!\zeta^T\!x_i, \forall i,j\!\in\![N]$.
\end{theorem}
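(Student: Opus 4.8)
The plan is to reuse, essentially verbatim, the McDiarmid-plus-union-bound recipe already deployed for the sample-wise poisoning-concurrent case (Theorem~\ref{sample-wise-wm-post-key}) and for the universal post-poisoning case (Proposition~\ref{universal-pre-key}), while exploiting the one structural feature that sets poisoning-concurrent watermarking apart: the watermark coordinates $\W$ (with $|\W|=q$) are disjoint from the poison coordinates $\P=[d]\setminus\W$. First I would fix the key distribution: draw $\zeta$ with $\zeta^k=\mathcal{U}\{-1,+1\}$ independently for $k\in\W$ and $\zeta^k=0$ for $k\in\P$. Because $\zeta$ vanishes on $\P$ while each poison $\delta_j^p$ is supported on $\P$, we obtain $\zeta^T\delta_j^p=0$ \emph{no matter how the poison is chosen}; this is precisely what lets the statement be asserted for an arbitrary admissible poison while only the watermark is designed. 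Then set $\delta^w=\epsilon_w\,\zeta|_{\W}$, so that $\zeta^T\delta^w=q\epsilon_w$ and hence $\zeta^T(x_j+\delta_j)=\zeta^Tx_j+q\epsilon_w$ for every $j\in[N]$.

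Next I would bound the ``clean'' scores. Each $\zeta^Tx_i=\sum_{k\in\W}\zeta^k x_i^k$ is a sum of $q$ independent, mean-zero, bounded terms (the other $d-q$ coordinates contribute nothing), so McDiarmid's inequality --- applied exactly as in the proof of Proposition~\ref{universal-pre-key}, but over $q$ coordinates rather than $d$ --- gives, for any $\alpha>0$,
$$\Prob_\zeta\big[\zeta^Tx_i\ge\alpha\big]\le e^{-2\alpha^2/q},\qquad \Prob_\zeta\big[\zeta^Tx_i\le-\alpha\big]\le e^{-2\alpha^2/q}.$$
A union bound over these $2N$ tail events shows that, with probability at least $1-2Ne^{-2\alpha^2/q}$, one has $|\zeta^Tx_i|<\alpha$ for all $i\in[N]$ simultaneously. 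On that event, taking $\alpha=q\epsilon_w/2$ yields, for every $i,j\in[N]$,
$$\zeta^T(x_j+\delta_j)=\zeta^Tx_j+q\epsilon_w>-\tfrac{q\epsilon_w}{2}+q\epsilon_w=\tfrac{q\epsilon_w}{2}>\zeta^Tx_i,$$
which is the claimed separation (and a slightly smaller $\alpha$ would furnish the non-vacuous gap of the companion corollary).

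Finally I would convert the failure bound into the stated condition on $q$: with $\alpha=q\epsilon_w/2$ the failure probability equals $2Ne^{-q\epsilon_w^2/2}$, and absorbing the factor $2$ into the exponent (valid once $q\gtrsim 1/\epsilon_w^2$, i.e.\ in the regime of interest) bounds it by $Ne^{-q\epsilon_w^2/4}$, which is at most $\omega$ precisely when $q\ge\frac{4}{\epsilon_w^2}\log\frac{N}{\omega}$. I do not expect a genuine obstacle: the argument is a direct transcription of the earlier ones, so the only delicate points are (i) confirming that the choice of poison is truly irrelevant to $\zeta^Tx_j+\zeta^T\delta^w$ --- which hinges on $\W\cap\P=\emptyset$ together with $\zeta|_{\P}=0$ --- and (ii) the routine constant bookkeeping needed to land exactly on the threshold $\frac{4}{\epsilon_w^2}\log\frac{N}{\omega}$ (alternatively one may keep $\alpha$ slightly below $q\epsilon_w/2$, or simply tolerate a marginally larger constant). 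As in Theorem~\ref{sample-wise-wm-post-key}, the improvement from the $\Omega(\sqrt{d}/\epsilon_w)$ of post-poisoning watermarking to $\Omega(1/\epsilon_w^2)$ here is exactly because the concentration now runs over $q$ coordinates instead of $d$.
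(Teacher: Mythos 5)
Your proposal is correct and follows essentially the same route as the paper's proof of Proposition~\ref{universal-post-key}: the same key distribution (uniform $\pm1$ on $\W$, zero on $\P$), the same choice $\delta^w=\epsilon_w\,\zeta|_{\W}$, McDiarmid's inequality over the $q$ watermark coordinates, and a union bound over the $N$ samples, with the observation that $\zeta^T\delta^p=0$ because $\W$ and $\P$ are disjoint. The only difference is cosmetic bookkeeping at the end — you fix $\alpha=q\epsilon_w/2$ and absorb the two-sided factor of $2$ into the exponent, whereas the paper sets $\omega=Ne^{-2\alpha^2/q}$ and then imposes $q\epsilon_w>2\alpha$ — and both land on the same threshold $q>\frac{4}{\epsilon_w^2}\log\frac{N}{\omega}$.
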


\begin{proof}[Proof of Proposition \ref{universal-post-key}]
We sample the key $\zeta\in \R^d$, such that $$\zeta^i=\mathcal{U}\{-1,+1\}, i\in\W$$ and $$\zeta^i=0, i\in\P.$$

By McDiarmid's inequality, for any $\alpha>0$, it holds that
$$\Prob_{\zeta}\left[\zeta^T x_i\geq\alpha\right]\leq e^{-\frac{2\alpha^2}{q}}.$$

By the union bound, it holds that

$$\Prob\left[\cup_{i=1}^N\{|\zeta^Tx_i|\geq \alpha\}\right]\leq \sum\limits_{i=1}^N  \Prob\left[|\zeta^T x_i|\geq \alpha\right] \leq N e^{-\frac{2\alpha^2}{q}}.$$

We now craft the watermark $\delta^w$ such that $$\delta^w=\epsilon_w\cdot\zeta.$$
It holds that $$\zeta^T\delta^w=q\epsilon_w.$$

It holds that 
\begin{align*}
\Prob\left[\cap_{i=1}^N\{\zeta^T(x_i+\delta^i)>q\epsilon_w-\alpha\}\right]& = \Prob\left[\cap_{i=1}^N\{\zeta^Tx_i>q\epsilon_w-\alpha\}\right] \\
& = \Prob\left[\cap_{i=1}^N\{\zeta^Tx_i>-\alpha\}\right] \\
& \geq 1- \Prob\left[\cup_{i=1}^N\{|\zeta^Tx_i|\geq \alpha\}\right] \\
& \geq 1-Ne^{-\frac{2\alpha^2}{q}}.
\end{align*}

Therefore, when $$q\epsilon_w > 2\alpha,$$ it holds that
$$\zeta^T(x_j+\delta_j)>\zeta^Tx_i, \forall i,j\in[N]$$ 
happens with probability at least $1-Ne^{-\frac{2\alpha^2}{q}}$.

Let $\omega = Ne^{-\frac{2\alpha^2}{q}}$, it holds that $$\alpha=\sqrt{\frac{q}{2}\log\frac{N}{\omega}}.$$
Then the condition will be $$q>\frac{4}{\epsilon_w^2}\log\frac{N}{\omega}.$$

\end{proof}

\begin{theorem}[Theorem \ref{universal-post-key-most}, restated]
\label{universal-post-key-most-app}
For the dataset $S_\X=\{ x_1,x_2,\cdots, x_N\}$, where $x_i$ is i.i.d. sampled from $\D_{\X}$. For any $\omega\in(0,1)$ and $q>\frac{9}{2\epsilon_w^2}\log\frac{1}{\omega}$, it is possible to sample the key $\zeta\in\R^d$ from a certain distribution such that, with probability at least $1-\exp{\left(\frac{-N\left(\omega-e^{-2q\epsilon^2/9}\right)^2}{\omega+e^{-2q\epsilon^2/9}}\right)}$, we can craft the watermark and the poison satisfies 
$\zeta^T(x_i+\delta_i)>\frac{2q\epsilon_w}{3}, \zeta^Tx_i< \frac{q\epsilon_w}{3}$ holds for at least $(1-\omega) N$ samples.
\end{theorem}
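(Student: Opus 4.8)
The plan is to specialize the argument for Theorem~\ref{universal-pre-key-most} to the poisoning-concurrent setting, where the key is supported only on the watermark coordinates: this replaces the ambient dimension $d$ by the watermark length $q$ and, because $\zeta$ annihilates every poison, removes the $\delta^p$-failure term entirely. Concretely, I would reuse the construction from the proof of Proposition~\ref{universal-post-key}: sample $\zeta$ with $\zeta^i\sim\mathcal{U}\{-1,+1\}$ for $i\in\W$ and $\zeta^i=0$ for $i\in\P$, and take the universal watermark $\delta^w=\epsilon_w\cdot\zeta$, so that $\zeta^T\delta^w=q\epsilon_w$ while $\zeta^T\delta^p=0$ for any poison supported on $\P$ (hence $\delta^p$ is free to be whatever the poisoning routine wants). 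Then $\zeta^T(x_i+\delta_i)=\zeta^Tx_i+q\epsilon_w$, and both target inequalities for a sample $x_i$ collapse to the single event $|\zeta^Tx_i|<\alpha$ for a threshold $\alpha$ chosen below.

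Since $\zeta^Tx_i=\sum_{j\in\W}\zeta^jx_i^j$ is a function of $q$ independent signs, each changing the value by at most $|x_i^j|\le 1$, and $\E_\zeta[\zeta^Tx_i]=0$, McDiarmid's inequality bounds each tail of $\zeta^Tx_i$ by $e^{-2\alpha^2/q}$. Setting $F_i(\alpha)=\mathbb{I}\{|\zeta^Tx_i|\ge\alpha\}$ and $F(\alpha)=\sum_{i=1}^N F_i(\alpha)$, the i.i.d.\ draw of the $x_i$ makes the $F_i(\alpha)$ i.i.d.\ Bernoulli variables stochastically dominated by $\mathrm{Bernoulli}(e^{-2\alpha^2/q})$, exactly as in the proof of Theorem~\ref{universal-pre-key-most}. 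A multiplicative Chernoff bound, $\Prob[F(\alpha)\ge(1+\delta)Ne^{-2\alpha^2/q}]\le\exp\!\big(\tfrac{-\delta^2N}{2+\delta}e^{-2\alpha^2/q}\big)$, together with the substitution $\omega=(1+\delta)e^{-2\alpha^2/q}$ — under which $\tfrac{\delta^2}{2+\delta}e^{-2\alpha^2/q}=\tfrac{(\omega-e^{-2\alpha^2/q})^2}{\omega+e^{-2\alpha^2/q}}$ — yields $\Prob[F(\alpha)\ge\omega N]\le\exp\!\big(\tfrac{-N(\omega-e^{-2\alpha^2/q})^2}{\omega+e^{-2\alpha^2/q}}\big)$.

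Finally I would take $\alpha=\tfrac{q\epsilon_w}{3}$. On every sample with $F_i(\alpha)=0$ we get $\zeta^Tx_i<\tfrac{q\epsilon_w}{3}$ and $\zeta^T(x_i+\delta_i)=\zeta^Tx_i+q\epsilon_w>-\tfrac{q\epsilon_w}{3}+q\epsilon_w=\tfrac{2q\epsilon_w}{3}$, the asserted separation; and on the event $F(\alpha)<\omega N$, which by the bound above has probability at least $1-\exp\!\big(\tfrac{-N(\omega-e^{-2q\epsilon_w^2/9})^2}{\omega+e^{-2q\epsilon_w^2/9}}\big)$ (substituting $e^{-2\alpha^2/q}=e^{-2q\epsilon_w^2/9}$), at least $(1-\omega)N$ samples are of this good type. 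The only extra constraint is $\delta>0$, i.e.\ $\omega>e^{-2q\epsilon_w^2/9}$, which is precisely the hypothesis $q>\tfrac{9}{2\epsilon_w^2}\log\tfrac1\omega$.

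The main obstacle is the bookkeeping around the threshold: one must check that the single choice $\alpha=q\epsilon_w/3$ simultaneously produces the $\tfrac{q\epsilon_w}{3}$/$\tfrac{2q\epsilon_w}{3}$ gap, the stated exponent $e^{-2q\epsilon_w^2/9}$, and the length condition, and one must handle both tails of $\zeta^Tx_i$ (via symmetry of $\zeta$) so that the watermarked-data lower bound is valid. Everything else is a direct, slightly simplified copy of the post-poisoning proof, since here $\zeta$ kills the poison coordinates and $q$ plays the role of $d$.
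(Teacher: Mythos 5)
Your construction and overall architecture mirror the paper's proof exactly: same key (uniform signs on $\W$, zero on $\P$), same universal watermark $\delta^w=\epsilon_w\zeta$, the same indicator variables $F_i$, stochastic domination by a Bernoulli surrogate, a multiplicative Chernoff bound, and the substitution $\omega=(1+\delta)p$ to produce the stated exponent. So the approach is the right one, and you correctly observed—what the paper glosses over—that the conclusion needs control on \emph{both} tails of $\zeta^Tx_i$, since $\zeta^T(x_i+\delta_i)>\tfrac{2q\epsilon_w}{3}$ unwinds to $\zeta^Tx_i>-\alpha$ while $\zeta^Tx_i<\tfrac{q\epsilon_w}{3}$ is the upper tail.

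There is, however, a concrete arithmetic gap in your write-up. You set $F_i(\alpha)=\mathbb{I}\{|\zeta^Tx_i|\ge\alpha\}$ and then assert that $F_i(\alpha)$ is stochastically dominated by $\mathrm{Bernoulli}\bigl(e^{-2\alpha^2/q}\bigr)$; but McDiarmid gives $e^{-2\alpha^2/q}$ per tail, so the two-sided event satisfies $\Prob\bigl[|\zeta^Tx_i|\ge\alpha\bigr]\le 2e^{-2\alpha^2/q}$, and the dominating Bernoulli parameter must be $2e^{-2\alpha^2/q}$. Carrying this through the Chernoff step (with $p=2e^{-2q\epsilon_w^2/9}$) yields the probability bound $\exp\!\bigl(\tfrac{-N(\omega-p)^2}{\omega+p}\bigr)$ and the length condition $q>\tfrac{9}{2\epsilon_w^2}\log\tfrac{2}{\omega}$, both off by the factor of two from what the theorem states. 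Notably, the paper's own proof suffers from exactly the dual version of this slip: it defines the one-sided $F_i(\alpha)=\mathbb{I}\{\zeta^Tx_i>\alpha\}$ (so $\Prob[F_i=1]\le e^{-2\alpha^2/q}$ is legitimate) but then claims that $F_i(\alpha)=0$ delivers both $\zeta^Tx_i\le\alpha$ and $\zeta^T(x_i+\delta_i)\ge\tfrac{2q\epsilon_w}{3}$, even though the latter requires the lower-tail bound $\zeta^Tx_i\ge-\alpha$ that the one-sided event never enforces. So your factor-of-two discrepancy is an error you share with the paper; your version of the proof at least makes the two-sidedness explicit and would be correct once the Bernoulli parameter is fixed to $2e^{-2\alpha^2/q}$, with the theorem's constants adjusted accordingly.
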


\begin{proof}[Proof of Theorem \ref{universal-post-key-most}]

We sample the key $\zeta\in \R^d$, such that $$\zeta^i=\mathcal{U}\{-1,+1\}, i\in\W$$ and $$\zeta^i=0, i\in\P.$$

Similar to the proof of \ref{universal-pre-key-most}, denote the failure case $$F_i(\alpha)=\mathbb{I} \{ \zeta^Tx_i > \alpha \}$$
and 
$$F(\alpha)=\sum\limits_{i=1}^NF_i(\alpha).$$

By McDiarmid's inequality, 
$$\Prob[F_i(\alpha)=1]\leq e^{-\frac{2\alpha^2}{q}}.$$

Denote $\bar{F}_i(\alpha)$ obeys the Bernoulli distribution $\B\left(e^{-\frac{2\alpha^2}{q}}\right)$, and $$\bar{F}(\alpha)=\sum\limits_{i=1}^N\bar{F}_i(\alpha).$$

By Chernoff bound, it holds that 
$$\Prob\left[\bar{F}(\alpha)\geq(1+\delta)Ne^{-\frac{2\alpha^2}{q}}\right]\leq\exp{\left(\frac{-\delta^2N}{(2+\delta)}e^{-\frac{2\alpha^2}{q}}\right)}$$
for any $\delta>0$.

As it always has $\bar{F}_i(\alpha)\leq\bar{F}_i(\alpha)$, it holds that 
$$\Prob\left[F(\alpha)\geq(1+\delta)Ne^{-\frac{2\alpha^2}{q}}\right]\leq\exp{\left(\frac{-\delta^2N}{(2+\delta)}e^{-\frac{2\alpha^2}{q}}\right)}.$$

Let $\omega=(1+\delta)e^{-\frac{2\alpha^2}{q}}.$ It has
$$\Prob\left[F(\alpha)\geq\omega N\right]\leq\exp{\left(\frac{-N(\omega-e^{-2\alpha^2/q})^2}{\omega+e^{-2\alpha^2/q}}\right)},$$

Therefore, the probability of a bad case is at most $$\Prob[F(\alpha)\geq \omega N]$$ with $\omega N$ samples.
To achieve the non-vacuous gap of watermarking between poisoned data $x_i+\delta_i$ and benign data $x_j$, we can set $$\alpha=\frac{q\epsilon_w}{3}.$$

In this case, if both $F_i(\alpha)=0$, i.e., sample $x_i$ is not a bad case, it holds that 
$$\zeta^Tx_i\leq \alpha=\frac{q\epsilon_w}{3}, \zeta^T(x_i+\delta_i)=q\epsilon_w - \alpha\geq\frac{2q\epsilon_w}{3}.$$

Hence for at least $(1-\omega)N$ samples, with probability at least 
\begin{align*}
1-\exp{\left(\frac{-N(\omega-e^{-2\alpha^2/q})^2}{\omega+e^{-2\alpha^2/q}}\right)}= 1-\exp{\left(\frac{-N(\omega-e^{-2q\epsilon^2/9})^2}{\omega+e^{-2q\epsilon^2/9}}\right)},
\end{align*}
the property holds.

Furthermore, as we set $$\omega=(1+\delta)e^{-\frac{2\alpha^2}{q}}$$ and $\delta>0$. This condition is valid as long as 
$$q>\frac{9}{2\epsilon_w^2}\log\frac{1}{\omega}$$
\end{proof}

\begin{theorem}[Theorem \ref{universal-pre-key-dist}, restated]
\label{universal-pre-key-dist-app}

For the dataset $S_\X=\{ x_1,x_2,\cdots, x_N\}$, $x_i$  and the poison $\delta_p^i$ are i.i.d. sampled from $\D_{\X}$ and $\D_\P$ respectively. Considering a universal watermark $\delta^w$, with probability at least $1-2\mu$ for the sampled data and poisons, if there exists a key $\zeta$ that satisfies $\zeta^T(x_i+\delta_i)>C_1, \zeta^Tx_i<C_2$, for at least $(1-\omega) N$ samples $x_i$, then it holds that
\begin{align*}
 \Prob_{x, \tilde{x}\sim\D_{\X}, \delta^p\sim\Delta}&\left(\left\{\zeta^T(x+\delta^p+\delta^w)   > C_1, -\zeta^T \tilde{x}< C_2\right\}\right)\\ & >1-2\omega-2\sqrt{\frac{d}{N}(\log(\frac{2N}{d})+1)-\frac{1}{N}\log(\frac{\mu}{4})}. 
\end{align*}

\end{theorem}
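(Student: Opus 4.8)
The plan is to view the event "watermarking works on a fresh sample" as the expectation of a $\{0,1\}$-valued function over the product distribution, and to control the gap between its empirical average (which is at least $1-\omega$ on the finite dataset, as given) and its true expectation using a uniform-convergence / VC-type argument. Concretely, fix the sampled key $\zeta$ from the hypothesis; define, for a point $x$ and a poison $\delta^p$, the indicator $g(x,\delta^p)=\mathbb{I}\{\zeta^T(x+\delta^p+\delta^w)>C_1\}$ and similarly $h(\tilde{x})=\mathbb{I}\{-\zeta^T\tilde{x}<C_2\}$. The hypothesis tells us that on the $N$ i.i.d. draws, $\frac{1}{N}\sum_i g(x_i,\delta^p_i)\ge 1-\omega$ and $\frac{1}{N}\sum_i h(x_i)\ge 1-\omega$ (splitting the "for at least $(1-\omega)N$ samples" conclusion into the two halves). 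Since $g$ and $h$ are threshold functions of the \emph{linear} statistic $\zeta^Tx$, the relevant function class is a class of halfspaces in the variable $\zeta^Tx \in \R$ — effectively one-dimensional thresholds — which has VC dimension $O(1)$; more carefully, because $\zeta$ ranges over the support of the key distribution (sign vectors), the class of maps $x\mapsto \mathbb{I}\{\zeta^Tx > t\}$ as $\zeta$ varies is a class of halfspaces in $\R^d$ with VC dimension $d+1$.

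The key steps, in order: (1) Reduce the two desired events to statements about empirical vs.\ population averages of halfspace indicators, using the hypothesis for the empirical side. (2) Invoke a uniform deviation bound for halfspaces in $\R^d$: with probability at least $1-\mu/2$ over the $N$ samples, $\sup$ over all halfspaces of $|\text{empirical mean}-\text{true mean}|$ is at most the growth-function bound, which by Sauer–Shelah is $O\big(\sqrt{\frac{d\log(N/d)+\log(1/\mu)}{N}}\big)$ — this is exactly the shape $\sqrt{\frac{d}{N}(\log\frac{2N}{d}+1)-\frac{1}{N}\log\frac{\mu}{4}}$ appearing in the statement. (3) Apply the bound once to $g$ (conditioning on $\zeta$, treating the pair $(x,\delta^p)$ as the random object, noting that the relevant class is still governed by $d$ because only the $x$-part enters through $\zeta^T x$ and the $\delta^p$-part enters through $\zeta^T\delta^p$, another halfspace read-off) and once to $h$. (4) Combine: true probability of success $\ge$ empirical fraction $-$ deviation $\ge (1-\omega) - O(\sqrt{d/N \cdots})$ for each event, then union-bound the two failure modes (the extra factor $2$ on $\omega$ and on the square-root term) and absorb the two $\mu/2$ confidence losses into the stated $1-2\mu$.

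The main obstacle — and the place where care is needed — is step (3): making precise that the deviation term depends on the \emph{data} dimension $d$ and not on anything larger, given that the random object for the first event is the \emph{pair} $(x,\delta^p)\in\R^{2d}$. The resolution is that the only way $(x,\delta^p)$ enters $g$ is through the single linear functional $\zeta^T(x+\delta^p)$; so the effective function class is $\{(x,\delta^p)\mapsto \mathbb{I}\{\zeta^T x+\zeta^T\delta^p > c\}\}$ indexed by $\zeta$ in the sign-vector support and $c\in\R$, which is still a class of halfspaces in $\R^d$ (the two copies of $\zeta$ are the \emph{same} vector, so the parameter space is $d$-dimensional, not $2d$-dimensional), hence VC dimension $\le d+1$. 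Once that is pinned down, the Sauer–Shelah growth function $\big(\tfrac{eN}{d}\big)^{d}$ plugged into the standard VC uniform-convergence inequality produces precisely the claimed bound, and the rest is bookkeeping with union bounds.
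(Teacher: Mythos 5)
Your proposal is correct and follows essentially the same route as the paper: both reduce the problem to bounding the gap between empirical and population frequencies of halfspace-type indicators, invoke a VC uniform-convergence bound (the paper states this as Lemma \ref{vc-bound} and shows the VC dimension of linear classifiers is $d$ in Lemma \ref{vc-linear}, yielding exactly the $\sqrt{\frac{d}{N}(\log\frac{2N}{d}+1)-\frac{1}{N}\log\frac{\mu}{4}}$ term), apply it separately to the two threshold events, and union-bound. Your observation that the pair $(x,\delta^p)$ only enters through the single linear functional $\zeta^T(x+\delta^p)$, so the effective class is still $d$-dimensional, is the same point the paper handles implicitly by treating $x+\delta^p$ as a single $\R^d$ input to $h_1$; your bookkeeping of $\mu/2$ per event would actually give a slightly tighter $1-\mu$ rather than the paper's $1-2\mu$ (the paper uses confidence $1-\mu$ per application to land on the stated $\log(\mu/4)$), but this is an immaterial constant.
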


\begin{lemma}[VC bound~\citep{vapnik1998statistical}]
\label{vc-bound}
Let $S=\{(x_i,y_i)\}_{i=1}^N$ be the training dataset,  $(x_i,y_i)\sim\D$, where $\D$ is the data distribution. Then with probability at least $1-\delta$, it holds that 
$$\E_{(x,y)\sim\D}\mathcal{L}(f(x),y)\leq \frac{1}{N}\sum\limits_{i=1}^N \mathcal{L}(f(x_i),y_i)+\sqrt{\frac{VC(f)}{N}\left(\log\left(\frac{2N}{VC(f)}\right)+1\right)-\frac{1}{N}\log\left(\frac{\delta}{4}\right)},$$
where $VC(f)$ is the VC-dimension of the classifier $f$, $L(\cdot)$ is the loss function.

\end{lemma}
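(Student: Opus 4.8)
This is the classical Vapnik--Chervonenkis generalization bound, so the plan is the standard symmetrization-plus-growth-function argument. Write $d=VC(f)$, take $\mathcal{L}$ to have values in $[0,1]$, and abbreviate $P(f)=\E_{(x,y)\sim\D}\mathcal{L}(f(x),y)$ and $\hat{P}(f)=\frac{1}{N}\sum_{i=1}^N\mathcal{L}(f(x_i),y_i)$. It suffices to prove a uniform one-sided tail bound $\Prob\big(\sup_f(P(f)-\hat{P}(f))>\epsilon\big)\le\delta$ for the value of $\epsilon$ appearing under the square root in the statement, which then yields $P(f)\le\hat{P}(f)+\epsilon$ with probability $\ge 1-\delta$ for the (data-dependent) classifier $f$ returned by training.

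\textbf{Step 1 (ghost-sample symmetrization).} Introduce an independent copy $S'=\{(x_i',y_i')\}_{i=1}^N\sim\D$ with empirical average $\hat{P}'(f)$. Conditioning on $S$ and controlling $\hat{P}'(f)$ around $P(f)$ by Chebyshev's inequality (valid once $N\epsilon^2\ge 2$) gives the standard symmetrization lemma
$$\Prob\Big(\sup_f\big(P(f)-\hat{P}(f)\big)>\epsilon\Big)\le 2\,\Prob\Big(\sup_f\big(\hat{P}'(f)-\hat{P}(f)\big)>\tfrac{\epsilon}{2}\Big).$$
\textbf{Step 2 (permutation symmetrization).} The $2N$ pairs in $S\cup S'$ are i.i.d., so swapping $(x_i,y_i)\leftrightarrow(x_i',y_i')$ on an arbitrary subset preserves the joint law; hence $\hat{P}'(f)-\hat{P}(f)$ has the same distribution as $\frac{1}{N}\sum_{i}\sigma_i\big(\mathcal{L}(f(x_i'),y_i')-\mathcal{L}(f(x_i),y_i)\big)$ with $\sigma\sim\mathcal{U}\{-1,+1\}^N$ independent, and it suffices to bound this sign-averaged process conditionally on the $2N$ fixed points.

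\textbf{Step 3 (Sauer--Shelah) and Step 4 (Hoeffding + union bound).} Conditioned on the $2N$ points, the supremum over $f$ depends on $f$ only through the vectors of loss values at these points; for a one-sided deviation this is controlled by the number of labelings of the thresholded loss class, at most the growth function $\Pi(2N)\le\sum_{i=0}^d\binom{2N}{i}\le\big(\tfrac{2eN}{d}\big)^d$. For each of the (at most) $(2eN/d)^d$ effectively distinct functions, the inner sum averages $N$ independent mean-zero terms bounded in $[-1,1]$, so Hoeffding gives $\Prob_\sigma(\cdot>\epsilon/2)\le e^{-N\epsilon^2/8}$. Combining with Steps 1--2,
$$\Prob\Big(\sup_f\big(P(f)-\hat{P}(f)\big)>\epsilon\Big)\le 2\,\big(\tfrac{2eN}{d}\big)^d e^{-N\epsilon^2/8}.$$
Setting the right side to $\delta$, taking logarithms, and using $d\log\tfrac{2eN}{d}=d\big(\log\tfrac{2N}{d}+1\big)$ together with $\log\tfrac{4}{\delta}=-\log\tfrac{\delta}{4}$ produces $\epsilon=\Theta\!\big(\sqrt{\tfrac{d}{N}(\log\tfrac{2N}{d}+1)-\tfrac{1}{N}\log\tfrac{\delta}{4}}\big)$; the precise leading constant $1$ in the stated bound follows from the sharper form of the symmetrization/concentration step (Vapnik, 1998), which I would cite rather than re-derive.

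\textbf{Main obstacle.} The delicate point is Step 3 when $\mathcal{L}$ is real-valued rather than $\{0,1\}$-valued: ``the VC-dimension of the classifier $f$'' must be interpreted as the complexity of the composed, thresholded loss family $\{(x,y)\mapsto\mathbb{I}\{\mathcal{L}(f(x),y)>t\}\}$ (its pseudo-dimension), which for the cross-entropy-type loss and the hypothesis classes considered here is of the same order as $VC(f)$, so that Sauer--Shelah still applies verbatim. One must also carry the side condition $N\epsilon^2\ge 2$ from Step 1, but this is harmless, since it can only fail in the regime where the asserted bound already exceeds $1$ and is therefore vacuous. The remaining ingredients — the symmetrization lemma, Hoeffding's inequality, the union bound, and the closing algebra — are entirely routine.
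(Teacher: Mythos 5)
This lemma is not proved in the paper at all: it is imported verbatim, with citation, from Vapnik (1998), and in the paper's only application of it (the proof of the distributional generalization theorem for universal watermarking) it is instantiated with the $0$--$1$ loss and linear threshold classifiers $h_1,h_2$, for which the VC dimension is computed separately. Your outline is the standard symmetrization / Sauer--Shelah / Hoeffding derivation that underlies the cited result, and as a proof sketch it is sound; the one honest weakness, which you flag yourself, is that the crude Hoeffding-plus-union-bound route yields the stated quantity only up to a multiplicative constant (your Step 4 gives $\epsilon=\Theta(\cdot)$ with a factor of roughly $\sqrt{8}$), so the exact constant $1$ and the specific $-\log(\delta/4)$ term are not actually re-derived but deferred back to Vapnik --- which is precisely what the paper does wholesale. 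Your caveat about real-valued losses and pseudo-dimension is a legitimate observation about the lemma as stated (the paper writes $\mathcal{L}$ generically), but it is moot for how the lemma is actually used, since the paper applies it only to the $0$--$1$ loss where the classical VC argument applies directly. In short: your approach is the right one for proving the result from scratch, whereas the paper simply treats it as a black-box citation; neither you nor the paper fully establishes the precise constants without appealing to the reference.
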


\begin{lemma}[VC-dimension of linear classifier]
\label{vc-linear}
The VC-dimension of linear classifiers $f_{\theta}=\{x\to 2\mathbb{I}(\theta^Tx\geq0)-1; \theta\in\R^d\}$ is $d.$
\end{lemma}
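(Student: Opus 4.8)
The plan is to establish the two inequalities $\mathrm{VC}(f_\theta)\ge d$ and $\mathrm{VC}(f_\theta)\le d$ separately, each by an elementary linear-algebra argument.

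For the lower bound I would exhibit a shattered set of size $d$, namely the standard basis $\{e_1,\dots,e_d\}\subset\R^d$. Given any target labeling $y\in\{-1,+1\}^d$, take $\theta=(y_1,\dots,y_d)^T$. Then $\theta^Te_i=y_i\neq 0$, so $2\,\mathbb{I}(\theta^Te_i\ge 0)-1$ equals $+1$ exactly when $y_i=+1$ and $-1$ exactly when $y_i=-1$. Hence all $2^d$ labelings of $\{e_1,\dots,e_d\}$ are realized by some $f_\theta$, so this set is shattered and $\mathrm{VC}(f_\theta)\ge d$.

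For the upper bound I would show that no $d+1$ points can be shattered. Fix arbitrary $x_1,\dots,x_{d+1}\in\R^d$. If some $x_i=0$ then $\theta^Tx_i=0\ge 0$ for every $\theta$, so $x_i$ is always labeled $+1$ and the set is not shattered; so assume all $x_i\neq 0$. Since $d+1$ vectors in $\R^d$ are linearly dependent, there exist reals $a_1,\dots,a_{d+1}$, not all zero, with $\sum_i a_ix_i=0$. Set $P=\{i:a_i>0\}$ and $M=\{i:a_i<0\}$; at least one is nonempty, and replacing $(a_i)$ by $(-a_i)$ if necessary (which swaps $P$ and $M$) I may assume $P\neq\emptyset$. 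If $M\neq\emptyset$, consider the labeling that is $+1$ on $P$ and $-1$ on $M$ (arbitrary elsewhere); were it realized by some $\theta$, then $\theta^Tx_i\ge 0$ for $i\in P$ and $\theta^Tx_j<0$ for $j\in M$, so $0=\theta^T\sum_i a_ix_i=\sum_{i\in P}a_i(\theta^Tx_i)+\sum_{j\in M}a_j(\theta^Tx_j)$ where the first sum is $\ge 0$ and the second is $>0$ (negative coefficient times negative value, over a nonempty index set), a contradiction. If instead $M=\emptyset$, so $\sum_{i\in P}a_ix_i=0$ with all $a_i>0$ and $P\neq\emptyset$, consider the labeling that is $-1$ on all of $P$; if realized by $\theta$ then $\theta^Tx_i<0$ for all $i\in P$, so $0=\sum_{i\in P}a_i(\theta^Tx_i)<0$, again a contradiction. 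In either case some labeling of $x_1,\dots,x_{d+1}$ is unrealizable, so the set is not shattered and $\mathrm{VC}(f_\theta)\le d$. Combining the two bounds yields $\mathrm{VC}(f_\theta)=d$.

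The argument is essentially routine, so I expect no serious obstacle; the one place needing care is the boundary convention $\theta^Tx\ge 0$ (rather than $>0$) in the definition of $f_\theta$. Because of it, a linear dependence whose coefficients all share one sign (the case $M=\emptyset$) does not directly contradict the natural "sign-split" labeling, which is why I isolate that sub-case and rule it out with the all-$(-1)$-on-$P$ labeling instead; for the same reason the degenerate possibility $x_i=0$ must be dispatched first. These two minor sub-cases are the only points where the proof is not a one-line consequence of linear dependence.
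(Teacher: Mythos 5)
Your proof is correct and follows essentially the same strategy as the paper's: shatter the standard basis for the lower bound, and use linear dependence of any $d+1$ points to exhibit an unrealizable labeling for the upper bound. The only difference is cosmetic — the paper isolates one point as a combination of the others and labels by the signs of the coefficients, whereas you work with the symmetric dependence relation $\sum_i a_i x_i=0$ and split by coefficient sign; your explicit treatment of the all-same-sign case and of the boundary convention $\theta^Tx\ge 0$ is a bit more careful than the paper's, but both arguments go through.
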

\begin{proof}[Proof of Lemma \ref{vc-linear}]
We need to prove that $f_{\theta}$ can shatter $d$ points and cannot shatter $d+1$ points.

To prove that $f_{\theta}$ can shatter $d$ points, we only need to prove that $f_{\theta}$ can shatter $x_j=e_j, j\in[d]$ where $e_j$ is the basis of the space $\R^d$. In fact, for every $y_j\in\{-1,+1\}$, we can let 
$$\theta=\sum\limits_{j=1}^d y_j\cdot e_j.$$ Then it holds that $$2\mathbb{I}(\theta^Tx_j\geq0)-1=y_j$$ for all $j\in [d]$.

Then we prove that $d+1$ points cannot be shattered. We consider points $\{x_j\}_{j=1}^{d+1}$. Because $x_j\in\R^d$, $\{x_j\}_{j=1}^{d+1}$ are linearly dependent. Without loss of generality, we can assume that 
$$x_{d+1}=\sum\limits_{j=1}^d k_jx_j.$$
Now we can craft labels $\{y_j\}_{j=1}^{d+1}$ such that for any $f_{\theta}$, there exists $$f_{\theta}(x_j)=2\mathbb{I}(\theta^Tx_j\geq0)-1\neq y_j.$$
For $k_j\neq 0$, we set $y_j=2\mathbb{I}(k_j\geq0)-1$, and we set $y_{d+1}=-1$.
In this case, if the classifier $f_{\theta}$ can correctly classify $x_1,\cdots,x_d$, it must have
$$2\mathbb{I}(\theta^Tx_j\geq0)-1=y_j=2\mathbb{I}(k_j\geq0)-1.$$
Therefore, $\mathbb{I}(\theta^Tx_j\geq0)=\mathbb{I}(k_j\geq0)$.
However, for $x_{d+1}$, it has 
$$\theta^Tx_{d+1}=\sum\limits_{j=1}^d k_j \theta^T x_j\geq0,$$
making $$2\mathbb{I}(\theta^Tx_{d+1}\geq 0)=+1\neq y_{d+1}.$$
Therefore, $d+1$ points cannot be shattered, resulting in $VC(f_{\theta})=d$.
\end{proof}

\begin{proof}[Proof of Theorem \ref{universal-pre-key-dist}]

Denote the classifier $h_1$ and $h_2$ as $$h_1(x)=2\mathbb{I}(\zeta^Tx>C_1)-1, h_2(x)=2\mathbb{I}(\zeta^Tx<C_2)-1.$$

Denote the loss function $L_{0-1}$ as the 0-1 loss.

By Lemmas \ref{vc-bound} and \ref{vc-linear}, with probability at least $1-\mu$, it holds that
\begin{align*}
\E_{(x+\delta,+1)}L_{0-1}\left(h_1(x+\delta),+1\right) &\leq \frac{1}{N}\sum\limits_{i=1}^N L_{0-1}\left(h_1(x_i+\delta_i),+1\right)+\sqrt{\frac{d}{N}\left(\log\left(\frac{2N}{d}\right)+1\right)-\frac{1}{N}\log\left(\frac{\mu}{4}\right)} \\
& =\frac{1}{N}\sum\limits_{i=1}^N \mathbb{I}(\zeta^T(x_i+\delta_i)\leq C_1)+ \sqrt{\frac{d}{N}\left(\log\left(\frac{2N}{d}\right)+1\right)-\frac{1}{N}\log\left(\frac{\mu}{4}\right)}\\
& \leq \omega+\sqrt{\frac{d}{N}\left(\log\left(\frac{2N}{d}\right)+1\right)-\frac{1}{N}\log\left(\frac{\mu}{4}\right)}.
\end{align*}
Similarly, it has 
$$\E_{(x,-1)}L_{0-1}\left(h_2(x),-1\right)\leq \omega+\sqrt{\frac{d}{N}\left(\log\left(\frac{2N}{d}\right)+1\right)-\frac{1}{N}\log\left(\frac{\mu}{4}\right)}$$

Therefore, 
\begin{align*}
&\Prob_{x_1,x_2\sim\D_{\X}, \delta_1\sim\Delta}\left(\left\{\zeta^T(x_1+\delta_1)>C_1, -\zeta^Tx_2<C_2\right\}\right) \\ &\geq 1-\Prob_{x_1\sim\D_{\X}, \delta_1\sim\Delta}\left(\zeta^T(x_1+\delta_1)\leq C_1\right)-\Prob_{x_2\sim\D_{\X}}(\zeta^Tx_2\geq C_2) \\
&=1-\E_{(x+\delta,+1)}L_{0-1}\left(h_1(x+\delta),+1\right)-\E_{(x,-1)}L_{0-1}\left(h_2(x),-1\right) \\
&\geq 1-2\omega-2\sqrt{\frac{d}{N}\left(\log\left(\frac{2N}{d}\right)+1\right)-\frac{1}{N}\log\left(\frac{\mu}{4}\right)}.
\end{align*}

\end{proof}

\subsection{Proofs of Theorems in Section \ref{sec-poi}}
\label{sec-poi-proof}

\begin{theorem}[Theorem \ref{generalization-watermarking}, restated]
\label{generalization-watermarking-app}
With probability at least $1-2\omega$ for the poisoned dataset $\{(x_i',y_i)\}_{i=1}^N=S'\sim\D'$ and the key $\zeta\in\R^d$ selected from a certain distribution, 
we can craft the watermark $\delta^w$ satisfied:
\begin{align*}
&\mathcal{R}(\D',\F_{S'+\delta^w}^*)  \leq \E_{\eta}\frac{1}{N}\sum_{i=1}^N\mathcal{L}(\F_{S'}^*(x_i'+\eta),y_i)+  O\left(\sqrt{\frac{L
}{N}}\right) \\ &+ O\left(\sqrt{\frac{\log{d}
}{N}}\right) + O\left(\sqrt{\frac{\log{1/\omega}
}{N}}\right) +  O\left( \epsilon_w\sqrt{\frac{q\log{1/\omega}}{d}} \right),
\end{align*}
where $S'+\delta^w=\{(x_i'+\delta^w,y_i)\}_{i=1}^N$ is the watermarked dataset, $\eta\sim\mathcal{U}\{-\epsilon_w, \epsilon_w\}^q$ is a random vector.

\end{theorem}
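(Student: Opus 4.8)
The plan is to combine the empirical-risk-minimization optimality of $\F^*_{S'+\delta^w}$ with a uniform convergence bound for the network class, a Lipschitz estimate for $\F$ supplied by the Xavier normalization, and a concentration argument that exploits the fact that the crafted watermark and the auxiliary noise $\eta$ share the same law. Concretely, I would fix the watermark as $\delta^w=\epsilon_w\,\zeta|_{\W}$ (the random identical key restricted to the watermark coordinates and zero elsewhere), exactly as in the proof of Proposition~\ref{universal-pre-key}. Then $\delta^w$ is a uniform $\pm\epsilon_w$ vector supported on $\W$, so $\delta^w$ has the same distribution as $\eta\sim\mathcal{U}\{-\epsilon_w,\epsilon_w\}^q$, and it is independent of the draw $S'\sim\D'$.

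First I would invoke a uniform convergence bound (via covering numbers of the depth-$L$ Xavier-normalized feed-forward class composed with the bounded, $1$-Lipschitz cross-entropy loss): since $\{(x_i',y_i)\}$ is i.i.d.\ from $\D'$, with probability at least $1-\omega$ over $S'$, for every $\F$ in the class,
$$\mathcal{R}(\D',\F)\;\le\;\frac1N\sum_{i=1}^N\mathcal{L}(\F(x_i'),y_i)+O\!\Big(\sqrt{\tfrac{L}{N}}\Big)+O\!\Big(\sqrt{\tfrac{\log d}{N}}\Big)+O\!\Big(\sqrt{\tfrac{\log(1/\omega)}{N}}\Big),$$
the normalization being what keeps the per-layer contribution to the log-covering-number at $O(L)$ and the dependence on the ambient/bias dimension at $O(\log d)$. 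I then instantiate at $\F=\F^*_{S'+\delta^w}$ and shift the empirical average from $x_i'$ to $x_i'+\delta^w$, using $1$-Lipschitzness of the loss in the logit; this costs an extra term $\mathrm{Lip}(\F^*_{S'+\delta^w})\cdot\|\delta^w\|_2$. Here the Xavier normalization enters a second time: the explicit $1/\sqrt{d_l}$ factors together with operator-norm control $\|W^l\|_{op}=O(\sqrt{d_l})$ (holding with the stated probability for Gaussian weights, or deterministically if the class is defined by these norm constraints) make the layerwise Lipschitz constants telescope, so that $\mathrm{Lip}(\F)\cdot\|\delta^w\|_2=O(\epsilon_w\sqrt{q/d})$ for any $\F$ in the class.

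Next, ERM optimality of $\F^*_{S'+\delta^w}$ on the watermarked sample gives $\frac1N\sum_i\mathcal{L}(\F^*_{S'+\delta^w}(x_i'+\delta^w),y_i)\le\frac1N\sum_i\mathcal{L}(\F^*_{S'}(x_i'+\delta^w),y_i)$. Finally I would condition on $S'$ (hence on $\F^*_{S'}$) and apply McDiarmid's inequality to the function $\delta^w\mapsto\frac1N\sum_i\mathcal{L}(\F^*_{S'}(x_i'+\delta^w),y_i)$ of the $q$ independent signs of $\zeta|_{\W}$: flipping one sign moves each input by $2\epsilon_w$ in a single coordinate, changing the average loss by $O(\epsilon_w/\sqrt d)$ (again by the first-layer normalization estimate). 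Since $\E_{\delta^w}[\,\cdot\,]=\E_{\eta}[\,\cdot\,]$, this yields, with probability at least $1-\omega$ over $\zeta$,
$$\frac1N\sum_i\mathcal{L}(\F^*_{S'}(x_i'+\delta^w),y_i)\;\le\;\E_{\eta}\frac1N\sum_i\mathcal{L}(\F^*_{S'}(x_i'+\eta),y_i)+O\!\Big(\epsilon_w\sqrt{\tfrac{q\log(1/\omega)}{d}}\Big).$$
Chaining the four displays and taking a union bound over the two $1-\omega$ events — absorbing the $O(\epsilon_w\sqrt{q/d})$ shift term into the larger $\log(1/\omega)$ term — produces exactly the claimed bound with probability $1-2\omega$.

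The hard part is the first and second steps jointly: one needs a generalization bound whose three error terms are precisely $\sqrt{L/N}$, $\sqrt{\log d/N}$, $\sqrt{\log(1/\omega)/N}$, which requires a careful covering-number argument exploiting the Xavier scaling so that only the depth (not the total parameter count) and only $\log$ of the dimension enter, together with a matching high-probability bound $\|W^l\|_{op}=O(\sqrt{d_l})$ ensuring that a perturbation of norm $\epsilon_w\sqrt q$ propagates through the network with gain $O(1/\sqrt d)$. The remaining ingredients — ERM optimality, McDiarmid over the $q$ key signs, and the distributional identity $\delta^w\stackrel{d}{=}\eta$ — are routine.
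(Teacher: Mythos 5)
Your proof shares the uniform convergence and the McDiarmid-over-$\delta^w$ steps with the paper, but the way you move from $\frac1N\sum_i\mathcal{L}(\F^*_{S'+\delta^w}(x_i'),y_i)$ to an empirical quantity involving $\F^*_{S'}$ is different from the paper's, and I think that step as written has a genuine gap.

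Concretely, you write the shift cost as $\mathrm{Lip}(\F^*_{S'+\delta^w})\cdot\|\delta^w\|_2 = O(\epsilon_w\sqrt{q/d})$. Under the operator-norm control you invoke, $\|W^l\|_{op}=O(\sqrt{d_l})$, the telescoped factors $\prod_{l\geq 2}\|W^l\|_{op}/\sqrt{d_{l-1}}$ are $\Theta(1)$ and the ungated first layer contributes $\|W^1\|_{op}=\Theta(\sqrt{d_1})$, so the end-to-end $\ell_2$ Lipschitz constant is $\Theta(1)$, not $\Theta(1/\sqrt d)$; this gives a shift cost $\Theta(\epsilon_w\sqrt q)$. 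Even if you instead borrow the paper's per-coordinate sensitivity $c=O(\epsilon_w/\sqrt d)$ (used inside McDiarmid, and implicitly requiring $\|W^l\|_{1,\infty}=O(1)$ rather than operator-norm control), summing over the $q$ perturbed coordinates gives a worst-case shift of $O(q\epsilon_w/\sqrt d)$, not $O(\epsilon_w\sqrt{q/d})$. Neither of these is absorbed by the McDiarmid term $O(\epsilon_w\sqrt{q\log(1/\omega)/d})$; their ratio to it is $\Theta(\sqrt d)$ and $\Theta(\sqrt{q/\log(1/\omega)})$ respectively. The reason the McDiarmid term has the $\sqrt q$ (not $q$) scaling is the random-sign cancellation of the $q$ independent coordinates of $\delta^w$; a worst-case Lipschitz estimate cannot reproduce this, and you cannot apply a concentration bound to the shift term itself because the function $\F^*_{S'+\delta^w}$ varies with $\delta^w$, so any bound there must hold uniformly over the class.

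The paper avoids the shift cost entirely with an exact-shift argument that you did not use: since the ERM minimizer is computed over a class that is closed under input translations (a shift by $\delta^w$ just replaces the first-layer bias $b^1$ by $b^1-W^1\delta^w$), one can take $\F^*_{S'+\delta^w}(t)=\F^*_{S'}(t-\delta^w)$, so that $\frac1N\sum_i\mathcal{L}(\F^*_{S'+\delta^w}(x_i'),y_i)=\frac1N\sum_i\mathcal{L}(\F^*_{S'}(x_i'-\delta^w),y_i)$ with equality, and McDiarmid is then applied to the right-hand side with $\F^*_{S'}$ fixed. There is no residual $\mathrm{Lip}\cdot\|\delta^w\|$ term to absorb. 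If you want to keep your ERM-optimality-plus-shift decomposition, you would need to either prove the bias-absorption identity (which makes your steps 2 and 3 collapse into an equality) or find a shift bound that matches $O(\epsilon_w\sqrt{q\log(1/\omega)/d})$ uniformly over the class; the latter does not appear to be available without the random-sign structure that McDiarmid already exploits for $\F^*_{S'}$.
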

\begin{proof}[Proof of Theorem \ref{generalization-watermarking}]
Let $x_i'=x_i+\delta_i^p$. 
For any random identical key $\zeta$, we craft the watermark $\delta^w$ as $(\epsilon_w\cdot\zeta^{d_i})_{i=1}^q$, which obey the distribution $\mathcal{U}\{-\epsilon_w, \epsilon_w\}^q$.

We first prove that 
$$\frac{1}{N}\sum\limits_{i=1}^N \mathcal{L}\left(\F_{S'+\delta^w}^*(x_i'),y_i\right)\leq \E_{\eta}\frac{1}{N}\sum_{i=1}^N\mathcal{L}\left(\F_{S'}^*(x_i'+\eta),y_i\right) + O\left( \epsilon_w\sqrt{\frac{q\log{1/\omega}}{d}} \right).$$

Let $a=\min\limits_\F \frac{1}{N}\sum\limits_{i=1}^NL(\F(x_i'),y_i)$, the optimal classifier $$\F_{S'+\delta^w}^*(t)=\arg\min\limits_\F \frac{1}{N}\sum\limits_{i=1}^N\mathcal{L}\left(\F(x_i'+\delta^w),y_i\right), \F_{S'}^*(t)=\arg\min\limits_\F \frac{1}{N}\sum\limits_{i=1}^N\mathcal{L}(\F(x_i'),y_i).$$

Let $G^*=\F_{S'}^*(t-\delta^w)$, it holds that 
$$\frac{1}{N}\sum\limits_{i=1}^N\mathcal{L}(G^*(x_i'+\delta^w),y_i)=\frac{1}{N}\sum\limits_{i=1}^N\mathcal{L}(\F_{S'}^*(x_i'),y_i)=a.$$

Therefore, it has $$\frac{1}{N}\sum\limits_{i=1}^N\mathcal{L}\left(\F_{S'+\delta^w}^*(x_i'+\delta^w),y_i\right)\leq\frac{1}{N}\sum\limits_{i=1}^N\mathcal{L}\left(G^*(x_i'+\delta^w),y_i\right)=a.$$

In fact, $\frac{1}{N}\sum\limits_{i=1}^N\mathcal{L}\left(\F_{S'+\delta^w}^*(x_i'+\delta^w),y_i\right)=a$. This is because, if $$\frac{1}{N}\sum\limits_{i=1}^N\mathcal{L}\left(\F_{S'+\delta^w}^*(x_i'+\delta^w),y_i\right)=b<a,$$
let $H^*=\F_{S'+\delta^w}^*(t+\delta^w)$, it has 
$$\frac{1}{N}\sum\limits_{i=1}^N\mathcal{L}\left(H^*(x_i'),y_i\right)=\frac{1}{N}\sum\limits_{i=1}^N\mathcal{L}\left(\F_{S'+\delta^w}^*(x_i'+\delta^w),y_i\right)=b<a,$$
violating the condition that $a=\min\limits_\F \frac{1}{N}\sum\limits_{i=1}^N\mathcal{L}\left(\F(x_i'),y_i\right)$.

Therefore, it has $$\F_{S'+\delta^w}^*(t)=\F_{S'}^*(t-\delta^w).$$

Then we have $$\frac{1}{N}\sum\limits_{i=1}^N \mathcal{L}\left(\F_{S'+\delta^w}^*(x_i'),y_i\right)=\frac{1}{N}\sum\limits_{i=1}^N \mathcal{L}\left(\F_{S'}^*(x_i'-\delta^w),y_i\right)$$

Now we use the McDiarmid's inequality to complete the proof of the first part.
For each $(x_i,y_i)$, for different $\delta^w$ and $\bar{\delta}^w$ on one dimension, it holds that 
\begin{align*}
&|\mathcal{L}\left(\F_{S'}^*(x_i'-\delta^w),y_i\right) - \mathcal{L}\left(\F_{S'}^*(x_i'-\bar{\delta}^w),y_i\right)| \\
&= \left|\log\left(1+e^{y_i\cdot\F_{S'}^*(x_i'-\delta^w)}\right)-\log\left(1+e^{y_i\cdot\F_{S'}^*(x_i'-\bar{\delta}^w)}\right)\right| \\
&\leq \left|\F_{S'}^*(x_i'-\delta^w) - \F_{S'}^*(x_i'-\bar{\delta}^w)\right| \\
    &=\left|W^L\frac{1}{\sqrt{d_{L-1}}}\ReLU\left(W^{L-1}\cdots \frac{1}{\sqrt{d_2}}\ReLU\left(W^2 \ReLU\left(\frac{1}{\sqrt{d_1}}W^1(x_i+\delta^w)+b^1\right)+b^2\right)+\cdots+b^{L-1}\right)+b^L\right. \\
    &- \left.W^L\frac{1}{\sqrt{d_{L-1}}}\ReLU\left(W^{L-1}\cdots \frac{1}{\sqrt{d_2}}\ReLU\left(W^2 \ReLU\left(\frac{1}{\sqrt{d_1}}W^1(x_i+\bar{\delta}^w)+b^1\right)+b^2\right)+\cdots+b^{L-1}\right)+b^L\right|\\
 &\leq \frac{1}{\sqrt{d}}\frac{1}{\sqrt{d_{L-1} d_{L-2}\cdots d_2}}||W^L||_{1,\infty}\dots||W^2||_{1,\infty}||W^1||_{1,\infty} \epsilon_w.
\end{align*}

By McDiarmid's inequality, let $c=\frac{1}{\sqrt{d}}\frac{1}{\sqrt{d_{L-1} d_{L-2}\cdots d_2}}||W^L||_{1,\infty}\dots||W^2||_{1,\infty}||W^1||_{1,\infty} \epsilon_w$, it holds that 
$$\frac{1}{N}\sum\limits_{i=1}^N \mathcal{L}\left(\F_{S'}^*(x_i'-\delta^w),y_i\right)\leq\E_\eta \frac{1}{N}\sum\limits_{i=1}^N \mathcal{L}\left(\F_{S'}^*(x_i'-\eta),y_i\right)+\alpha$$ with probability at least $1-\exp\left(-\frac{2\alpha^2}{q\cdot c^2}\right)$, where $\eta$ obey the distribution $\mathcal{U}\{-\epsilon_w, \epsilon_w\}^q$.

Due to the symmetry of $\eta$, it always has 
$$\E_\eta \frac{1}{N}\sum\limits_{i=1}^N \mathcal{L}\left(\F_{S'}^*(x_i'-\eta),y_i\right)=\E_\eta \frac{1}{N}\sum\limits_{i=1}^N \mathcal{L}\left(\F_{S'}^*(x_i'+\eta),y_i\right).$$

Therefore, let $\omega=\exp\left(-\frac{2\alpha^2}{q\cdot c^2}\right)$, it holds that 
\begin{align*}
\frac{1}{N}\sum\limits_{i=1}^N \mathcal{L}\left(\F_{S'}^*(x_i'-\delta^w),y_i\right)&\leq\E_\eta \frac{1}{N}\sum\limits_{i=1}^N \mathcal{L}(\F_{S'}^*\left(x_i'+\eta),y_i\right)+\alpha \\
&\leq \E_\eta \frac{1}{N}\sum\limits_{i=1}^N \mathcal{L}\left(\F_{S'}^*(x_i'+\eta),y_i\right) + O\left(\epsilon_w \sqrt{\frac{q\log 1/\omega}{d}}\right)
\end{align*}

{\red
}

Then we will prove that 
\begin{align*}
\mathcal{R}(\D',\F_{S'+\delta^w}^*)&=\E_{(x',y)\sim\D'}\mathcal{L}\left(\F_{S'+\delta^w}^*(x'),y\right) \\ &\leq \frac{1}{N}\sum_{i=1}^N\mathcal{L}\left(\F_{S'+\delta^w}^*(x_i'),y_i\right) + O\left(\sqrt{\frac{\log{d}
}{N}}\right) + O\left(\sqrt{\frac{L
}{N}}\right) + O\left(\sqrt{\frac{\log{1/\omega}
}{N}}\right).
\end{align*}

By \citep{mohri2018foundations}, when loss function $L(\F(x),y)$ is bounded by $[0,B]$, with probability at least $1-\omega$, it holds that 
$$\mathcal{R}(\D',\F) \leq \frac{1}{N}\sum_{i=1}^N\mathcal{L}\left(\F(x_i'),y_i\right) + 2B \cdot \textup{Rad}_{(x_i',y_i)\in S'}\left(\mathcal{L}(\F)\right)+3B\sqrt{\frac{1}{2N}\log\frac{2}{\omega}}$$

The remaining issue is to compute $\textup{Rad}_{(x_i',y_i)\in S'}\left(\mathcal{L}(\F_{S'+\delta^w}^*)\right)$.

As loss function $L(z)=\log(1+e^{-z})$ is 1-Lipschitz under $z=y\cdot\F_{S'+\delta^w}^*(x)$, by Talagrand's contraction lemma~\citep{ledoux2013probability, mohri2014learning},  
\begin{align*}
\textup{Rad}_{(x_i',y_i)\in S'}\left(\mathcal{L}(\F_{S'+\delta^w}^*)\right)& = \E_{\sigma_i\in\{-1,+1\}}\left[\sup\limits_{L(\F_{S'+\delta^w}^*)}\frac{1}{N}\sum\limits_{i=1}^N\sigma_iL(\F_{S'+\delta^w}^*(x_i'),y_i)\right]\\ 
& \leq \E_{\sigma_i\in\{-1,+1\}}\left[\sup\limits_{\F_{S'+\delta^w}^*}\frac{1}{N}\sum\limits_{i=1}^N\sigma_iy_i\F_{S'+\delta^w}^*(x_i')\right]\\
& = \E_{\sigma_i\in\{-1,+1\}}\left[\sup\limits_{\F_{S'+\delta^w}^*}\frac{1}{N}\sum\limits_{i=1}^N\sigma_i\F_{S'+\delta^w}^*(x_i')\right]\\
& = \textup{Rad}_{(x_i',y_i)\in S'}\left(\F_{S'+\delta^w}^*\right)
\end{align*}

From Theorem 1 in \citep{wen2021sparse}, it holds that 
$$\textup{Rad}_{(x_i',y_i)\in S'}\left(\F_{S'+\delta^w}^*\right)\leq \prod_{l=1}^L \left(\|W^l\|_{1,\infty}+\|b^l\|_{\infty}\right)\left(\sqrt{\frac{(L+2)\log{4}}{N}}+\sqrt{\frac{2\log(2d)}{N}}\right)$$

Therefore, it has
\begin{align*}
\mathcal{R}\left(\D',\F_{S'+\delta^w}^*\right) &\leq \frac{1}{N}\sum_{i=1}^N\mathcal{L}\left(\F_{S'+\delta^w}^*(x_i'),y_i\right) + \log\left(1+\exp\left(\max\limits_{x}\F_{S'+\delta^w}^*(x)\right)\right)\cdot\\& \left(2 \cdot \textup{Rad}_{(x_i',y_i)\in S'}(\mathcal{L}(\F_{S'+\delta^w}^*))+3\sqrt{\frac{1}{2N}\log\frac{2}{\omega}} \right) \\
&\leq \frac{1}{N}\sum_{i=1}^N\mathcal{L}\left(\F_{S'+\delta^w}^*(x_i'+\delta^w),y_i\right) + O\left(\sqrt{\frac{\log{d}
}{N}}\right) + O\left(\sqrt{\frac{L
}{N}}\right) + O\left(\sqrt{\frac{\log{1/\omega}
}{N}}\right) \\ &+ O\left( \sqrt{\frac{q\epsilon_w\log{1/\omega}}{d}} \right) + O\left( \frac{\epsilon_w}{\sqrt{d}} \right)
\end{align*}
\end{proof}

\begin{theorem}[Theorem \ref{poi-dimension-eff}, restated]
\label{poi-dimension-eff-app}
With probability at least $1-\omega$ of the (unrestricted) poisoned dataset $\{(x_i+\delta_i^p,y_i)\}_{i=1}^N=S'\sim\D'$,
it holds that
\begin{align*}
&\mathcal{R}\left(\D',\F_{S'|_\P}^*\right)  \leq\frac{1}{N}\sum_{i=1}^N\mathcal{L}\left(\F_{S'|_\P}^*\left(x_i+\delta_i^p|_{\P}\right),y_i\right)+ O\left(\frac{q\epsilon_p}{\sqrt{d}}\right) \\ & +O\left(\sqrt{\frac{\log{d}
}{N}}\right) + O\left(\sqrt{\frac{L
}{N}}\right) + O\left(\sqrt{\frac{\log{1/\omega}
}{N}}\right).
\end{align*}

\end{theorem}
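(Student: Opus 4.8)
The plan is to follow the same two-stage template as the proof of Theorem~\ref{generalization-watermarking}: a uniform-convergence step bounding the distributional risk on $\D'$ by the empirical risk on the unrestricted poisoned sample $S'$, followed by a deterministic step comparing the empirical risk on the unrestricted poisons $\delta_i^p$ with that on the restricted poisons $\delta_i^p|_\P$ (i.e.\ $\delta_i^p$ zeroed out on the watermarking coordinates $\W$).

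First I would apply the Rademacher-complexity generalization bound of \citep{mohri2018foundations} to the class of $L$-layer Xavier-normalized networks on the sample $S'=\{(x_i+\delta_i^p,y_i)\}_{i=1}^N$ drawn i.i.d.\ from $\D'$. Since $\F_{S'|_\P}^*$ lies in this class, with probability at least $1-\omega$,
\begin{align*}
\mathcal{R}(\D',\F_{S'|_\P}^*)\leq \frac{1}{N}\sum_{i=1}^N\mathcal{L}\big(\F_{S'|_\P}^*(x_i+\delta_i^p),y_i\big) + 2B\cdot\textup{Rad}_{S'}\!\big(\mathcal{L}(\F_{S'|_\P}^*)\big) + 3B\sqrt{\tfrac{1}{2N}\log\tfrac{2}{\omega}},
\end{align*}
where $B=\log\big(1+\exp(\max_x\F_{S'|_\P}^*(x))\big)$ bounds the logistic loss. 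As in Theorem~\ref{generalization-watermarking}, Talagrand's contraction lemma (the logistic loss is $1$-Lipschitz in the logit) reduces $\textup{Rad}(\mathcal{L}(\F))$ to $\textup{Rad}(\F)$, and Theorem~1 of \citep{wen2021sparse} bounds the latter by $\prod_{l=1}^L(\|W^l\|_{1,\infty}+\|b^l\|_\infty)\big(\sqrt{(L+2)\log 4/N}+\sqrt{2\log(2d)/N}\big)$; absorbing the weight-norm product into constants yields the $O(\sqrt{L/N})$, $O(\sqrt{\log d/N})$, and $O(\sqrt{\log(1/\omega)/N})$ terms.

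Next I would control the gap between the two empirical risks. Writing $v_i:=\delta_i^p-\delta_i^p|_\P=\delta_i^p|_\W$, the vector $v_i$ is supported on the $q$ coordinates of $\W$ with $\|v_i\|_\infty\leq\epsilon_p$, and $(x_i+\delta_i^p)-(x_i+\delta_i^p|_\P)=v_i$. Using that the logistic loss is $1$-Lipschitz in the logit, together with the layer-by-layer Lipschitz estimate already derived in the proof of Theorem~\ref{generalization-watermarking} — perturbing one input coordinate by at most $\epsilon_p$ changes $\F$ by at most $\tfrac{1}{\sqrt d}\tfrac{\|W^L\|_{1,\infty}\cdots\|W^1\|_{1,\infty}}{\sqrt{d_{L-1}\cdots d_2}}\,\epsilon_p$ — and telescoping this single-coordinate estimate over the $q$ coordinates of $\W$ by the triangle inequality, I get for every $i$
\begin{align*}
\Big|\mathcal{L}\big(\F_{S'|_\P}^*(x_i+\delta_i^p),y_i\big)-\mathcal{L}\big(\F_{S'|_\P}^*(x_i+\delta_i^p|_\P),y_i\big)\Big|\leq \frac{q}{\sqrt d}\,\frac{\|W^L\|_{1,\infty}\cdots\|W^1\|_{1,\infty}}{\sqrt{d_{L-1}\cdots d_2}}\,\epsilon_p = O\!\Big(\tfrac{q\epsilon_p}{\sqrt d}\Big).
\end{align*}
Averaging over $i\in[N]$ and substituting into the generalization bound gives the claimed inequality; since only the first step is probabilistic, the total failure probability is $\omega$.

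\textbf{Main obstacle.} I expect the second step to be the delicate one: unlike the watermark perturbation $\eta$ in Theorem~\ref{generalization-watermarking}, which is a symmetric random noise and hence amenable to a McDiarmid/concentration argument yielding a favorable $\sqrt q$ scaling, the restriction residual $\delta_i^p|_\W$ is an arbitrary (adversarially chosen) perturbation, so one is forced into the worst-case triangle-inequality bound across all $q$ coordinates. This is precisely what produces the $O(q\epsilon_p/\sqrt d)$ term, and it is this term that later forces the condition $q=O(\sqrt d/\epsilon_p)$ for poisoning to remain effective under poisoning-concurrent watermarking.
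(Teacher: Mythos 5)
Your proposal matches the paper's proof in both structure and all the essential steps: the deterministic gap between the empirical losses on $S'$ and $S'|_\P$ is controlled by the same coordinate-wise Lipschitz estimate, summed over the $q$ coordinates of $\W$ (the paper phrases this as the direct $\ell_1$ bound $\|W^1\,\delta_i^p|_{[d]\setminus\P}\|_1\leq q\epsilon_p\|W^1\|_{1,\infty}$, which is exactly your telescoping), and the generalization step is the identical Rademacher-complexity chain via Talagrand's contraction and Theorem~1 of \citep{wen2021sparse}. The only cosmetic difference is the order in which the two steps are presented, and your observation about why one gets $q\epsilon_p/\sqrt d$ rather than a $\sqrt{q}$ scaling (the residual $\delta_i^p|_\W$ is adversarial, not symmetric random) is precisely the right intuition.
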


\begin{proof}[Proof of Theorem \ref{poi-dimension-eff}]
For every $i$, 
{
\begin{align*}    &\mathcal{L}\left(\F_{S'|_\P}^*\left(x_i+\delta_i^p\right),y_i\right)-\mathcal{L}\left(\F_{S'|_\P}^*\left(x_i+\delta_i^p|_{\P}\right),y_i\right) \\  
    &\leq \left|\F_{S'|_\P}^*\left(x_i+\delta_i^p\right)-\F_{S'|_\P}^*\left(x_i+\delta_i^p|_{\P}\right)\right| \\
    &=\left|W^L\frac{1}{\sqrt{d_{L-1}}}\ReLU\left(W^{L-1}\cdots \frac{1}{\sqrt{d_2}}\ReLU\left(L_2\right)+\cdots+b^{L-1}\right)+b^L\right. \\
    &- \left.W^L\frac{1}{\sqrt{d_{L-1}}}\ReLU\left(W^{L-1}\cdots \frac{1}{\sqrt{d_2}}\ReLU\left(L_2\right)+\cdots+b^{L-1}\right)+b^L\right|\\
    &\leq \frac{1}{\sqrt{d}}\frac{1}{\sqrt{d_{L-1} d_{L-2}\cdots d_2}}\|W^L\|_{1,\infty}\|W^{L-1}\|_{1,\infty}\cdots \|W^2\|_{1,\infty} \|W^1\delta_i^p|_{[d]-\P}\|_1 \\
    &\leq\frac{1}{\sqrt{d_{L-1} d_{L-2}\cdots d_2}}\|W^L\|_{1,\infty}\|W^{L-1}\|_{1,\infty}\cdots \|W^2\|_{1,\infty}\|W^1\|_{1,\infty}\cdot \frac{|[d]-\P|\cdot \|\delta_i^p\|_{\infty}}{\sqrt{d}} \\
    &\leq\frac{1}{\sqrt{d_{L-1} d_{L-2}\cdots d_2}}\|W^L\|_{1,\infty}\|W^{L-1}\|_{1,\infty}\cdots \|W^2\|_{1,\infty}\|W_1\|_{1,\infty}\cdot \frac{q\epsilon_p}{\sqrt{d}}.
\end{align*}}
where $L_2=W^2 \ReLU\left(\frac{1}{\sqrt{d_1}}W^1\left(x_i+\delta_i^p\right)+b^1\right)+b^2$.

Therefore, it holds that 
$$\frac{1}{N}\sum_{i=1}^N\mathcal{L}\left(\F_{S'|_\P}^*(x_i+\delta_i^p),y_i\right)\leq \frac{1}{N}\sum_{i=1}^N\mathcal{L}\left(\F_{S'|_\P}^*(x_i+\delta_i^p|_{\P}),y_i\right) + O\left(\frac{q\epsilon_p}{\sqrt{d}}\right).$$

Then, similar to the proof of Theorem \ref{generalization-watermarking}, it has
\begin{align*}
\mathcal{R}(\D',\F_{S'|_\P}^*) &\leq \frac{1}{N}\sum_{i=1}^N\mathcal{L}\left(\F_{S'|_\P}^*(x_i+\delta_i^p),y_i\right) + \log\left(1+\exp\left(\max\limits_{x}\F_{S'|_\P}^*(x)\right)\right) \cdot\\ 
&\left(2 \cdot \textup{Rad}_{(x_i',y_i)\in S'}\left(\mathcal{L}(\F_{S'|_\P}^*)\right)+3\sqrt{\frac{1}{2N}\log\frac{2}{\omega}} \right) \\
&\leq \frac{1}{N}\sum_{i=1}^N\mathcal{L}\left(\F_{S'|_\P}^*(x_i+\delta_i^p),y_i\right) + O\left(\sqrt{\frac{\log{d}
}{N}}\right) + O\left(\sqrt{\frac{L
}{N}}\right) + O\left(\sqrt{\frac{\log{1/\omega}
}{N}}\right) \\
& 
\leq \frac{1}{N}\sum_{i=1}^N\mathcal{L}\left(\F_{S'|_\P}^*(x_i+\delta_i^p|_{\P}),y_i\right) + O\left(\frac{q\epsilon_p}{\sqrt{d}}\right) + O\left(\sqrt{\frac{\log{d}
}{N}}\right)\\
&+ O\left(\sqrt{\frac{L
}{N}}\right) + O\left(\sqrt{\frac{\log{1/\omega}
}{N}}\right).
\end{align*}
\end{proof}

\begin{corollary}[Corollary \ref{cor-5.8}, restated]
\label{cor-5.8-app}
With probability at least $1-3\omega$ for the restricted poisoned dataset $S'|_{\P}\sim\D'|_\P$ and the key $\zeta\in\R^d$ selected from certain distribution, 
we can craft the watermark $\delta^w$ satisfied:
\vspace{-3pt}
\begin{align*}
&\mathcal{R}(\D',\F_{\tilde{S}}^*)\leq \E_\eta\frac{1}{N}\sum_{i=1}^N\mathcal{L}\left(\F_{S|_{\P}}^*(x_i+\delta_i^p|_{\P}+\eta),y_i\right) \\ &+  O\left(\sqrt{\frac{L
}{N}}\right) + O\left(\sqrt{\frac{\log{d}
}{N}}\right) + O\left(\sqrt{\frac{\log{1/\omega}
}{N}}\right) + O\left(\frac{q\epsilon_p}{\sqrt{d}}\right) + O\left( \epsilon_w\sqrt{\frac{q\log{1/\omega}}{d}} \right),
\end{align*}
\vspace{-8pt}
where $\tilde{S}=S|_\P+\delta^w$ is the watermarked dataset.
\end{corollary}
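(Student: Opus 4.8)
The plan is to read this corollary as the composition of the two previous theorems: Theorem~\ref{poi-dimension-eff} pays for restricting the poison to the coordinates $\P$, and Theorem~\ref{generalization-watermarking} pays for adding the universal watermark on the coordinates $\W$. Throughout I take $\zeta$ to be the random identical key supported on $\W$ (so $\zeta^i=\mathcal{U}\{-1,+1\}$ for $i\in\W$ and $\zeta^i=0$ for $i\in\P$) and set $\delta^w=\epsilon_w\cdot\zeta$, the poisoning-concurrent construction already used in Proposition~\ref{universal-post-key}; then $\delta^w$ is distributed as $\mathcal{U}\{-\epsilon_w,\epsilon_w\}^q$ on the watermarking coordinates, matching the $\eta$ in the statement. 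Write $S|_{\P}=\{(x_i+\delta_i^p|_{\P},y_i)\}_{i=1}^N$ for the restricted-poisoned set, which is i.i.d.\ drawn from the push-forward distribution $\D'|_{\P}$, and note that $\tilde S=S|_{\P}+\delta^w$ is obtained from $S|_{\P}$ by translating every input by the constant vector $\delta^w$.

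First I would apply Theorem~\ref{generalization-watermarking} verbatim, but with $\D'|_{\P}$ in place of $\D'$ and $S|_{\P}$ in place of $S'$: the hypothesis class is the same norm-bounded feed-forward family and the watermark construction is identical, so that theorem yields, with probability at least $1-2\omega$,
\[
\mathcal{R}(\D'|_{\P},\F_{\tilde S}^*)\le \E_\eta\tfrac1N\sum_{i=1}^N\mathcal{L}\!\big(\F_{S|_{\P}}^*(x_i+\delta_i^p|_{\P}+\eta),y_i\big)+O\!\Big(\sqrt{\tfrac{L}{N}}\Big)+O\!\Big(\sqrt{\tfrac{\log d}{N}}\Big)+O\!\Big(\sqrt{\tfrac{\log(1/\omega)}{N}}\Big)+O\!\Big(\epsilon_w\sqrt{\tfrac{q\log(1/\omega)}{d}}\Big),
\]
where inside that proof one uses the shift-invariance identity $\F_{\tilde S}^*(t)=\F_{S|_{\P}}^*(t-\delta^w)$ together with the first-layer McDiarmid estimate. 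It then remains to replace $\mathcal{R}(\D'|_{\P},\F_{\tilde S}^*)$ by $\mathcal{R}(\D',\F_{\tilde S}^*)$. For this I would reuse the Lipschitz computation from the proof of Theorem~\ref{poi-dimension-eff}: coupling the same $(x,\delta^p,y)$, the distributions $\D'$ and $\D'|_{\P}$ differ only in the $q$ coordinates of $\W$ of the poison, each of magnitude at most $\epsilon_p$, so propagating this perturbation through $\F_{\tilde S}^*$ gives $\mathcal{R}(\D',\F_{\tilde S}^*)\le \mathcal{R}(\D'|_{\P},\F_{\tilde S}^*)+O(q\epsilon_p/\sqrt d)$. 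Invoking Theorem~\ref{poi-dimension-eff} for this step costs an extra failure probability $\omega$; a union bound over the resulting three events then gives the asserted inequality with probability at least $1-3\omega$.

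Both ingredients are already established, so there is no genuinely new difficulty; the one point that needs care is that the shift-invariance identity and the Rademacher-complexity bound are applied to $\F_{\tilde S}^*$ — the classifier trained on the watermarked restricted-poisoned data — rather than to $\F_{S|_{\P}}^*$. One must check that translating the inputs by the constant $\delta^w$ leaves the architecture and the products $\prod_l\|W^l\|_{1,\infty}$ unchanged (it only replaces $b^1$ by $b^1-W^1\delta^w$, whose norm moves by $O(\epsilon_w\sqrt{q}\,\|W^1\|_{1,\infty})$, which is absorbed into the $O(\cdot)$ constants), so that $\F_{\tilde S}^*$ lies in the same norm-bounded class as $\F_{S|_{\P}}^*$ and inherits the complexity bound of \citep{wen2021sparse}. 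It also helps that the restriction and the watermark act on the disjoint coordinate sets $\P$ and $\W$, so the two error terms $O(q\epsilon_p/\sqrt d)$ and $O(\epsilon_w\sqrt{q\log(1/\omega)/d})$ add without interaction. Assembling these pieces — aligning $\F_{\tilde S}^*$, $\F_{S|_{\P}}^*$, and the two perturbation budgets — is the main, and only mild, obstacle.
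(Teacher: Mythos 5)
Your proposal is correct and uses exactly the paper's two ingredients — Theorem~\ref{generalization-watermarking}'s generalization-plus-shift-invariance argument and the coordinate-wise Lipschitz estimate from the proof of Theorem~\ref{poi-dimension-eff} — combined by a union bound, which is also what the paper does.

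The only difference is the order in which the $O(q\epsilon_p/\sqrt{d})$ restriction penalty is paid. The paper bounds $\mathcal{R}(\D',\F_{\tilde S}^*)$ directly against the empirical loss of $\F_{S|_\P}^*$ at the \emph{unrestricted} points $x_i+\delta_i^p+\eta$ (this is not a literal citation of Theorem~\ref{generalization-watermarking} but a re-run of its proof with the Rademacher bound over $\D'$ and the shift identity $\F_{\tilde S}^*(t)=\F_{S|_\P}^*(t-\delta^w)$), and then applies the Lipschitz step per sample to swap $\delta_i^p\mapsto\delta_i^p|_\P$ inside the empirical average. You instead apply Theorem~\ref{generalization-watermarking} verbatim to $\D'|_\P$ and $S|_\P$, obtaining $\mathcal{R}(\D'|_\P,\F_{\tilde S}^*)\le\E_\eta\frac1N\sum_i\mathcal{L}(\F_{S|_\P}^*(x_i+\delta_i^p|_\P+\eta),y_i)+\cdots$, and then pay the $O(q\epsilon_p/\sqrt{d})$ cost at the level of the risk by coupling $\D'$ and $\D'|_\P$. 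Both decompositions are valid and yield identical error terms; yours is a touch cleaner in that it invokes Theorem~\ref{generalization-watermarking} as a black box rather than re-deriving a variant, and your observation that $\F_{\tilde S}^*$ must be checked to remain in the norm-bounded class (the bias $b^1$ shifting by $W^1\delta^w$, absorbed into the constants) is a point the paper's proof glosses over but should be made.
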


\begin{proof}
By Theorem \ref{generalization-watermarking}, with probability at least $1-2\omega$ it holds that 
\begin{align*}
&\mathcal{R}(\D',\F_{\tilde{S}}^*)  \leq \E_{\eta}\frac{1}{N}\sum_{i=1}^N\mathcal{L}\left(\F_{S|_\P}^*(x_i+\delta_p^i+\eta),y_i\right)+  O\left(\sqrt{\frac{L
}{N}}\right) \\ &+ O\left(\sqrt{\frac{\log{d}
}{N}}\right) + O\left(\sqrt{\frac{\log{1/\omega}
}{N}}\right) +  O\left( \epsilon_w\sqrt{\frac{q\log{1/\omega}}{d}} \right).
\end{align*}
By Theorem \ref{poi-dimension-eff}, with probability at least $1-\omega$, for every $\eta$, it holds that 
$$\frac{1}{N}\sum_{i=1}^N\mathcal{L}\left(\F_{S'|_\P}^*(x_i+\delta_i^p+\eta),y_i\right)\leq \frac{1}{N}\sum_{i=1}^N\mathcal{L}\left(\F_{S'|_\P}^*(x_i+\delta_i^p|_{\P}+\eta),y_i\right) + O\left(\frac{q\epsilon_p}{\sqrt{d}}\right).$$
Combine the above two inequalities directly to complete the proof.
\end{proof}

\section{Watermarking Algorithm}
\label{app:algorithm}
\begin{algorithm}[H]
\caption{Post-Poisoning Watermarking}
\label{alg:post}
\begin{algorithmic}
\State{\bfseries Input:} The poisoned training dataset $D_{\P}=\{(x_i+\delta_i^p,   y_i)\}_{i=1}^N$. 
The key $\zeta$.

\State {\bfseries Output:} 
Watermarked training dataset $D_{\W}=\{(x_i+\delta_i^p+\delta^w,   y_i)\}_{i=1}^N$.

\State Choose the watermarking dimension $\W$.

\State Set $\delta^w=\epsilon_w\cdot{\rm sign}(\zeta)|_{\W}$.

\end{algorithmic}
\end{algorithm}

\begin{algorithm}[H]
\caption{Poisoning-Concurrent Watermarking}
\label{alg:concurr}
\begin{algorithmic}
\State{\bfseries Input:} The training dataset $D_{\P}=\{(x_i,   y_i)\}_{i=1}^N$. 
The key $\zeta$.

\State {\bfseries Output:} 
Watermarked poisoned training dataset $D_{\W}=\{(x_i+\delta_i^p+\delta^w,   y_i)\}_{i=1}^N$.

\State Choose the watermarking dimension $\W$.

\State Set $\delta^w=\epsilon_w\cdot{\rm sign}(\zeta)|_{\W}$.

\State Update poisons $\delta_i^p$ on poisoning dimension $\P=[d]-\W$.

\end{algorithmic}
\end{algorithm}

\begin{algorithm}[H]
\caption{Detection}
\label{alg:detect}
\begin{algorithmic}
\State{\bfseries Input:} The suspect training data $\tilde{x}$. 
The key $\zeta$. The detection threshold $\tau$.

\State {\bfseries Output:} 
1 (Positive) or 0 (Negative).

\State Compute the detection value $v=\zeta^T\tilde{x}$

\State If $v>\tau$, return 1, else $v\leq\tau$, return 0.

\end{algorithmic}
\end{algorithm}

\section{Additional Experiments}
\label{app:experiments}
\subsection{Additional Experiments on More Datasets}
We extend our evaluation to CIFAR-100 and TinyImageNet for UE and AP poisons on Table \ref{tab:cifar100} and Table \ref{tab:tinyim} respectively. Results demonstrate similar trends: as the watermark length increases, detectability improves (higher AUROC), while poisoning effectiveness decreases (higher clean accuracy), confirming our theoretical claims.

\begin{table*}[h]
\vspace{-5pt}
\centering
\small
\caption{The clean accuracy (Acc, \%) and AUROC of UE and AP availability attacks both on post-poisoning watermarking and poisoning-concurrent watermarking with different watermarking length $q$ under ResNet-18 and CIFAR-100.
}
\setlength{\tabcolsep}{5pt}
\begin{tabular}{lllllll}
\toprule
         Length/Method   & \multicolumn{2}{c}{UE} & \multicolumn{2}{c}{AP} \\
         Acc($\downarrow$)/AUROC($\uparrow$)   & Post-Poisoning & Poisoning-Concurrent &  Post-Poisoning & Poisoning-Concurrent\\
\midrule
0(Baseline) & 1.24/- & 1.24/- & 1.71/- & 1.71/-\\
100  & 1.21/0.5796 & 1.15/0.8064 & 1.75/0.5913 & 1.66/0.6950 \\
300  & 1.25/0.7145 & 1.43/0.8839 & 1.66/0.7667 & 1.69/0.7732 \\
500  & 1.19/0.7822 & 2.51/0.9150 & 1.77/0.8710 & 1.85/0.8931 \\
1000  & 1.43/0.9354 & 1.46/0.9758 & 1.72/0.9669 & 2.36/0.9949 \\
1500  & 1.10/0.9963 & 1.66/0.9992 & 1.68/0.9893 & 2.19/0.9995 \\
2000  & 1.28/0.9982 & 3.49/0.9999 & 1.90/0.9986 & 6.98/1.0000 \\
2500  & 1.36/0.9995 & 54.10/1.0000 & 1.76/1.0000 & 32.41/1.0000 \\
3000  & 1.57/1.0000 & 71.04/1.0000 & 2.36/1.0000 & 69.85/1.0000 \\
\bottomrule
\end{tabular}
\label{tab:cifar100}
\end{table*}

\begin{table*}[h]
\vspace{-5pt}
\centering
\small
\caption{The clean accuracy (Acc, \%) and AUROC of UE and AP availability attacks both on post-poisoning watermarking and poisoning-concurrent watermarking with different watermarking length $q$ under ResNet-18 and TinyImageNet.
}
\setlength{\tabcolsep}{5pt}
\begin{tabular}{lllllll}
\toprule
         Length/Method   & \multicolumn{2}{c}{UE} & \multicolumn{2}{c}{AP} \\
         Acc($\downarrow$)/AUROC($\uparrow$)   & Post-Poisoning & Poisoning-Concurrent &  Post-Poisoning & Poisoning-Concurrent\\
\midrule
0(Baseline) & 	0.75/- & 0.75/- & 9.37/- & 9.37/- \\
500  &  1.15/0.7850	& 1.24/0.9623 & 11.85/0.8054 & 8.74/0.9794 \\
1000  &  0.92/0.8587 & 1.70/0.9952 & 8.62/0.8620 & 11.25/0.9967 \\
2000  &  0.95/0.9596 & 3.69/0.9994 & 13.40/0.9640 & 22.61/0.9998 \\
5000  &  2.23/0.9998 & 11.01/0.9999 & 22.17/1.0000 & 43.30/1.0000 \\
10000  &  7.14/1.0000 & 48.32/1.0000  & 36.81/1.0000 & 47.05/1.0000 \\
\bottomrule
\end{tabular}
\label{tab:tinyim}
\end{table*}

Furthermore, for text dataset, we implement watermarking ($\epsilon_w=16/255$
    ) in a backdoor attack on SST-2 dataset with BERT-base model~\citep{gan2021triggerless}, observing similar trends compared with other visual datasets for this NLP task.

 \begin{table*}[h]
\centering
\caption{The accuracy (Acc, \%), ASR and AUROC of SST-2 dataset on BERT-base model with different watermarking length $q$.
}
\setlength{\tabcolsep}{5pt}
\begin{tabular}{lllllll}
\toprule
            & Post-Poisoning	&Poisoning-Concurrent \\
         Length   & Acc/ASR/AUROC & Acc/ASR/AUROC \\
\midrule
0	&89.7/98.0/-	&89.7/98.0/- \\
100	&89.8/97.8/0.697	&89.6/97.2/0.969 \\
200	&89.2/97.3/0.852	&89.9/96.1/0.983 \\
400	&89.6/96.2/0.931	&89.3/90.5/0.998 \\
600	&89.3/96.7/0.983	&89.5/72.3/0.999 \\

\bottomrule
\end{tabular}
\label{tab:sst}
\end{table*}

\subsection{Additional Experiments on More Network Structures}
For model transferability, we evaluate our watermarking with length $q=1000$ across ResNet-50, VGG-19, DenseNet121, WRN34-10, MobileNet v2 and ViT-B models. Results shown in Table \ref{tab:bd-transfer} and Table \ref{tab:ava-transfer} demonstrate strong transferability (high AUROC and low accuracy) across network architectures, further validating our theoretical insights.

\begin{table*}[h]
\vspace{-2pt}
\centering
\small
\caption{The clean accuracy (Acc, \%), attack success rate (ASR, \%), and AUROC of Narcissus and AdvSc backdoor attacks on both post-poisoning watermarking and poisoning-concurrent watermarking with various victim models under  CIFAR-10.
}
\setlength{\tabcolsep}{3pt}
\begin{tabular}{lllllll}
\toprule
         Model/Method   & \multicolumn{2}{c}{Narcissus} & \multicolumn{2}{c}{AdvSc} \\
         Acc/ASR/AUROC($\uparrow$)   & Post-Poisoning & Poisoning-Concurrent &  Post-Poisoning & Poisoning-Concurrent\\
\midrule
ResNet-18  & 94.40/92.43/0.9974 & 94.32/92.03/0.9992 & 93.05/94.41/0.9809 & 93.38/84.39/0.9995 \\
ResNet-50  & 94.46/93.12/0.9969 & 94.85/93.01/0.9985 & 92.55/93.30/0.9827 & 92.16/86.53/0.9995  \\
VGG-19  & 93.74/91.80/0.9975 & 92.61/91.97/0.9995 & 91.47/93.94/0.9926 & 91.80/79.34/0.9999 \\
DenseNet121  & 94.18/92.66/0.9977 & 94.52/92.39/0.9990 & 94.12/93.73/0.9905 & 92.67/90.32/0.9998 \\
WRN34-10 & 94.95/92.14/0.9981 & 95.02/91.36/0.9989 & 94.74/94.85/0.9860 & 94.12/89.63/0.9994 \\
MobileNet v2 & 94.63/92.41/0.9972 & 94.15/92.14/0.9986 & 93.63/94.51/0.9754 & 93.75/83.29/0.9996 \\
ViT-B & 94.87/94.25/0.9991	& 95.25/93.37/1.0000	& 94.32/93.26/0.9922	& 94.23/91.45/1.000 \\
\bottomrule
\end{tabular}
\label{tab:bd-transfer}
\end{table*}

\begin{table*}[h]
\vspace{-5pt}
\centering
\small
\caption{The clean accuracy (Acc,\%) and AUROC of UE and AP availability attacks both on post-poisoning watermarking and poisoning-concurrent watermarking with various victim models under CIFAR-10.
}
\setlength{\tabcolsep}{5pt}
\begin{tabular}{lllllll}
\toprule
         Model/Method   & \multicolumn{2}{c}{UE} & \multicolumn{2}{c}{AP} \\
         Acc($\downarrow$)/AUROC($\uparrow$)   & Post-Poisoning & Poisoning-Concurrent &  Post-Poisoning & Poisoning-Concurrent\\
\midrule
ResNet-18  & 11.37/0.9499 & 9.42/0.9991 & 10.58/0.9742 & 21.87/0.9949 \\
ResNet-50  & 10.15/0.9583& 12.26/0.9992	& 9.97/0.9678&14.76/0.9947 \\
VGG-19  & 12.96/0.9644& 12.21/0.9993	&10.80/0.9800&20.34/0.9952\\
DenseNet121  &  19.30/0.9545&17.87/0.9985	&12.35/0.9767&11.76/0.9978\\
WRN34-10 & 12.31/0.9702&10.55/0.9988	&10.24/0.9821&15.98/0.9958\\
MobileNet v2 & 14.03/0.9473&16.90/0.9986	&11.36/0.9726&18.51/0.9941
\\
ViT-B  & 13.97/0.9728	& 14.80/0.9989	& 10.51/0.9793	 & 12.75/0.9970 \\
\bottomrule
\end{tabular}
\label{tab:ava-transfer}
\end{table*}

\subsection{Results under Different Watermarking Budget}
\label{app:budget}
We evaluate our watermarking algorithms under different watermarking budgets—$4/255$, $8/255$, $16/255$, and $32/255$—with a fixed watermarking length of 1000. The results indicate that as the budget $\epsilon_w$ increases, detectability improves while poisoning effectiveness declines. This aligns with our theoretical findings: as $\epsilon_w$ grows, both $\Omega(\sqrt{d}/\epsilon_w)$ (post-poisoning) and $\Omega(1/\epsilon_w^2)$ (poisoning-concurrent) decrease, leading to better detectability. Additionally, the error term $O\left(\epsilon_w\sqrt{\frac{q \log 1/\omega}{d}}\right)$ (Theorem \ref{generalization-watermarking} and Corollary \ref{cor-5.8}) influences poisoning effectiveness, meaning a larger $\epsilon_w$ weakens the poisoning power guarantee. This is evident in our results, where AdvSc achieves only 60.04\% and 36.45\% ASR under $\epsilon_w=32/255$ for post-poisoning and poisoning-concurrent watermarking, a trend also observed in Figure \ref{fig:wm-budget} in Section \ref{sec:ablation}.

\begin{table*}[h]
\vspace{-2pt}
\centering
\small
\caption{The clean accuracy (Acc, \%), attack success rate (ASR, \%), and AUROC of Narcissus and AdvSc backdoor attacks on both post-poisoning watermarking and poisoning-concurrent watermarking under different watermarking budgets on CIFAR-10 dataset.
}
\setlength{\tabcolsep}{3pt}
\begin{tabular}{lllllll}
\toprule
         Budget/Method   & \multicolumn{2}{c}{Narcissus} & \multicolumn{2}{c}{AdvSc} \\
         Acc/ASR/AUROC($\uparrow$)   & Post-Poisoning & Poisoning-Concurrent &  Post-Poisoning & Poisoning-Concurrent\\
\midrule
4/255  & 94.35/94.28/0.9114	& 94.43/94.21/0.8297 &	92.94/98.68/0.8132	& 93.25/91.68/0.8655\\
8/255  &  	94.71/93.76/0.9535 & 94.99/92.69/0.8948	& 93.04/98.88/0.9427	& 93.27/87.48/0.9651\\
16/255  & 	94.40/92.43/0.9974	& 94.32/92.03/0.9992	& 93.05/94.41/0.9809	& 93.38/84.39/0.9995 \\
32/255  & 94.86/90.66/0.9998	& 94.87/80.17/1.0000	& 93.13/60.04/0.9999	& 92.76/36.45/1.0000
\\

\bottomrule
\end{tabular}
\end{table*}

\subsection{Watermarking on Clean Samples}
Beyond data poisoning, we test watermarking on clean CIFAR-10 with 
 $\epsilon_w$ be 4/255, 8/255, 16/255 and 32/255 on Table \ref{tab:clean_acc}. The results indicate strong detectability with minimal accuracy degradation, even for large perturbations (32/255). It is worth noting that, for clean samples, post-poisoning and poisoning-concurrent watermarking will become the same as there are no poisons involved.

 \begin{table*}[h]
\centering
\caption{The accuracy (Acc, \%) and AUROC of clean CIFAR-10 dataset with different watermarking length $q$ under ResNet-18.
}
\setlength{\tabcolsep}{5pt}
\begin{tabular}{lllllll}
\toprule
         Budget   & 4/255 & 8/255 & 16/255 & 32/255 \\
         Length   & Acc/AUROC & Acc/AUROC& Acc/AUROC& Acc/AUROC \\
\midrule
0  & 95.25/- & 95.25/-	& 95.25/- & 95.25/- \\
200	&95.12/0.5527&94.85/0.6218	&94.75/0.7854&94.48/0.8672 \\
500	&94.90/0.6638&94.53/0.8317	&93.66/0.9683&91.66/0.9990 \\
1000	&94.56/0.8679&94.08/0.9700	&92.87/0.9929&89.54/1.0000 \\
1500	&94.22/0.9491&93.82/0.9764	&92.02/0.9998&91.60/1.0000 \\
2000	&94.01/0.9736&93.37/0.9946	&90.34/1.0000&88.20/1.0000 \\
2500	&93.86/0.9935&93.49/1.0000	&88.70/1.0000&83.20/1.0000 \\

\bottomrule
\end{tabular}
\label{tab:clean_acc}
\end{table*}

\subsection{Computational Cost}
\label{app:cost}
We evaluate the computational overheads for our watermarking techniques on UE and AP availability attacks, as well as Narcissus and AdvSc backdoor attacks. All experiments are evaluated on a single NVIDIA A800 80GB PCIe GPU. Results in Table \ref{tab:cost} show that our watermarking is highly efficient, requiring only seconds for post-poisoning watermarking and detection. Even for poisoning-concurrent watermarking, it incurs a minimal 10-minute overhead. Therefore, we believe our watermarking schemes are efficient to deploy in real-world applications.

 \begin{table*}[h]
\centering
\caption{The time cost of our watermarking techniques under CIFAR-10 dataset on various data poisoning attacks.
}
\setlength{\tabcolsep}{5pt}
\begin{tabular}{lllllll}
\toprule
         Time   & UE & AP &Narcissus & AdvSc \\
\midrule
Poisoning(baseline)	&$\approx$80min	& $\approx$65min &$\approx$70min &$\approx$190min\\
Post-poisoning	&$\approx$30s	& $\approx$30s & $\approx$30s &$\approx$30s\\
Poisoning-concurrent	&$\approx$90min	& $\approx$70min&$\approx$75min&$\approx$200min\\
Detection	&$\approx$40s	& $\approx$40s &$\approx$40s & $\approx$40s\\

\bottomrule
\end{tabular}
\label{tab:cost}
\end{table*}

\section{Robust Watermarking under Various Defenses and Removals}
\label{app:robust}

\textbf{Data augmentation and image regeneration.}
Under some data augmentations or image reconstructions, the provable watermarking may not hold because the relative position between watermarks and keys has been broken. However, we can train a watermark detector with the known key, and judge whether the data is watermarked with the detector.
Specifically, denote the clean dataset as $\left\{ (x_i, y_i) \right\}_{i=1}^N$, where $x_i\in\R^d$ is the data, $y_i\in\mathbb{Z}$ is the label. The key $\zeta\in\R^d$. We craft the watermark detection training set $\D_d$ as $\left\{ (x_i, 0) \right\}_{i=1}^N \cap \left\{ (x_i + \epsilon_w\cdot\zeta, 1) \right\}_{i=1}^N$, where $\epsilon_w$ is a small budget that the injected watermarks $\delta^w$ compromise, and train a detector $\T$ with $\D_d$ under data augmentations.
For a suspect data $\tilde{x}$ which may be poisoned with watermarking, we argue that $\tilde{x}$ is poisoned if $\T (\tilde{x})=1$; otherwise, $\tilde{x}$ is recognized as benign data.
We evaluate the performance of our detector under several data augmentations, including Random Flip, Cutout~\citep{devries2017improved}, Color Jitter and Grayscale.
Furthermore, we also evaluate the watermarking performance under some regeneration attacks including VAE-based attack~\citep{cheng2020learned} and generative adversarial network~\citep{tran2020gan}.
Experimental results presented in Table \ref{tab:robust} have shown stronger detection performance, validating the robustness of our proposed watermarking.
\begin{table}[h]
\centering
\caption{The detection performance (AUROC) of poisoning-concurrent watermarking of UE and AP with watermarking length be 500 under various data augmentations.
}
\label{tab:robust}
\setlength{\tabcolsep}{5pt}
\begin{tabular}{lccccccc}
\toprule
Type   & Random Flip & Cutout & Color Jitter & Grayscale & VAE~\citep{cheng2020learned} & GAN~\citep{tran2020gan}
\\
\midrule
UE & 1.0000 & 1.0000 & 1.0000 & 0.9930 & 0.9987 & 0.9853
\\
AP & 1.0000 & 1.0000 & 1.0000 & 0.9996 & 0.9395 & 0.9830
\\
\bottomrule
\end{tabular}
\end{table}

\textbf{Differential privacy noises.} To further evaluate the robustness of our watermarking, we consider adaptive attacks based on ($\epsilon,\delta$)-DP, applying both Gaussian and Laplacian mechanisms with $\epsilon=2, \delta=10^{-5}$.
We evaluate them on poisoning-concurrent watermarking with $q=1500$ under UE, AP, Narcissus and AdvSc, results are shown in Table \ref{tab:dp}. Unfortunately, due to the extremely large noise level introduced in the pixel space (e.g., $\sigma=\frac{\Delta}{\epsilon}=\frac{8/255\cdot3072}{2}=48$, for the Laplacian mechanism) to the pixel space, the network fails to converge. This is because DP mechanisms are typically applied to neural network gradients or parameters, not directly to training data, and the severe perturbation causes samples from different classes to become indistinguishable.

It may be counterintuitive that UE and AP achieve lower clean accuracy under DP noise. Under normal training, UE and AP can still converge, reaching nearly 100\% training and validation accuracy but only about 10\% test accuracy, consistent with availability attack objectives. In contrast, when training on DP-perturbed data, the training/validation accuracy also drops to about 10\%, indicating complete training failure. This contradicts the goal of availability attacks, which aim to deceive victims into believing the model is well-trained, while failing on unseen test data (see \citep{huang2020unlearnable} for details). Notably, backdoor attacks don’t exhibit this confusion as they seek high ASR rather than low accuracy. Although DP-based defenses reduce the detection performance of watermarking, the poisoning utilities have been completely destroyed. Therefore, DP-based defenses are not applicable in our context.

\begin{table}[h]
\centering
\caption{Clean accuracy(Acc, \%), attack success rate(ASR, \%) and AUROC of poisoning-concurrent watermarking with length be 1500 under DP noises.
}
\label{tab:dp}
\setlength{\tabcolsep}{5pt}
\begin{tabular}{lccccccc}
\toprule
ACC/ASR/AUROC   & DP-Gaussian &	DP-Laplacian
\\
\midrule
UE & 	14.01/-/0.8016	& 12.79/-/0.5759
\\
AP & 15.85/-/0.7923	 & 10.88/-/0.6232
\\
Narcissus & 13.37/10.12/0.8135 & 11.76/9.98/0.6126 \\
AdvSc & 15.11/10.03/0.7447	& 11.15/10.06/0.5880
 \\
\bottomrule
\end{tabular}
\end{table}

\textbf{Diffusion purification.} For diffusion purification~\citep{zhao2023invisible}, results are shown in Table \ref{tab:diffusion}. Although our watermarking exhibits weak detectability, it is important to note that the poison utility is simultaneously eliminated. As shown in the following table, diffusion purification significantly mitigates availability poisoning attacks, recovering test accuracy from about 10\% to over 80\%. It also destroys backdoor poisoning attacks, reducing the attack success rate to less than 20\%.  This is reasonable as diffusion purification is a powerful defense against noise injection, including adversarial attacks~\citep{nie2022diffusion}, availability attacks~\citep{dolatabadi2023devil} and diffusion model watermarking~\citep{hu2024stable}.

In our scenario, watermarking is designed to serve the purpose of data poisoning. If the poisoning itself is neutralized, the effectiveness of the watermark becomes irrelevant. Given that our work focuses on imperceptible poisoning and watermarking, this limitation appears to be an inherent trade-off. 
Similar to DP-based defenses, although diffusion purification reduces the detection performance of watermarking, the poisoning utilities have been completely destroyed. Therefore,diffusion purification is also not applicable in our context.

\begin{table}[h]
\centering
\caption{Accuracy(ACC), attack success rate(ASR) and AUROC of poisoning-concurrent watermarking with length be 1500 under diffusion purification.
}
\label{tab:diffusion}
\setlength{\tabcolsep}{5pt}
\begin{tabular}{lccccccc}
\toprule
Type   & UE & AP & Narcissus & AdvSc\\
\midrule
ACC/ASR/AUROC & 	84.67/-/0.5251 & 85.22/-/0.5189 & 93.17/16.86/0.5375 & 93.08/10.01/0.5420 \\
\bottomrule
\end{tabular}
\end{table}

\textbf{Potential removal methods.} We conduct additional experiments on UE and AP with direct masking of the known watermarking dimensions (Masking), as well as the adversarial noising proposed by \citep{lukas2023leveraging}. We test both post-poisoning and poisoning-concurrent watermarking under $q=2000$.As the results shown below, although the detection performance (AUROC) drops, the utility of UE and AP also degrades significantly. The underlying reasons may be that availability attacks are designed with potential linear shortcut features~\citep{yu2022availability, zhu2024detection}, the masking of watermarking dimensions somehow destroys these linear features, undermining the unlearnability (low Acc). Adversarial Noising further destroys the poisoning utility as availability attacks are theoretically removed by perfect adversarial training~\citep{tao2021better}. Therefore, these adaptive removal attacks fail to maintain the poisoning utility, making them not applicable in our cases.

\begin{table}[h]
\centering
\caption{Accuracy(Acc) and AUROC of UE and AP availability attacks under potential removal methods, masking and adversarial noising.
}
\label{tab:masking}
\setlength{\tabcolsep}{5pt}
\begin{tabular}{lccccccc}
\toprule

Acc/AUROC	&Baseline	&Masking	&Adversarial Noising\\
\midrule
UE(Post-Poisoning)&	9.06/0.9992	&60.71/0.4998	&72.90/0.5893 \\
AP(Post-Poisoning)	&10.48/0.9987	&56.85/0.5005	&76.21/0.5616 \\
UE(Poisoning-Concurrent)	&10.03/1.0000	&55.49/0.5014	&68.37/0.6206 \\
AP(Poisoning-Concurrent)	&38.62/1.0000	&59.87/0.5002	&74.63/0.5833 \\
\bottomrule
\end{tabular}
\end{table}

\section{Visualization}
To further substantiate the imperceptibility of our proposed watermarking, we visualize the benign images, poisons, watermarks, and modified images. Both poisons and watermarks are normalized to $[0,1]$ in order to improve their visibility. Figure \ref{fig:ue_wm} shows the watermarking visualization under UE poisons; our watermarking demonstrates strong imperceptibility.
\begin{figure}
    \centering
    \includegraphics[width=0.8\linewidth]{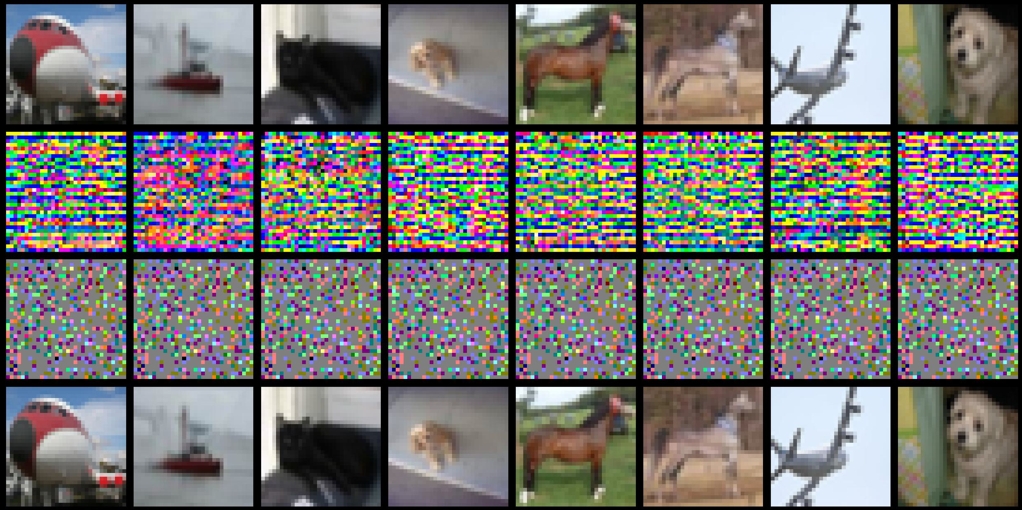}
    \caption{Visualization of UE poisoning-concurrent watermarking with length $q=500$ for CIFAR-10 dataset. The first row is the benign images, the second row is the normalized UE poisons, the third row is the normalized watermarks, the fourth row is the perturbed images under watermarking poisons.}
    \label{fig:ue_wm}
\end{figure}

\section{Covertness of Watermarking}
For an practical watermarking, beyond their detectability, it also requires {\em covertness}. That means, if users do not obtain the watermarking key $\zeta$, it is hard for them to discern poisoned data and benign data.  In other words, if the key $\zeta$ is random (independent from the watermarks $\delta^w$), the performance between poisoned data $x'+\delta^w$ and benign data $x$ under random key $\zeta$ will have negligible difference. 
We will prove this property for post-poisoning watermarking; the property of poisoning-concurrent watermarking also holds similarly.

\begin{theorem}[Covertness for post-poisoning watermarking]
\label{th-covert}
For post-poisoning watermarking with watermarks $\delta^w$, assume that the poisoned data $x'=x+\delta_x^p$, and the benign data $\bar{x}$ are independently sampled from the data distribution $\D$. 
For the random identical key $\zeta\in\R^d$,  it holds that 
$\E_{\zeta}\left[\zeta^T(x'+ \delta^w)\right] = \E_{\zeta}\left[\zeta^T\tilde{x}\right].$ Furthermore, it holds that $\Prob_{\zeta}\left[ \left|\zeta^T(x'+\delta^w)-\zeta^T\tilde{x}\right|\leq\sqrt{\frac{d}{2}\log\frac{2}{\omega}} \right]>1-\omega$.
\end{theorem}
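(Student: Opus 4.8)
The plan is to establish the two claims separately: first the expectation identity, then the concentration bound via McDiarmid's inequality.

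For the expectation identity, I would write $\zeta^T(x'+\delta^w) = \zeta^Tx + \zeta^T\delta_x^p + \zeta^T\delta^w$. The key observation is that the random identical key $\zeta$ has $\E_\zeta[\zeta^i] = 0$ for every coordinate $i$, and $\zeta$ is sampled independently of the poisoned data $x'$ (and of $\tilde{x}$). Since $\delta^w$ is a \emph{fixed} watermark in this covertness setting — i.e., not crafted as a function of $\zeta$, which is precisely what "random key" means here — we get $\E_\zeta[\zeta^T\delta^w] = \sum_i \E_\zeta[\zeta^i]\,\delta^{w,i} = 0$, and likewise $\E_\zeta[\zeta^Tx'] = 0$ by independence of $\zeta$ from $x'$. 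The same reasoning gives $\E_\zeta[\zeta^T\tilde{x}] = 0$, so both sides equal zero and the identity follows. I would be careful to state explicitly that this is the regime where $\zeta$ is independent of $\delta^w$, contrasting with Theorem~\ref{sample-wise-wm-pre-key} where $\delta^w$ was built from $\zeta$.

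For the concentration bound, I would apply McDiarmid's inequality to the function $f(\zeta^1,\dots,\zeta^d) = \zeta^T(x'+\delta^w) - \zeta^T\tilde{x} = \sum_{i=1}^d \zeta^i\big((x')^i + (\delta^w)^i - \tilde{x}^i\big)$, treating the $d$ coordinates $\zeta^i \sim \mathcal{U}\{-1,+1\}$ as the independent random variables (conditioning on the data $x', \tilde{x}$, which are fixed). Flipping coordinate $i$ changes $f$ by at most $2|(x')^i + (\delta^w)^i - \tilde{x}^i|$; since $x', \tilde{x} \in [0,1]^d$ and $\|\delta^w\|_\infty \le \epsilon_w$, a crude bound gives each bounded difference $c_i \le 2(1+\epsilon_w) \le 4$ — but this would produce a worse constant than stated. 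To match the claimed $\sqrt{\tfrac{d}{2}\log\tfrac{2}{\omega}}$, I suspect the intended argument uses $c_i = 1$ by noting (as in the proof of Theorem~\ref{sample-wise-wm-pre-key}) that $|\zeta^i x^i - \zeta^i \tilde{x}^i| \le |\zeta^i| = 1$ pointwise and handling the terms more carefully, or by a per-coordinate argument with effective range $1$; I would reconcile this with the paper's earlier McDiarmid applications to get $\sum c_i^2 = d$. Then McDiarmid gives $\Prob_\zeta[f - \E f \ge t] \le e^{-2t^2/d}$ and symmetrically for the lower tail; combining via a union bound with $\E f = 0$ yields $\Prob_\zeta[|f| \ge t] \le 2e^{-2t^2/d}$, and setting $\omega = 2e^{-2t^2/d}$ gives $t = \sqrt{\tfrac{d}{2}\log\tfrac{2}{\omega}}$, which is exactly the claim.

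The main obstacle I anticipate is pinning down the correct bounded-difference constants $c_i$ so that the final bound has the stated form $\sqrt{\tfrac{d}{2}\log\tfrac{2}{\omega}}$ rather than a constant-factor-larger expression; this hinges on exploiting that the difference $x' + \delta^w - \tilde{x}$ can be decomposed and each piece controlled by the $\pm 1$ range of $\zeta^i$ in the same way the paper does for $\zeta^Tx$ alone. Everything else — the expectation identity and the two-sided McDiarmid tail plus union bound — is routine given the machinery already introduced in Appendix~\ref{sec-wm-proof-1}.
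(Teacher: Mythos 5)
Your proposal follows the paper's own proof in both parts: the expectation identity via $\E_\zeta[\zeta^i]=0$ together with independence of $\zeta$ from the data and from $\delta^w$, then a two-sided McDiarmid bound on $f(\zeta) = \zeta^T(x'+\delta^w-\tilde{x})$, combined with a union bound and setting $\omega$ equal to the total tail mass. The only place your calculation diverges is the per-coordinate range: the paper implicitly assumes the watermarked poisoned image has been clipped so that $x'+\delta^w \in [0,1]^d$ (it writes that "$x'+\delta^w$ and $\bar{x}$ both lie in $[0,1]$"), which removes the extra $\epsilon_w$ from your crude bound and gives $|(x'+\delta^w)^i - \tilde{x}^i| \le 1$ directly.

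Your worry about the bounded-difference constant is well founded, and the "reconciliation" you sketch will not actually close it. The paper's inequality $|\zeta^i(x'+\delta^w)^i - \zeta^i\tilde{x}^i| \le |\zeta^i| = 1$ bounds the magnitude of the $i$-th summand of $f$; it is not the McDiarmid bounded difference when the random variables are the key bits $\zeta^1,\dots,\zeta^d$. Flipping $\zeta^i$ from $+1$ to $-1$ changes $f$ by $2\,|(x'+\delta^w)^i - \tilde{x}^i| \le 2$, so $c_i \le 2$, $\sum_i c_i^2 \le 4d$, and McDiarmid honestly yields a one-sided tail $e^{-\alpha^2/(2d)}$ rather than $e^{-2\alpha^2/d}$. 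The same factor-of-four miscount in the exponent appears in the paper's proof of Theorem~\ref{sample-wise-wm-pre-key} as well as in the covertness statement, so "reconciling with the paper's earlier McDiarmid applications" would simply reproduce the slip rather than resolve it; your instinct that $c_i=1$ does not follow from the standard bounded-difference computation is correct. The qualitative conclusion, that $|\zeta^T(x'+\delta^w) - \zeta^T\tilde{x}| = O(\sqrt{d\log(1/\omega)})$ with probability at least $1-\omega$, is unaffected; only the leading constant changes.
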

\begin{proof}[Proof of Theorem \ref{th-covert}]
As $\zeta\in\R^d$ is the random identical key, it holds that 
$$\E_{\zeta}\left[\zeta^T(x'+ \delta^w)\right]=0$$ as well as 
$$\E_{\zeta}\left[\zeta^T\tilde{x}\right]=0.$$
Therefore, it has $$\E_{\zeta}\left[\zeta^T(x'+ \delta^w)\right] = \E_{\zeta}\left[\zeta^T\tilde{x}\right].$$

Additionally, as $x'+ \delta^w$ and $\bar{x}$ both lie in $[0,1]$, it always has $$|\zeta^i (x'+ \delta^w)^i-\zeta^i \tilde{x}^i|\leq |\zeta^i|=1$$ for all $i$, $\zeta$.

Therefore, by McDiarmid’s inequality, for any $\alpha>0$, it has $$\Prob_{\zeta}[|\zeta^T(x'+\delta^w-\tilde{x})|\geq\alpha]\leq 2e^{-\frac{2\alpha^2}{d}}.$$ 
Therefore, let $$\omega=2e^{-\frac{2\alpha^2}{d}},$$ it has $$\alpha=\sqrt{\frac{d}{2}\log{\frac{2}{\omega}}}.$$
\end{proof}

\begin{remark}
For post-poisoning watermarking, if a detector does not obtain the key, the expected predictions for the (watermarked) poisoned data and (unwatermarked) benign data are equal. Therefore, it is hard to detect watermarks without the key.
\end{remark}

We validate this property on two backdoor attacks, Narcissus and AdvSc, and two availability attacks, UE and AP. We consider the post-poisoning watermarking with watermarking length $q=2000$, and test the detection performance of the corresponding watermarking key and the random identical key independently from the watermarking $\delta^w$. The results shown in Figure \ref{fig:random-key} demonstrate that, if the detector just uses a random key for detection, the AUROC is approaching 0.5, meaning that it is ineffective and almost like a random guess.
\begin{figure}[H]
    \centering
    \includegraphics[width=0.5\linewidth]{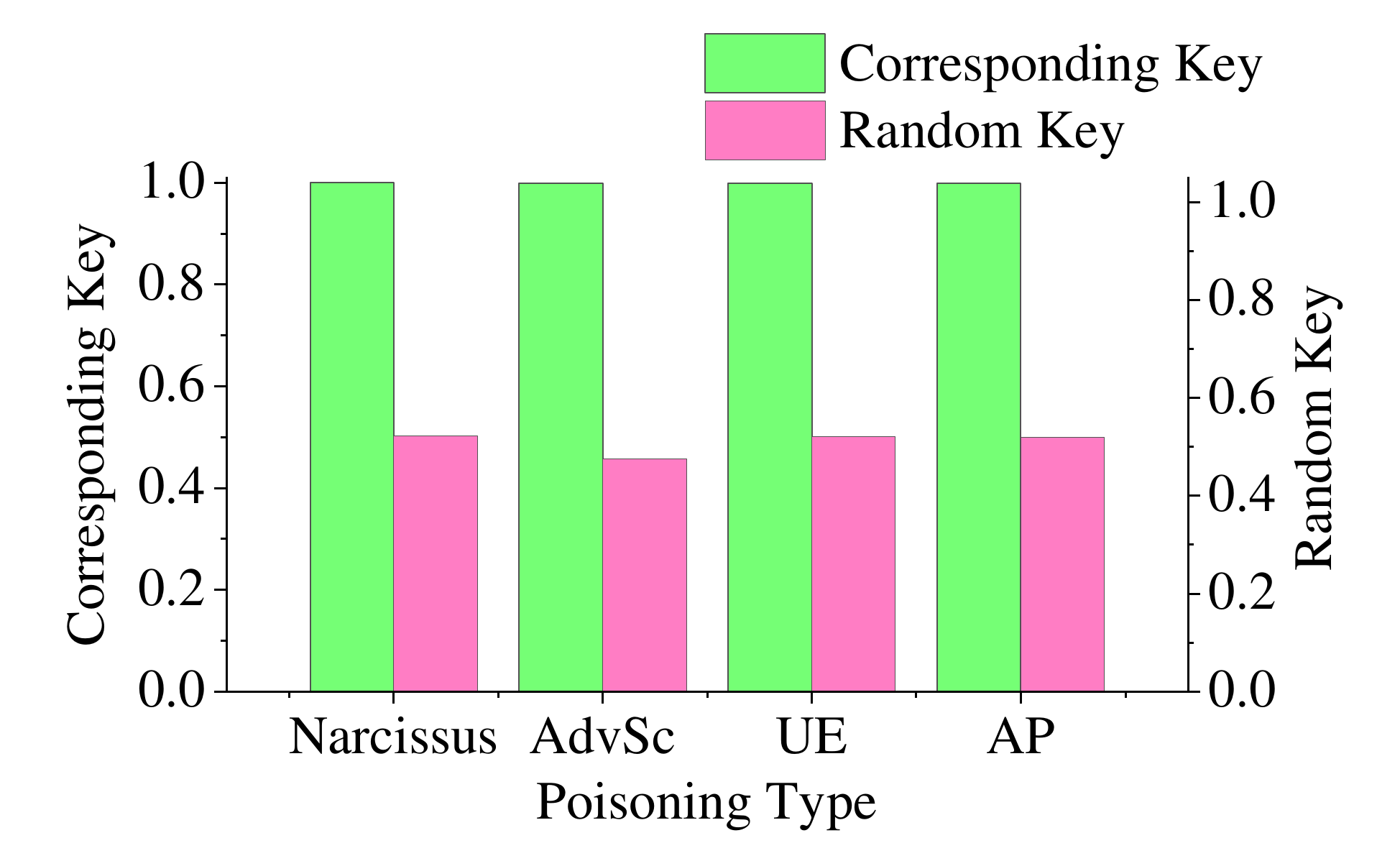}
    \caption{The detection performance (AUROC) of post-poisoning watermarking of several data poisoning attacks under corresponding key and a random key.}
    \label{fig:random-key}
\end{figure}

\section{Boarder Impact Statement}
\label{app:impact}
This paper aims at crafting watermarks for data poisoning attacks. As a method to ensure authorized users can identify potential data poisoning, we believe our work is beneficial to the community and does not have a negative social impact.

\newpage
\section*{NeurIPS Paper Checklist}

\begin{enumerate}

\item {\bf Claims}
    \item[] Question: Do the main claims made in the abstract and introduction accurately reflect the paper's contributions and scope?
    \item[] Answer: \answerYes{}
    \item[] Justification:  Our paper supports the claims made in the abstract and introduction.
    \item[] Guidelines:
    \begin{itemize}
        \item The answer NA means that the abstract and introduction do not include the claims made in the paper.
        \item The abstract and/or introduction should clearly state the claims made, including the contributions made in the paper and important assumptions and limitations. A No or NA answer to this question will not be perceived well by the reviewers. 
        \item The claims made should match theoretical and experimental results, and reflect how much the results can be expected to generalize to other settings. 
        \item It is fine to include aspirational goals as motivation as long as it is clear that these goals are not attained by the paper. 
    \end{itemize}

\item {\bf Limitations}
    \item[] Question: Does the paper discuss the limitations of the work performed by the authors?
    \item[] Answer: \answerYes{}
    \item[] Justification:  We have discussed limitations in Section \ref{conclusion}.
    \item[] Guidelines:
    \begin{itemize}
        \item The answer NA means that the paper has no limitation while the answer No means that the paper has limitations, but those are not discussed in the paper. 
        \item The authors are encouraged to create a separate "Limitations" section in their paper.
        \item The paper should point out any strong assumptions and how robust the results are to violations of these assumptions (e.g., independence assumptions, noiseless settings, model well-specification, asymptotic approximations only holding locally). The authors should reflect on how these assumptions might be violated in practice and what the implications would be.
        \item The authors should reflect on the scope of the claims made, e.g., if the approach was only tested on a few datasets or with a few runs. In general, empirical results often depend on implicit assumptions, which should be articulated.
        \item The authors should reflect on the factors that influence the performance of the approach. For example, a facial recognition algorithm may perform poorly when image resolution is low or images are taken in low lighting. Or a speech-to-text system might not be used reliably to provide closed captions for online lectures because it fails to handle technical jargon.
        \item The authors should discuss the computational efficiency of the proposed algorithms and how they scale with dataset size.
        \item If applicable, the authors should discuss possible limitations of their approach to address problems of privacy and fairness.
        \item While the authors might fear that complete honesty about limitations might be used by reviewers as grounds for rejection, a worse outcome might be that reviewers discover limitations that aren't acknowledged in the paper. The authors should use their best judgment and recognize that individual actions in favor of transparency play an important role in developing norms that preserve the integrity of the community. Reviewers will be specifically instructed to not penalize honesty concerning limitations.
    \end{itemize}

\item {\bf Theory assumptions and proofs}
    \item[] Question: For each theoretical result, does the paper provide the full set of assumptions and a complete (and correct) proof?
    \item[] Answer: \answerYes{}
    \item[] Justification: We have provided the full set of assumptions in every theorem and made a
complete proof in Appendix \ref{app:proof}.
    \item[] Guidelines:
    \begin{itemize}
        \item The answer NA means that the paper does not include theoretical results. 
        \item All the theorems, formulas, and proofs in the paper should be numbered and cross-referenced.
        \item All assumptions should be clearly stated or referenced in the statement of any theorems.
        \item The proofs can either appear in the main paper or the supplemental material, but if they appear in the supplemental material, the authors are encouraged to provide a short proof sketch to provide intuition. 
        \item Inversely, any informal proof provided in the core of the paper should be complemented by formal proofs provided in appendix or supplemental material.
        \item Theorems and Lemmas that the proof relies upon should be properly referenced. 
    \end{itemize}

    \item {\bf Experimental result reproducibility}
    \item[] Question: Does the paper fully disclose all the information needed to reproduce the main experimental results of the paper to the extent that it affects the main claims and/or conclusions of the paper (regardless of whether the code and data are provided or not)?
    \item[] Answer: \answerYes{}
    \item[] Justification: We have provided reproductive details in Section \ref{exp-setup} and given detailed codes in the supplemental material.
    \item[] Guidelines:
    \begin{itemize}
        \item The answer NA means that the paper does not include experiments.
        \item If the paper includes experiments, a No answer to this question will not be perceived well by the reviewers: Making the paper reproducible is important, regardless of whether the code and data are provided or not.
        \item If the contribution is a dataset and/or model, the authors should describe the steps taken to make their results reproducible or verifiable. 
        \item Depending on the contribution, reproducibility can be accomplished in various ways. For example, if the contribution is a novel architecture, describing the architecture fully might suffice, or if the contribution is a specific model and empirical evaluation, it may be necessary to either make it possible for others to replicate the model with the same dataset, or provide access to the model. In general. releasing code and data is often one good way to accomplish this, but reproducibility can also be provided via detailed instructions for how to replicate the results, access to a hosted model (e.g., in the case of a large language model), releasing of a model checkpoint, or other means that are appropriate to the research performed.
        \item While NeurIPS does not require releasing code, the conference does require all submissions to provide some reasonable avenue for reproducibility, which may depend on the nature of the contribution. For example
        \begin{enumerate}
            \item If the contribution is primarily a new algorithm, the paper should make it clear how to reproduce that algorithm.
            \item If the contribution is primarily a new model architecture, the paper should describe the architecture clearly and fully.
            \item If the contribution is a new model (e.g., a large language model), then there should either be a way to access this model for reproducing the results or a way to reproduce the model (e.g., with an open-source dataset or instructions for how to construct the dataset).
            \item We recognize that reproducibility may be tricky in some cases, in which case authors are welcome to describe the particular way they provide for reproducibility. In the case of closed-source models, it may be that access to the model is limited in some way (e.g., to registered users), but it should be possible for other researchers to have some path to reproducing or verifying the results.
        \end{enumerate}
    \end{itemize}

\item {\bf Open access to data and code}
    \item[] Question: Does the paper provide open access to the data and code, with sufficient instructions to faithfully reproduce the main experimental results, as described in supplemental material?
    \item[] Answer: \answerYes{}
    \item[] Justification: We have provided our codes in the supplemental material.
    \item[] Guidelines:
    \begin{itemize}
        \item The answer NA means that paper does not include experiments requiring code.
        \item Please see the NeurIPS code and data submission guidelines (\url{https://nips.cc/public/guides/CodeSubmissionPolicy}) for more details.
        \item While we encourage the release of code and data, we understand that this might not be possible, so “No” is an acceptable answer. Papers cannot be rejected simply for not including code, unless this is central to the contribution (e.g., for a new open-source benchmark).
        \item The instructions should contain the exact command and environment needed to run to reproduce the results. See the NeurIPS code and data submission guidelines (\url{https://nips.cc/public/guides/CodeSubmissionPolicy}) for more details.
        \item The authors should provide instructions on data access and preparation, including how to access the raw data, preprocessed data, intermediate data, and generated data, etc.
        \item The authors should provide scripts to reproduce all experimental results for the new proposed method and baselines. If only a subset of experiments are reproducible, they should state which ones are omitted from the script and why.
        \item At submission time, to preserve anonymity, the authors should release anonymized versions (if applicable).
        \item Providing as much information as possible in supplemental material (appended to the paper) is recommended, but including URLs to data and code is permitted.
    \end{itemize}

\item {\bf Experimental setting/details}
    \item[] Question: Does the paper specify all the training and test details (e.g., data splits, hyperparameters, how they were chosen, type of optimizer, etc.) necessary to understand the results?
    \item[] Answer: \answerYes{}
    \item[] Justification: We have provided experimental details in Section \ref{exp-setup}.
    \item[] Guidelines:
    \begin{itemize}
        \item The answer NA means that the paper does not include experiments.
        \item The experimental setting should be presented in the core of the paper to a level of detail that is necessary to appreciate the results and make sense of them.
        \item The full details can be provided either with the code, in appendix, or as supplemental material.
    \end{itemize}

\item {\bf Experiment statistical significance}
    \item[] Question: Does the paper report error bars suitably and correctly defined or other appropriate information about the statistical significance of the experiments?
    \item[] Answer: \answerNo{}
    \item[] Justification: Our theoretical findings can be validated well by the trend of Accuracy, Attack Success Rate and AUROC under different watermarking length, even without error bars.
    Due to insufficiency of computational resource (We conduct all our experiments in a single NVIDIA A800 80GB PCIe GPU), it is expensive to reproduce all these data poisoning attacks. Therefore, we do not report error bars. 
    \item[] Guidelines:
    \begin{itemize}
        \item The answer NA means that the paper does not include experiments.
        \item The authors should answer "Yes" if the results are accompanied by error bars, confidence intervals, or statistical significance tests, at least for the experiments that support the main claims of the paper.
        \item The factors of variability that the error bars are capturing should be clearly stated (for example, train/test split, initialization, random drawing of some parameter, or overall run with given experimental conditions).
        \item The method for calculating the error bars should be explained (closed form formula, call to a library function, bootstrap, etc.)
        \item The assumptions made should be given (e.g., Normally distributed errors).
        \item It should be clear whether the error bar is the standard deviation or the standard error of the mean.
        \item It is OK to report 1-sigma error bars, but one should state it. The authors should preferably report a 2-sigma error bar than state that they have a 96\% CI, if the hypothesis of Normality of errors is not verified.
        \item For asymmetric distributions, the authors should be careful not to show in tables or figures symmetric error bars that would yield results that are out of range (e.g. negative error rates).
        \item If error bars are reported in tables or plots, The authors should explain in the text how they were calculated and reference the corresponding figures or tables in the text.
    \end{itemize}

\item {\bf Experiments compute resources}
    \item[] Question: For each experiment, does the paper provide sufficient information on the computer resources (type of compute workers, memory, time of execution) needed to reproduce the experiments?
    \item[] Answer: \answerYes{}
    \item[] Justification: We have provided them in Appendix \ref{app:cost}. 
    \item[] Guidelines:
    \begin{itemize}
        \item The answer NA means that the paper does not include experiments.
        \item The paper should indicate the type of compute workers CPU or GPU, internal cluster, or cloud provider, including relevant memory and storage.
        \item The paper should provide the amount of compute required for each of the individual experimental runs as well as estimate the total compute. 
        \item The paper should disclose whether the full research project required more compute than the experiments reported in the paper (e.g., preliminary or failed experiments that didn't make it into the paper). 
    \end{itemize}
    
\item {\bf Code of ethics}
    \item[] Question: Does the research conducted in the paper conform, in every respect, with the NeurIPS Code of Ethics \url{https://neurips.cc/public/EthicsGuidelines}?
    \item[] Answer: \answerYes{}
    \item[] Justification: ur paper conforms with the NeurIPS Code of Ethics.
    \item[] Guidelines:
    \begin{itemize}
        \item The answer NA means that the authors have not reviewed the NeurIPS Code of Ethics.
        \item If the authors answer No, they should explain the special circumstances that require a deviation from the Code of Ethics.
        \item The authors should make sure to preserve anonymity (e.g., if there is a special consideration due to laws or regulations in their jurisdiction).
    \end{itemize}

\item {\bf Broader impacts}
    \item[] Question: Does the paper discuss both potential positive societal impacts and negative societal impacts of the work performed?
    \item[] Answer: \answerYes{}
    \item[] Justification: We have discussed them in Appendix \ref{app:impact}.
    \item[] Guidelines:
    \begin{itemize}
        \item The answer NA means that there is no societal impact of the work performed.
        \item If the authors answer NA or No, they should explain why their work has no societal impact or why the paper does not address societal impact.
        \item Examples of negative societal impacts include potential malicious or unintended uses (e.g., disinformation, generating fake profiles, surveillance), fairness considerations (e.g., deployment of technologies that could make decisions that unfairly impact specific groups), privacy considerations, and security considerations.
        \item The conference expects that many papers will be foundational research and not tied to particular applications, let alone deployments. However, if there is a direct path to any negative applications, the authors should point it out. For example, it is legitimate to point out that an improvement in the quality of generative models could be used to generate deepfakes for disinformation. On the other hand, it is not needed to point out that a generic algorithm for optimizing neural networks could enable people to train models that generate Deepfakes faster.
        \item The authors should consider possible harms that could arise when the technology is being used as intended and functioning correctly, harms that could arise when the technology is being used as intended but gives incorrect results, and harms following from (intentional or unintentional) misuse of the technology.
        \item If there are negative societal impacts, the authors could also discuss possible mitigation strategies (e.g., gated release of models, providing defenses in addition to attacks, mechanisms for monitoring misuse, mechanisms to monitor how a system learns from feedback over time, improving the efficiency and accessibility of ML).
    \end{itemize}
    
\item {\bf Safeguards}
    \item[] Question: Does the paper describe safeguards that have been put in place for responsible release of data or models that have a high risk for misuse (e.g., pretrained language models, image generators, or scraped datasets)?
    \item[] Answer: \answerNA{}
    \item[] Justification: Our paper poses no such risks.
    \item[] Guidelines:
    \begin{itemize}
        \item The answer NA means that the paper poses no such risks.
        \item Released models that have a high risk for misuse or dual-use should be released with necessary safeguards to allow for controlled use of the model, for example by requiring that users adhere to usage guidelines or restrictions to access the model or implementing safety filters. 
        \item Datasets that have been scraped from the Internet could pose safety risks. The authors should describe how they avoided releasing unsafe images.
        \item We recognize that providing effective safeguards is challenging, and many papers do not require this, but we encourage authors to take this into account and make a best faith effort.
    \end{itemize}

\item {\bf Licenses for existing assets}
    \item[] Question: Are the creators or original owners of assets (e.g., code, data, models), used in the paper, properly credited and are the license and terms of use explicitly mentioned and properly respected?
    \item[] Answer: \answerYes{}
    \item[] Justification:  We use open-source dataset and models in our paper,   and have cited the original paper of these dataset and models. 
    \item[] Guidelines:
    \begin{itemize}
        \item The answer NA means that the paper does not use existing assets.
        \item The authors should cite the original paper that produced the code package or dataset.
        \item The authors should state which version of the asset is used and, if possible, include a URL.
        \item The name of the license (e.g., CC-BY 4.0) should be included for each asset.
        \item For scraped data from a particular source (e.g., website), the copyright and terms of service of that source should be provided.
        \item If assets are released, the license, copyright information, and terms of use in the package should be provided. For popular datasets, \url{paperswithcode.com/datasets} has curated licenses for some datasets. Their licensing guide can help determine the license of a dataset.
        \item For existing datasets that are re-packaged, both the original license and the license of the derived asset (if it has changed) should be provided.
        \item If this information is not available online, the authors are encouraged to reach out to the asset's creators.
    \end{itemize}

\item {\bf New assets}
    \item[] Question: Are new assets introduced in the paper well documented and is the documentation provided alongside the assets?
    \item[] Answer: \answerYes{}
    \item[] Justification: Our new assets are well documented in the supplemental material.
    \item[] Guidelines:
    \begin{itemize}
        \item The answer NA means that the paper does not release new assets.
        \item Researchers should communicate the details of the dataset/code/model as part of their submissions via structured templates. This includes details about training, license, limitations, etc. 
        \item The paper should discuss whether and how consent was obtained from people whose asset is used.
        \item At submission time, remember to anonymize your assets (if applicable). You can either create an anonymized URL or include an anonymized zip file.
    \end{itemize}

\item {\bf Crowdsourcing and research with human subjects}
    \item[] Question: For crowdsourcing experiments and research with human subjects, does the paper include the full text of instructions given to participants and screenshots, if applicable, as well as details about compensation (if any)? 
    \item[] Answer: \answerNA{}
    \item[] Justification: Our paper does not involve crowdsourcing nor research with human subjects.
    \item[] Guidelines:
    \begin{itemize}
        \item The answer NA means that the paper does not involve crowdsourcing nor research with human subjects.
        \item Including this information in the supplemental material is fine, but if the main contribution of the paper involves human subjects, then as much detail as possible should be included in the main paper. 
        \item According to the NeurIPS Code of Ethics, workers involved in data collection, curation, or other labor should be paid at least the minimum wage in the country of the data collector. 
    \end{itemize}

\item {\bf Institutional review board (IRB) approvals or equivalent for research with human subjects}
    \item[] Question: Does the paper describe potential risks incurred by study participants, whether such risks were disclosed to the subjects, and whether Institutional Review Board (IRB) approvals (or an equivalent approval/review based on the requirements of your country or institution) were obtained?
    \item[] Answer: \answerNA{}
    \item[] Justification: Our paper does not involve crowdsourcing nor research with human subjects.
    \item[] Guidelines:
    \begin{itemize}
        \item The answer NA means that the paper does not involve crowdsourcing nor research with human subjects.
        \item Depending on the country in which research is conducted, IRB approval (or equivalent) may be required for any human subjects research. If you obtained IRB approval, you should clearly state this in the paper. 
        \item We recognize that the procedures for this may vary significantly between institutions and locations, and we expect authors to adhere to the NeurIPS Code of Ethics and the guidelines for their institution. 
        \item For initial submissions, do not include any information that would break anonymity (if applicable), such as the institution conducting the review.
    \end{itemize}

\item {\bf Declaration of LLM usage}
    \item[] Question: Does the paper describe the usage of LLMs if it is an important, original, or non-standard component of the core methods in this research? Note that if the LLM is used only for writing, editing, or formatting purposes and does not impact the core methodology, scientific rigorousness, or originality of the research, declaration is not required.
    \item[] Answer: \answerNA{}
    \item[] Justification: The core method development in this research does not involve LLMs as any important, original, or non-standard components.
    \item[] Guidelines:
    \begin{itemize}
        \item The answer NA means that the core method development in this research does not involve LLMs as any important, original, or non-standard components.
        \item Please refer to our LLM policy (\url{https://neurips.cc/Conferences/2025/LLM}) for what should or should not be described.
    \end{itemize}

\end{enumerate}

\end{document}